\newtheorem{theorem}{Theorem}[section]
\newtheorem{proposition}[theorem]{Proposition}
\newtheorem{corollary}[theorem]{Corollary}
\newtheorem{definition}[theorem]{Definition}
\newtheorem{remark}[theorem]{Remark}
\title{Channel Simulation and Distributed Compression \\
with Ensemble Rejection Sampling}
\author{%
 \quad Buu Phan\textsuperscript{\textnormal{1}} 
 \quad Ashish Khisti
\textsuperscript{\textnormal{1}}\\
    Department of Electrical and Computer Engineering, University of Toronto \quad 
    \\
    \texttt{truong.phan@mail.utoronto.ca}, \texttt{akhisti@ece.utoronto.ca}\\
}
\begin{document}

\maketitle

\begin{abstract} We study {\emph{channel simulation}} and {\emph{distributed matching}}, two fundamental problems with several applications to machine learning, using a recently introduced generalization of the standard rejection sampling (RS) algorithm known as Ensemble Rejection Sampling (ERS). For channel simulation, we propose a new coding scheme based on ERS that achieves a near-optimal coding rate. In this process, we demonstrate that standard RS can also achieve a near-optimal coding rate and generalize the result of Braverman and Garg (2014) to the continuous alphabet setting. Next, as our main contribution, we present a distributed matching lemma for ERS, which serves as the rejection sampling counterpart to the Poisson Matching Lemma (PML) introduced by Li and Anantharam (2021). Our result also generalizes a recent work on importance matching lemma (Phan et al, 2024) and, to our knowledge, is the first result on distributed matching in the family of rejection sampling schemes where the matching probability is close to PML. We demonstrate the practical significance of our approach over prior works by applying it to distributed compression. The effectiveness of our proposed scheme is validated through experiments involving synthetic Gaussian sources and distributed image compression using the MNIST dataset.
\end{abstract}

\section{Introduction}\label{intro}

One-shot channel simulation is a task of efficiently compressing a finite collection of noisy samples. Specifically, this can be described as a two-party communication problem where the encoder obtains a sample $X \sim P_X$ and wants to transmit its noisy version $Y \sim P_{Y|X}$ to the decoder, with the communication efficiency measured by the coding cost $R$ (bits/sample), see Figure \ref{fig:visualizations} (left). Since the conditional distribution $P_{Y|X}$ can be designed to target different objectives, channel simulation is a generalized version of lossy compression. As a result, it has been widely adopted in various machine learning tasks such as data/model compression \cite{agustsson2020universally, blau2019rethinking, yang2025progressive, havasi2019minimal}, differential privacy \cite{shah2022optimal, triastcyn2021dp},  and federated learning \cite{isik2024adaptive}. While much of the prior work has focused on the point-to-point setting described above, recent research has extended channel simulation techniques to more general distributed compression scenarios \cite{li2021unified,phan2024importance}. These scenarios often follow a canonical setup, shown in Figure \ref{fig:visualizations} (middle, right), in which the encoder (party \(A\)) and the decoder (party \(B\)) each aim to generate samples \(Y_A\) and \(Y_B\), respectively, according to their own target distributions \(P_Y^A\) and \(P_Y^B\), using a shared source of randomness \(W\). Although their sampling goals may differ, the selection processes are coupled through \(W\), resulting in a non-negligible probability that both parties select the same output. We refer to this quantity as the \emph{distributed matching probability}, which can be leveraged to reduce communication overhead in distributed coding schemes. For example, in the Wyner-Ziv setup \cite{wyner1976rate}, where the decoder has access to side information unavailable to the encoder, this framework enables the design of efficient one-shot compression protocols \cite{phan2024importance}.

Currently, Poisson Monte Carlo (PMC) \cite{maddison2016poisson} and importance sampling (IS) are the two main Monte Carlo methods being applied across both scenarios \cite{li2024channel}. Particularly, the Poisson Functional Representation Lemma (PFRL) \cite{li2018strong} provides a near-optimal coding cost  for channel simulation. The Poisson Matching Lemma (PML) \cite{li2021unified} was later developed for distributed matching scenarios, enabling the analysis of achievable error rates in various compression settings. However, PMC requires an infinite number of proposals, which can cause certain issues involving termination of samples in a practical scenario when the density functions, typically $P^B_Y$, are estimated via machine learning. IS-based approaches, including the importance matching lemma (IML) for distributed compression \cite{phan2024importance}, bypass this issue by limiting the number of proposals in $W$ to be finite. Yet, the output distribution from IS is biased \cite{havasi2019minimal,theis2022algorithms}, and thus not favorable in certain applications. It is hence interesting to see whether a new Monte Carlo scheme and coding method can be developed to handle both scenarios without compromising sample quality or termination guarantees. 



This work studies rejection sampling schemes and its applicability to these two scenarios. We begin by revisiting and improving the coding efficiency of standard rejection sampling (RS) in channel simulation \cite{steiner2000towards, theis2022algorithms}. In particular, we introduce a new coding scheme based on sorting that attains a near-optimal coding cost, extending the prior achievability result by \citet{braverman2014public} for discrete distributions to broader settings, while employing a distinct mechanism. However, our analysis also suggests that the distributed matching probabilities of both RS and its adaptive variant, namely greedy rejection sampling (GRS)\cite{flamich2023adaptive}, are lower than that of the PML, making them less suitable for distributed compression. Interestingly, we find that by combining RS with IS—a technique known as \emph{Ensemble Rejection Sampling} (ERS)~\cite{deligiannidis2020ensemble}—one can improve the distributed matching probability without degrading the sample quality. We demonstrate, with provable guarantees, that ERS retains efficient coding performance in channel simulation and can be naturally extended to distributed compression settings where the target distribution \(P_Y^B\) must be learned using machine learning,  typically encountered in high-dimensional data scenarios.

In summary, our contribution is as follows:
\begin{enumerate}
  \item We propose a new compression method for RS that achieves a coding cost near the theoretical optimum. However we also argue that RS and its variant GRS do not achieve competitive performance in distributed matching.

  \item We analyze ERS and show that it achieves competitive performance in distributed matching compared to PML and IML, while maintaining a coding cost close to the theoretical optimum in channel simulation.

  \item We propose a practical distributed compression scheme based on ERS, supported by theoretical guarantees. We demonstrate the benefits of our approach through experiments on synthetic Gaussian sources and the MNIST image dataset.
\end{enumerate}

Finally, we note that the term \textit{distributed matching}  in this paper encompasses settings that differ across communities. Without communication, it aligns with the problem of \textit{correlated sampling} \cite{bavarian2016optimality} in theoretical computer science, while classical \textit{coupling} does not capture scenarios with limited communication. To stay consistent with the setups in \citet{li2021unified}, we use \textit{distributed matching} to refer to these scenarios, to be discussed in Section \ref{distributed_matching}.

\section{Related Work}  \label{related_work}

\emph{Channel Simulation.} Our work introduces a novel channel simulation algorithm based on standard RS and ERS \cite{deligiannidis2020ensemble}. Our results enhance the coding efficiency compared to prior works \cite{flamich2025redundancy,theis2022algorithms,steiner2000towards} for standard RS and extend the best-known results for RS \cite{braverman2014public} to continuous settings. A related and more widely studied scheme in channel simulation is greedy rejection sampling (GRS), which can achieve a near-optimal coding cost. However, GRS is also more computationally intensive when applied to continuous distributions \cite{flamich2023adaptive, harsha2007communication,flamich2023faster} as it requires iteratively evaluating a complex and potentially intractable integral. Our work studies ERS, the generalized version of standard RS, and shows a new coding scheme to achieve a near-optimal bound for a continuous alphabet. The ERS-based algorithm can be considered as an extension of the IS-based method for exact sampling setting \cite{phan2024importance,theis2022algorithms} and serves as a complementary approach to existing exact algorithms, such as the PFRL \cite{li2018strong} and its faster variants \citep{flamich2023greedy,flamich2022fast,he2024accelerating}. Finally, there exist other channel simulation methods, though these are restricted to specific distribution classes  \cite{agustsson2020universally,kobus2024gaussian,sriramu2024fast}.

\emph{Distributed Compression.} In distributed compression, one requires a generalized form of channel simulation, i.e.  {distributed matching}, to reduce the coding cost, with current approaches include PML~\cite{li2021unified} and IML~\cite{phan2024importance}, as discussed earlier. Prior work has examined the matching probability of standard RS in various settings, primarily for discrete alphabets~\cite{daliri2024coupling,rao2020communication}. Our method builds on ERS, a new RS-based scheme, and shows that its performance in distributed matching is comparable to PML, enabling practical applications in distributed compression. Other information-theoretic~\cite{liu2015one,song2016likelihood,verdu2012non} and quantization-based approaches~\cite{liu2006slepian,zamir1998nested} for this problem are generally impractical for implementation. Meanwhile, recent work has explored neural networks-based solutions~\cite{mital2022neural,whang2021neural}, with some  provides empirical evidence that neural networks can learn to perform binning~\cite{ozyilkan2023learned}.

\section{Problem Setup}

\begin{figure*}[t]%
    \centering%
        \includegraphics[width=1.0\linewidth]{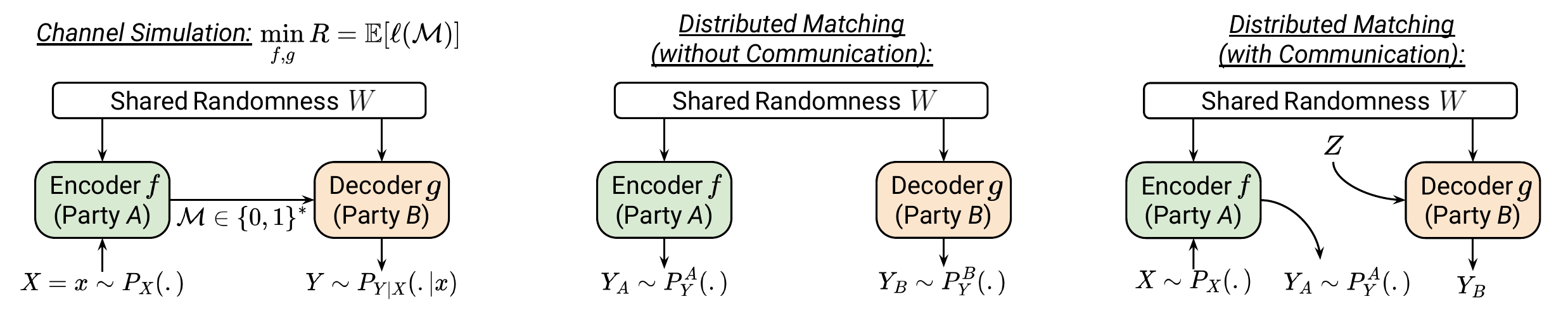}%
    \vspace{-3pt}
    \caption{\textit{Left}: Channel simulation setup.  \textit{Middle}: Distributed matching without communication. \textit{Right}:  Distributed matching with communication where the decoder's input \(Z \sim P_{Z|X,Y_A}\) represents side information and/or messages from the encoder.}
    
     \label{fig:visualizations}
\end{figure*}
\subsection{Channel Simulation}\label{setup:chan_sim}
Let \((X, Y) \in \mathcal{X} \times \mathcal{Y}\) be a pair of random variables with joint distribution \(P_{X,Y}\), with  \(P_X\) and \(P_Y\) are their respective marginal distributions. In this setup, see Figure \ref{fig:visualizations} (left), the encoder observes $X=x \sim P_X(.)$ and wants to communicate a sample $Y\sim P_{Y|X}(.|x)$ to the decoder, with the coding cost of $R$ (bits/sample). Given that both parties share the source of common randomness $W \in \mathcal{W}$ independent of $X$, we define $f$ and $g$ to be the encoder and decoder mapping as follow:
\begin{align}
    f: \mathcal{X} \times \mathcal{W} \xrightarrow[]{} \mathcal{M}; \quad \quad  g: \mathcal{M} \times \mathcal{W} \xrightarrow[]{} \mathcal{Y}, \notag
\end{align}
where the encoder message  $M \in \mathcal{M} = \{0,1\}^*$ is a binary string with length $\ell(M)$ and $R = \mathbb{E}[\ell(M)]$. Here, we require that the decoder's output follows $P_{Y|X}(.|x)$, i.e., $Y = g(f(x,W), W) \sim P_{Y|X}(.|x)$. Depending on the encoding and decoding function $f$ and $g$, the specification of what $\mathcal{W}$ includes varies. A general requirement for a channel simulation scheme to be efficient is that $R$ satisfies:\begin{equation}
 R \leq I(X;Y) + c_1\log(I(X;Y)+c_2) + c_3 , \label{com_cost_1}
\end{equation}
where \( I(X;Y) \) is the mutual information between \( X \) and \( Y \) and the theoretical optimal solution attainable in the asymptotic (i.e., infinite blocklength) setting \cite{cuff2013distributed}. Different techniques may produce slightly different coding costs, characterized by the positive constants \( c_1, c_2, \) and \( c_3 \)\cite{harsha2007communication,li2024pointwise}, but any approach that fails to achieve the leading term \( I(X;Y) \) is generally considered inefficient. 

\subsection{Distributed Matching} \label{distributed_matching}
We describe the two setups, with and without communication. Both setups consider two parties: \(A\) (the encoder) and \(B\) (the decoder) sharing a source of common randomness \(W \in \mathcal{W}\).
\subsubsection{Distributed Matching Without Communication}\label{coupl_nocomm}
In this setup, visualized in Figure~\ref{fig:visualizations} (middle), each party $A$ and $B$ aim to generate samples $Y_A$ and $Y_B$ from their respective distributions \(P_Y^A\) and \(P_Y^B\), which are locally available to each party, by selecting values from $W$.  Each party constructs their respective mapping $f$ and $g$ as follows: 
\[
    f: \mathcal{W} \to \mathcal{Y}, \quad g: \mathcal{W} \to \mathcal{Y},
\]
with the requirement that \(Y_A = f(W) \sim P_Y^A\) and \(Y_B = g(W) \sim P_Y^B\). Following prior work on PML \cite{li2021unified}, we are interested in the lower bound of  the conditional probability that both parties select the same value, given that \(Y_A = y\), with the following form: 
\begin{align}
     \Pr(Y_A = Y_B \mid Y_A = y) \geq \Gamma(P^A_Y(y), P^B_Y(y)),
\end{align}
where in the case of PML, we have $\Gamma(P^A_Y(y), P^B_Y(y)) = (1 + P^A_Y(y)/P^B_Y(y))^{-1}$. For IML, $\Gamma(P^A_Y(y), P^B_Y(y)) {=} (1 {+} (1{+}\epsilon) P^A_Y(y)/P^B_Y(y))^{-1}$ where $\epsilon \xrightarrow[]{} 0$ as the number of proposals increases.

\subsubsection{Distributed Matching With Communication}\label{coupling_comm}
In practice, communication from the encoder to the decoder is allowed to improve the matching probability. Also, the target distributions at each end may depend on their respective local inputs. Specifically, let \((X, Y, Z) \in \mathcal{X} \times \mathcal{Y} \times \mathcal{Z}\) be a triplet of random variables with joint distribution \(P_{X,Y,Z}\). We first define the following mappings, also see Figure \ref{fig:visualizations} (right):
\[
    f: \mathcal{X}\times \mathcal{W} \to \mathcal{Y} , \quad g: \mathcal{W} \times Z \to \mathcal{Y},
\]\vspace{-3pt}
where the protocol is as follows:
\begin{enumerate}
    \item {Encoder} (party $A$): given $X=x\sim P_X$ independent of $W$, the encoder sets its target function $P^A_Y=P_{Y|X}(.|x)$ and selects a sample $Y_A = f(x,W) \sim P^A_Y$.
    \item  Given $X=x,Y_A=y$, we generate $Z=z\sim P_{Z|X,Y}(.|x,y)$, which can be thought as some noisy version of $(X,Y_A)$. Note that the Markov chain $Z-(X,Y_A)-W$ holds. 
\item {Decoder} (party $B$): having access to \(Z {=} z\), sets its target distribution to \(P_Y^B(\cdot) = \Tilde{P}_{Y|Z}(\cdot\,|\,z)\), where \(\Tilde{P}_{Y|Z}\) can be arbitrary. It then queries a sample \(Y_B = g(W,z)\) from the source \(W\). 
\end{enumerate}\vspace{-5pt}
The constraint \(Y_B \sim P_Y^B\) is not necessarily satisfied, but this is not required in this setting \cite{li2021unified}, where the goal is to ensure the decoder selects the same value as the encoder with high probability. As in the case without communication, we are interested in establishing the bound with the following form:
\begin{align}
     \Pr(Y_A = Y_B \mid Y_A = y, Z=z, X=x) \geq \Gamma(P^A_Y(y), P^B_Y(y)),
\end{align}
where for PML and IML, $\Gamma(P^A_Y(y), P^B_Y(y))$ also follows the form discussed in Section \ref{coupl_nocomm}. 

\begin{remark}\label{remark_dm_com} Since  $Z{-}(X,Y_A){-}W$ forms a  Markov chain and $Z$ is input to the decoder, the communication in this setting happens by designing \(P_{Z|X,Y}(\cdot\,|\,x, y)\) to include the encoder message. Finally, this setup generalizes the no-communication one by setting $(X,Z)$ to fixed constants. 
\end{remark}

\vspace{-4pt}
\subsection{Bounding Condition}\label{bounding_cond}\vspace{-3pt}

In this work, we often consider the ratio $P_Y(y)/Q_Y(y)$ to be bounded for all $y$, where $P_Y, Q_Y$ are the target and proposal distribution, respectively. We formalize this in Definition \ref{bound_cond_def}.
\vspace{4pt}
\begin{definition} \label{bound_cond_def}
    \normalfont A pair of distributions \((P_Y, Q_Y)\) is said to satisfy a \textit{bounding condition with constant} \(\omega \geq 1\) if \(\max_y {P_Y(y)}/{Q_Y(y)} \leq \omega\). Furthermore, let $(X,Y)\sim P_{X,Y}$, a triplet \((P_X, P_{Y|X}, Q_Y)\) satisfies an \textit{extended bounding condition with constant} \(\omega \geq 1\) if \(\max_{x, y} {P_{Y|X}(y|x)}/{Q_Y(y)} \leq \omega\). 
\end{definition}
We note that the extended condition is practically satisfied when \((P_{Y|X = x}, Q_Y)\) satisfies the bounding condition for every \(x\), and \(P_X\) has bounded support. 

\section{Rejection Sampling}
We review the existing coding scheme of standard RS and introduce a new technique that achieves a bound comparable to (\ref{com_cost_1}). We then discuss results on matching probability bounds for RS and GRS.

\textbf{Sample Selection.} We define  the common randomness 
$
W = \{(U_1, Y_1), (U_2, Y_2), \dots\},
$
where each \( U_i \overset{\text{i.i.d.}}{\sim} \mathcal{U}(0,1) \) and each \( Y_i \overset{\text{i.i.d.}}{\sim} P_Y \), and require that the triplet $(P_X, P_{Y|X}, P_Y)$ satisfies  the extended bounding condition in Definition \ref{bound_cond_def} with $\omega$. Given $X{=}x$, the encoder picks the first index $K$ where
$ U_{K} {\leq} \frac{P_{Y|X}(Y_{K}|x)}{\omega P_Y(Y_{K})} $, obtaining $Y_{K}{\sim} P_{Y|X}(.|x)$.  


\textbf{Runtime-Based Coding. } 
This approach encodes the sample following the entropy of $H(K)$. Since  each individual sample $Y_i$ has the acceptance probability $\Pr(\text{Accept}) = {\omega}^{-1},$ we can compress $K$ with a coding cost of 
$R \leq H[K] + 1\leq \log(\omega) + 2, $
which is inefficient compared to $I(X;Y)$.  For this reason, GRS is often preferred, but with practical limitations as discussed in Section \ref{related_work}.  

%

\textbf{Our Approach.} Unlike the previous method, where the coding of \(K\) is independent of \(W\), we aim to design a scheme that leverages the availability of \(W\) at both parties, thereby reducing the coding cost \(R\) through the conditional entropy \(H[K \mid W]\). Our \textit{Sorting Method} operates on this idea, where instead of sending $K$, we send the rank of $U_K$ within a subset in $W$. Assume that the encoder and decoder agree on the value of \(\omega\) prior to communication, we first collect every \(\lfloor\omega\rfloor\) proposals into one group, ( $\lfloor.\rfloor, \lceil.\rceil$ are floor and ceil functions respectively). We encode two messages: one for the group index $L$ and one for the rank $\hat{K}$ of the selected \(U_K\) within that group, in particular:

\begin{figure*}[t]%
    \centering%
    \hspace{-10mm}
    \begin{subfigure}[t]{0.45\linewidth}%
        \includegraphics[width=\linewidth]{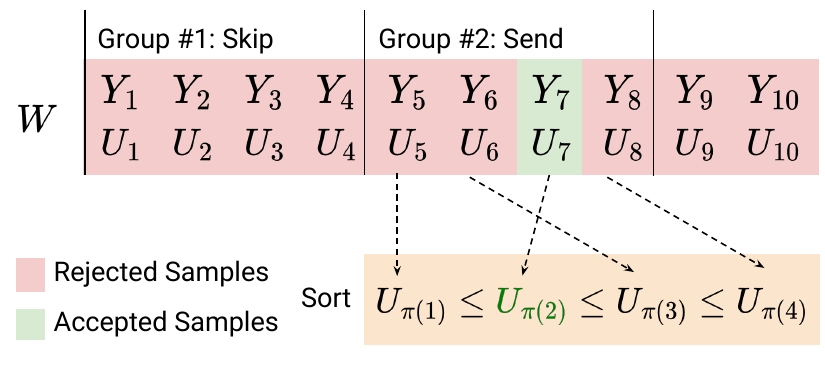}%
    \end{subfigure}%
    \hspace{3mm}
    \begin{subfigure}[t]{0.26\linewidth}%
        \includegraphics[width=\linewidth]{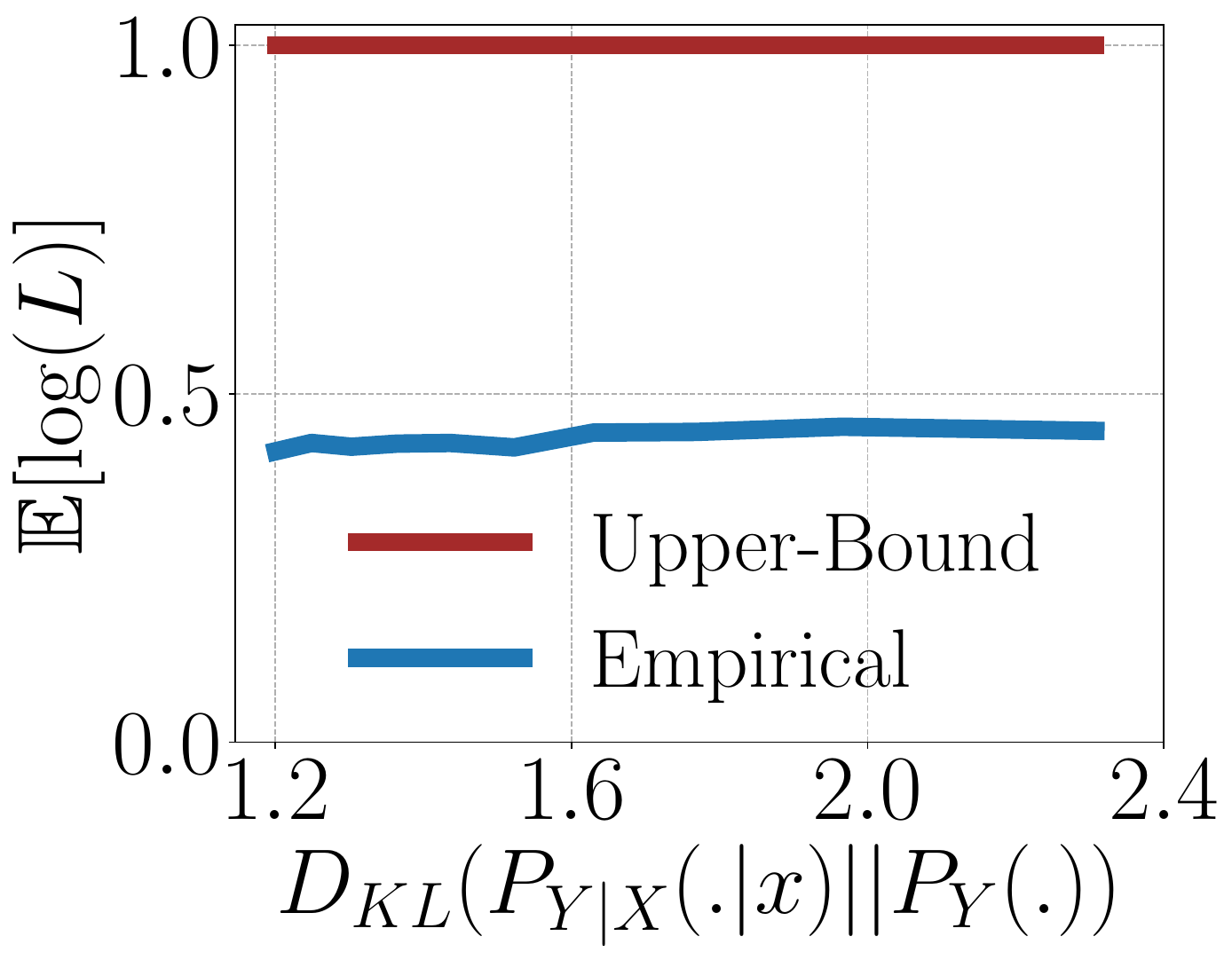}%
    \end{subfigure}%
    \hspace{-2mm}
    \begin{subfigure}[t]{0.26\linewidth}%
        \includegraphics[width=\linewidth]{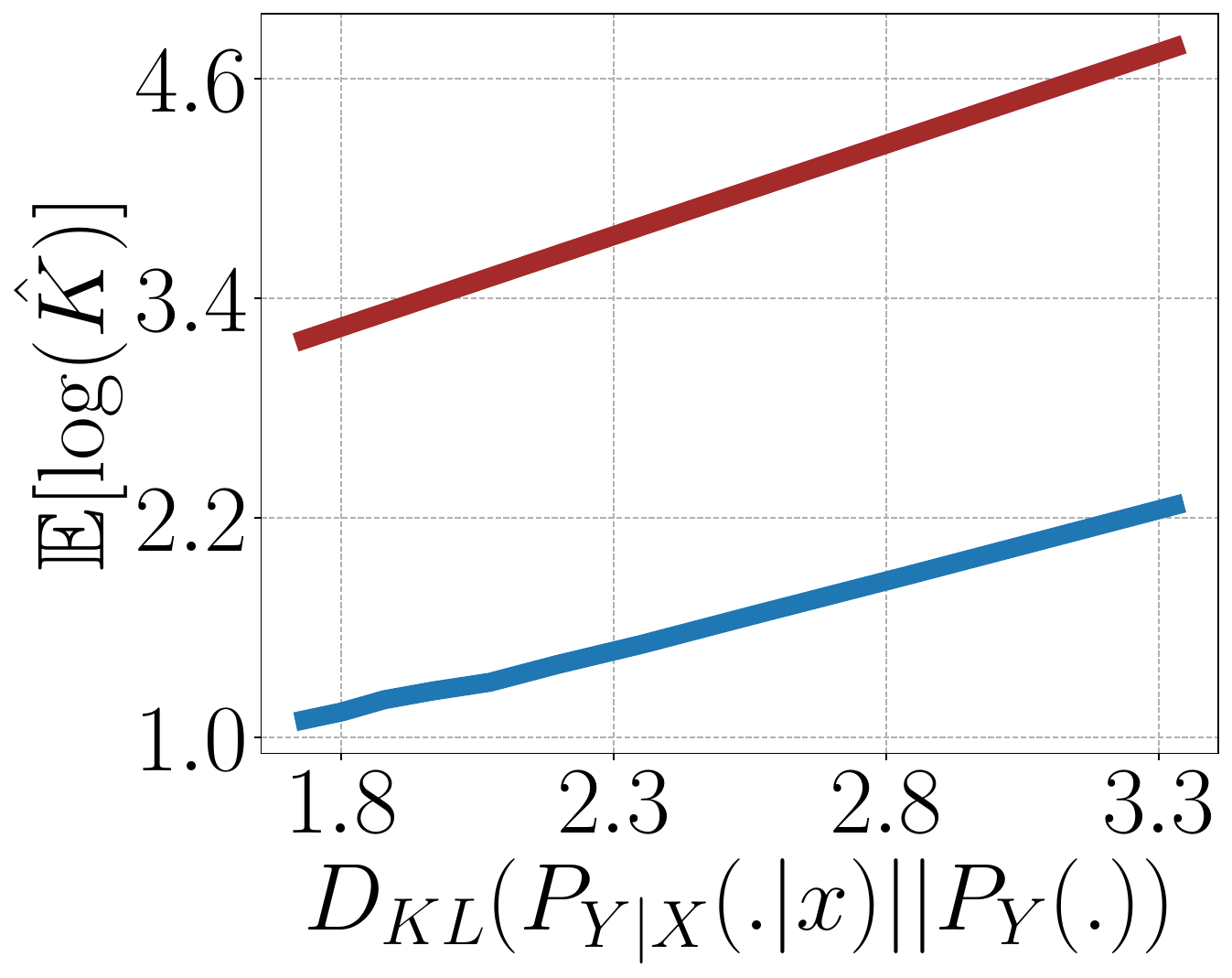}%
    \end{subfigure}%
    \vspace{-5pt}
    \caption{\textit{Left}: Visualization of our Sorting Method for Standard RS. \textit{Right}: Empirical results comparing $\mathbb{E}[\log(L)]$ and $\mathbb{E}[\log(\hat{K})]$ with their associated theoretical upper-bound across different target distribution. We use $P_Y(.)=\mathcal{N}(0,1.0)$ and $P_{Y|X}(.|x) = \mathcal{N}(1.0, \sigma^2)$ where $\sigma^2 \in [0.01, 0.1]$.}
     \label{fig:sort_method}
\end{figure*}

\begin{enumerate}
    \item \textit{Encoding $L$}: The encoder sends the ceiling $L  = \lceil \frac{K}{\lfloor\omega\rfloor}\rceil $, i.e. $L=2$ in Figure \ref{fig:sort_method} (left). The decoder then knows  $(L-1)\lfloor\omega\rfloor + 1 \leq K \leq L\lfloor\omega\rfloor$, i.e. $K$ is in group $L$. 
    \item \textit{Encoding $\hat{K}$}:  The encoder and decoder sort the list of $U_i$ for $(L-1)\lfloor\omega\rfloor + 1 \leq i \leq L\lfloor\omega\rfloor$: $$U_{\pi(1)} \leq U_{\pi(2)} \leq ...\leq U_{\pi(\lfloor \omega\rfloor)}$$ where $\pi(.)$ maps the sorted indices with the original ones. The encoder sends the rank of $U_{K}$ within this list, i.e. sends the value $\hat{K}$ such that $K=\pi(\hat{K})$, which the decoder uses to retrieve $Y_{K}$ accordingly. This corresponds to $\hat{K}=2$ in Figure \ref{fig:sort_method} (left). 
\end{enumerate}
\textbf{Coding Cost.} In terms of the coding cost at each step, i.e., $\mathbb{E}[\log L]$ and $\mathbb{E}[\log \hat{K}]$, we have:
    \begin{equation}
        \mathbb{E}[\log L] \leq 1 \text{ bit }, \quad \mathbb{E}[\log \hat{K}] \leq D_{KL}(P_{Y|X}(.|x)||P_Y(.)) + \log (e) \text{ bits},
    \end{equation}
where Figure \ref{fig:sort_method} (right) shows the empirical results verifying the bounds. The proof for these bounds are shown in Appendix \ref{sort_proof}. We then perform entropy coding for each message separately using Zipf's distribution and prefix-free coding. Proposition \ref{prop:coding_RS} shows their overall coding cost:

\begin{proposition} \label{prop:coding_RS} Given $(X,Y)\sim P_{X,Y}$ and $K$  defined as above. Then we have:
\begin{equation}
    R \leq I(X;Y) + \log(I(X;Y) + 1) + 9, \label{eq:rejection_coding_cost}
\end{equation}    
Proof: See Appendix  \ref{rej_coding_final}.
\end{proposition}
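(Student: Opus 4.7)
The plan is to treat this as a straightforward assembly result: combine the two per-message bounds $\mathbb{E}[\log L]\leq 1$ and $\mathbb{E}[\log \hat K]\leq D_{KL}(P_{Y|X}(\cdot|x)\,\|\,P_Y)+\log e$ established just before the proposition (and proved in Appendix A of the paper), and convert them from expected logs to expected code lengths using a Zipf-type prefix-free code. Since the encoder sends $L$ and $\hat K$ separately, by linearity the total rate splits as $R=\mathbb{E}[\ell(L)]+\mathbb{E}[\ell(\hat K)]$, so it suffices to bound each term.

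First I would invoke the standard fact that for any positive integer-valued random variable $N$, using a universal prior such as $q(n)\propto 1/(n(1+\log n)^2)$ (or equivalently an Elias-type code), one obtains a prefix-free code with expected length at most $\mathbb{E}[\log N]+2\log(1+\mathbb{E}[\log N])+c$ for an absolute constant $c$. The key ingredient is Jensen's inequality applied to the concave function $\log(1+\cdot)$: the raw code length is $\log n+2\log(1+\log n)+O(1)$, and concavity lets me pull the outer $\log$ through the expectation. Applying this with $N=L$ and $\mathbb{E}[\log L]\leq 1$ makes $\mathbb{E}[\ell(L)]$ a small absolute constant; applying it with $N=\hat K$ gives a bound in terms of $\mathbb{E}[\log\hat K]$.

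Next, I would take expectation over $X\sim P_X$ in the bound $\mathbb{E}[\log\hat K\mid X=x]\leq D_{KL}(P_{Y|X}(\cdot|x)\,\|\,P_Y)+\log e$ to obtain $\mathbb{E}[\log\hat K]\leq I(X;Y)+\log e$ via the identity $\mathbb{E}_X[D_{KL}(P_{Y|X}(\cdot|X)\,\|\,P_Y)]=I(X;Y)$. Combining with the Zipf coding bound and monotonicity of $\log$ yields $\mathbb{E}[\ell(\hat K)]\leq I(X;Y)+2\log(I(X;Y)+\log e+1)+c'$, and summing with the $O(1)$ bound for $L$ gives the claimed form $I(X;Y)+\log(I(X;Y)+1)+9$ after absorbing the $2\log$ into a single $\log$ via the crude bound $2\log(t+c)\leq \log(t+1)+c''$ for $t\geq 0$, or by directly using a slightly less sharp Zipf prior such as $q(n)\propto 1/(n(n+1))$ that already gives $\log n+\log(1+\log n)+O(1)$.

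The main obstacle is bookkeeping of constants rather than any conceptual step: one must carefully choose the universal prior and pick constants so that the final additive term fits inside $9$, and must verify that Jensen's inequality is applied legitimately (which it is, since $\log(1+\cdot)$ is concave on $[0,\infty)$ and both $\log L$ and $\log\hat K$ have finite expectation under the bounding condition). Everything else is direct substitution of the two appendix lemmas and the mutual information identity.
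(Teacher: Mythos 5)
Your overall plan --- encode $L$ and $\hat K$ separately, plug in the two appendix bounds $\mathbb{E}[\log L]\le 1$ and $\mathbb{E}[\log \hat K\mid X=x]\le D_{KL}(P_{Y|X=x}\|P_Y)+\log e$, average over $X$ to get $I(X;Y)$, and convert expected log-ranks to code lengths --- is exactly the paper's plan. The gap is in the conversion step. You propose a \emph{fixed} universal integer code (Elias-type, prior $q(n)\propto 1/(n(1+\log n)^2)$), whose length is $\log n+2\log(1+\log n)+O(1)$; after Jensen this gives $\mathbb{E}[\ell(\hat K)]\le I(X;Y)+2\log(I(X;Y)+O(1))+O(1)$, i.e.\ a \emph{double} log term. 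Your two suggested repairs do not work: (i) the inequality $2\log(t+c)\le \log(t+1)+c''$ is false for large $t$, since $2\log(t+c)-\log(t+1)\to\infty$, so the $2\log$ cannot be absorbed into a single $\log$ plus an absolute constant; (ii) the prior $q(n)\propto 1/(n(n+1))$ has codeword lengths $\approx 2\log n$, not $\log n+\log(1+\log n)$, and in fact no prefix code can have lengths $\log n+\log(1+\log n)+O(1)$ for all $n$ because $\sum_n 1/(n(1+\log n))$ diverges (Kraft would be violated). So the proof as written establishes only a bound of the form $I(X;Y)+2\log(I(X;Y)+c)+c'$, not the stated $I(X;Y)+\log(I(X;Y)+1)+9$.

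The paper avoids this by not using a fixed universal code at all: it invokes the maximum-entropy/Zipf bound (from Li--El Gamal) $H[N]\le \mathbb{E}[\log N]+\log(\mathbb{E}[\log N]+1)+1$ for a positive-integer random variable $N$, valid because the code can be matched to the actual (known) statistics of $L$ and $\hat K$ --- both parties know $P_{X,Y}$, so the Zipf parameter can depend on $\mathbb{E}[\log\hat K]$. This gives $H[L]\le 3$, hence $R_1\le 4$, and $H[\hat K]\le I(X;Y)+\log e+\log(I(X;Y)+\log e+1)+1\le I(X;Y)+\log(I(X;Y)+1)+2\log e+1$, hence $R_2\le I(X;Y)+\log(I(X;Y)+1)+2\log e+2$; summing gives the constant $9$. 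If you replace your universal-code step by this entropy bound (plus the $+1$ per message for prefix-free coding), the rest of your argument goes through and coincides with the paper's proof.
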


Note that the approach of \citet{braverman2014public} for discrete distributions can be extended to the continuous case, included in Appendix~\ref{binning_proof} for completeness. Our sorting mechanism is fundamentally different and can be extended  to the more general ERS framework, where incorporating the method of \citet{braverman2014public} is nontrivial.\footnote{At the time of acceptance, we became aware of a concurrent work by \cite{flamich2024data} proposing a similar approach; however, their formulation requires communicating additional information, resulting in a slightly suboptimal coding cost compared to ours.}

\textbf{Distributed Matching.} In distributed matching setups in Section \ref{distributed_matching} where both parties use standard RS to select samples from their respective distributions, we show in Appendix \ref{compare_possion} that RS performance is not as strong compared to PML and IML. For GRS, we provide an analysis via a non-trivial example in Appendix \ref{GRS_match}, where we managed to construct target and proposal distributions such that $\Pr(Y_A {=} Y_B\mid Y_{A} {=} y) \to 0.0$, even when $P^A_Y(y) = P^B_Y(y)$. In contrast, this probability is greater than $1/2$ for PML, thus concluding that RS and GRS are less efficient compared to PML and IML.


\section{Ensemble Rejection Sampling}

We show that ERS\cite{deligiannidis2020ensemble}, an exact sampling scheme that combine RS with IS,  can improve the matching probability and maintain a coding cost close to the theoretical optimum in channel simulation. 

\subsection{Background}\label{sec:background_ers}

 \textbf{Setup and Definitions.} We begin by defining the common randomness $W$, which includes a set of  exponential random variables to employ the Gumbel-Max trick for IS \cite{phan2024importance,theis2022algorithms}, i.e.:
\begin{align}
    W &= \{ (B_1, U_1), (B_2, U_2), ...\}, \text{ where } U_i \sim  \mathcal{U}(0,1)\\
    {B}_i&=\{(Y_{i1}, S_{i1}), (Y_{i2}, S_{i2}),..., (Y_{iN}, S_{iN})\}, \text{ where } Y_{ij} \sim P_Y(.), S_{ij} \sim \mathrm{Exp}(1),
\end{align}
where we refer to each $B_i$ as a batch. A selected sample $Y_K$ from $W$ is defined by two indices: the \textit{batch index} \( K_1 \)  and the \textit{local index} in \( B_{K_1} \), denoted as \( K_2 \). Its \textit{global index} within $W$ is $K$, where $K= (N-1)K_1 + K_2 $ and we write $Y_K \triangleq Y_{K_1,K_2}$. 

\textbf{Sample Selection.} Consider the target distribution $P_{Y|X}(.|x)$, for each batch $B_i \in W$, the ERS algorithm selects a candidate index $K^{\mathrm{cand}}_i$ via Gumbel-max IS and decides to accept/reject $Y_{K^{\mathrm{cand}}_i}$ based on $U_i$. This process ensures that the accepted  \( Y_{K} {\sim} P_{Y|X}(\cdot | x) \) and is denoted for simplicity as:
\begin{equation}
    K = \mathrm{ERS}(W; P_{Y|X=x}, P_Y), \label{ers_selection}
\end{equation}
where the procedure is shown in Figure \ref{fig:ers_coding} (top, left) and Algorithm \ref{alg:ERS} in Appendix \ref{prelim}. This procedure assumes the bounding condition holds for $({P_{Y|X}(y|x)},{P_Y(y)})$ with $\omega$. The target and proposal distributions can be any, e.g., replacing $P_Y$ with $Q_Y$, as long as the bounding condition holds. 

\begin{remark}
    Since the accept/reject operation happens on the whole batch $B_i$, we define  the batch average acceptance probability as $\Delta$ (see Appendix \ref{prelim}) where $\Delta \xrightarrow[]{} 1.0$ as $N \xrightarrow[]{} \infty$ and $N^* = N\Delta^{-1}$ as the average number of proposals (or runtime) required for ERS.
\end{remark} 

\subsection{Channel Simulation with ERS}\label{ers_coding_cost}

For $N=1$, ERS becomes the standard RS and thus achieves the coding cost shown in Proposition \ref{prop:coding_RS}. When $N\xrightarrow{} \infty$, we have the batch acceptance probability $\Delta\xrightarrow[]{} 1.0$, meaning that we mostly accept the first batch and thus achieve the coding cost of Gumbel-max IS schemes \cite{phan2024importance,theis2022algorithms}, which follows (\ref{com_cost_1}). This section presents the result for general \(N\), which is more challenging to establish as discussed below. We assume the extended bounding condition in Definition \ref{bound_cond_def} holds for $(P_X, P_{Y|X}, P_Y)$.


\textbf{Encoding Scheme. } We view the selection of $K_1$  as a rejection sampling process on a whole batch (see Appendix \ref{prelim}) and apply the Sorting Method to encode $K_1$. Specifically, we collect every $\lfloor \Delta^{-1}\rfloor$ batches into one \textit{group of batches}, send the group index and the rank of  $U_{K_1}$ within this group. For the local index $K_2$, we use the Gumbel-Max Coding approach \cite{phan2024importance}. This process is visualized in Figure \ref{fig:ers_coding} (middle), detailed as follow:
\begin{itemize}
    \item Encoding $K_1$: we represent $K_1$ by two messages $L$ and $\hat{K}_1$.  Here, $L$ is the group of batches index $K_1$ belongs to and $\hat{K}_1$  is the rank of $U_{K_1}$ within this $L^{\mathrm{th}}$ group, i.e.,  we sort the list: $U_{\phi(1)}\leq U_{\phi(2)}\leq...\leq U_{\phi(\lfloor \Delta^{-1}\rfloor)}$   and send the rank $\hat{K}_1$ of $U_{K_1}$, i.e. $\phi(\hat{K}_1)=K_1$. 
    \item Encoding $K_2$: We first sort the exponential random variables within the selected batch $K_1$, i.e. $S_{\pi(1)} \leq S_{\pi(2)} \leq ...\leq S_{\pi(N)}$ and send the rank $\hat{K}_2$ of $S_{K_2}$, i.e. $\pi(\hat{K}_2){=}K_2$.
\end{itemize}
\textbf{Coding Cost.} We outline the main results for the coding costs, details in Appendix \ref{analysis_ers}. Specifically:
\begin{equation}
       \mathbb{E}[\log L] \leq 1 \text{ bit }, \hspace{2pt}\mathcal{K} = \mathbb{E}[\log \hat{K}_1] + \mathbb{E}[\log \hat{K}_2]\leq D_{KL}(P_{Y|X}(.|x)||P_Y(.)) {+}  2\log(e) {+} 3 \text{ bits}, \label{K cost}
    \end{equation}
where the second bound is one of the core technical contributions of this work. We empirically validate the bound on $\mathcal{K}$  in Figure \ref{fig:ers_coding}(right).   Proposition \ref{prop:coding_ERS} shows the overall coding cost for $K$:

\begin{figure*}[t]%
    \centering%
    \begin{subfigure}[t]{0.68\linewidth}
        \includegraphics[width=\linewidth]{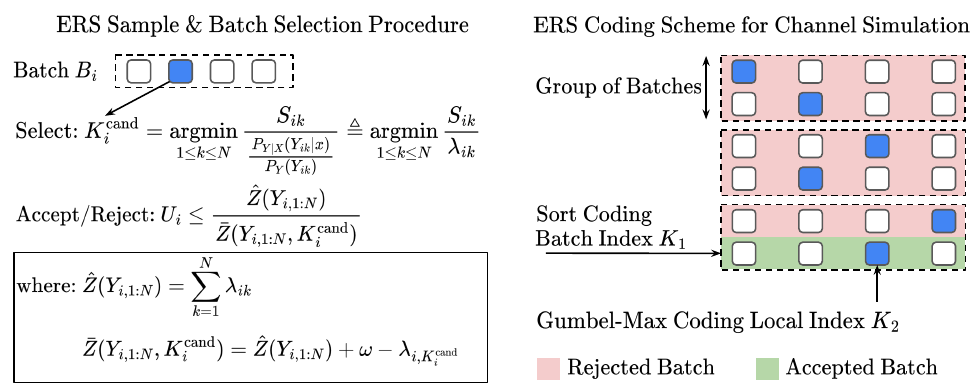}%
    \end{subfigure}%
    \hspace{1mm}
    \begin{subfigure}[t]{0.3\linewidth}
        \includegraphics[width=\linewidth]{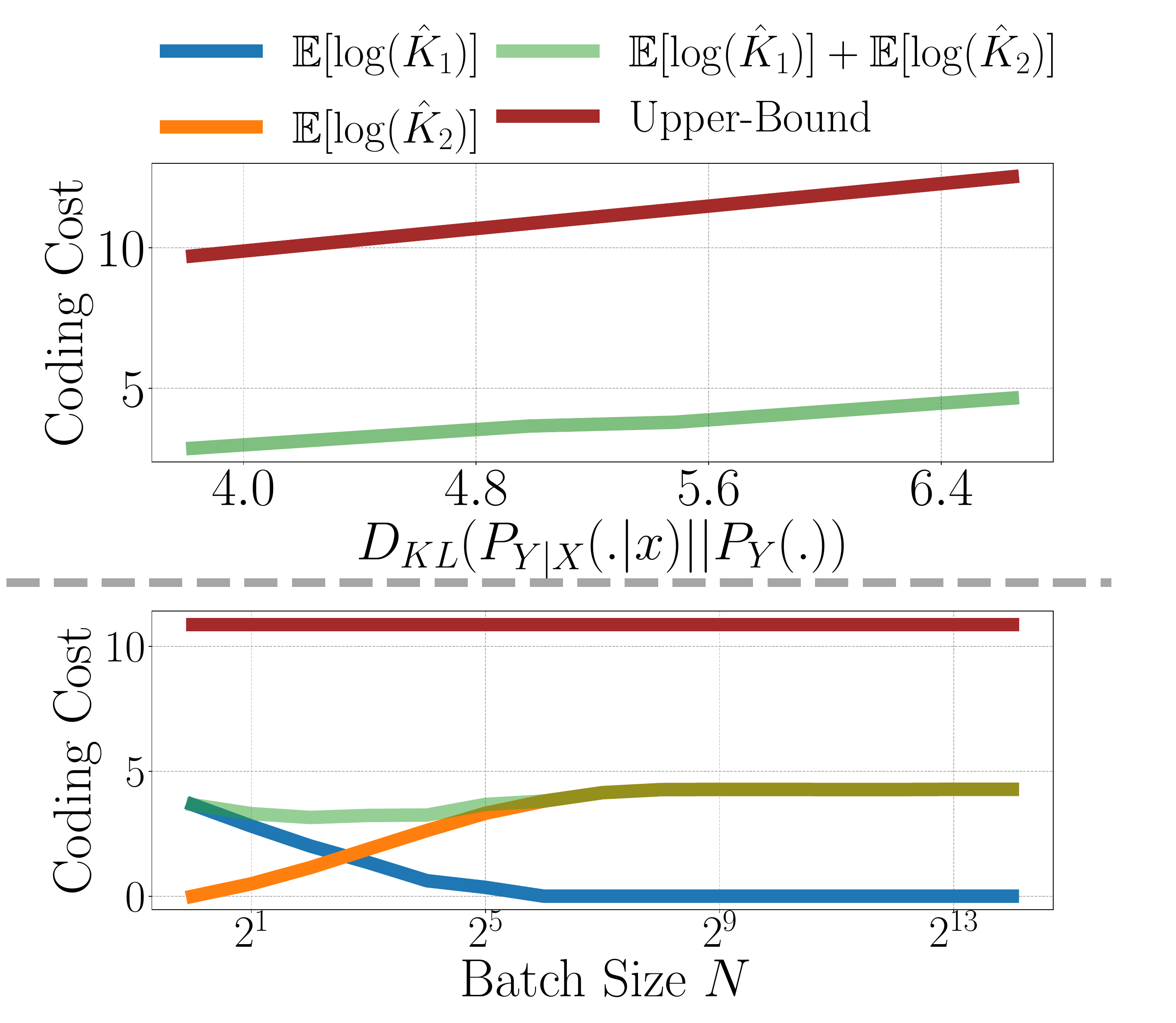}%
    \end{subfigure}%
    \vspace{-3pt}
    \caption{\textit{Left}: Illustration of ERS Selection Method. \textit{Middle:} Coding scheme for channel simulation. \textit{Right}: Empirical results on the coding cost of $\hat{K}_1, \hat{K}_2$ and their theoretical upper-bound  (in bits). Both figures use $P_Y(.){=}\mathcal{N}(0,1.0)$, where the first figure sets $N=32$ and varies $P_{Y|X}(.|x) {=}\mathcal{N}(1.0, \sigma^2)$ with \(\sigma^2 \in [0.1, 5 ]{\times} 10^{-3}\). The second one fixes $\sigma^2{=}10^{-3}$ while varying  $N$.}
     \label{fig:ers_coding}
\end{figure*}

\begin{proposition} \label{prop:coding_ERS} Given $(X,Y)\sim P_{X,Y}$ and $K$  defined as above. For any batch size $N$, we have:
\begin{equation}
    R \leq I(X;Y) + 2\log(I(X;Y) + 8) + 12, \label{eq:rejection_coding_cost_ERS}
\end{equation}    
Proof.  See Appendix  \ref{ers_coding_cost_proof}.
\end{proposition}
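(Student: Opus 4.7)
The plan is to combine the three per-message log-length bounds from (\ref{K cost}) into a single bound on the expected prefix-free code length, and then average over $X \sim P_X$ to convert the conditional KL divergence into $I(X;Y)$.

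First, I would invoke a standard prefix-free integer coding lemma: for any positive-integer-valued random variable $N$, there exists a prefix-free binary code whose expected length is at most $\mathbb{E}[\log N] + \log(\mathbb{E}[\log N] + 1) + c_0$ bits, for a universal constant $c_0$. Such a code is obtained by assigning codewords via a Zipf-type distribution (e.g.\ $p(n) \propto 1/(n(\log n + 1)(\log\log n + 2)^2)$) and applying Jensen's inequality (concavity of $\log$) to convert the $\mathbb{E}[\log\log N]$ overhead into $\log(\mathbb{E}[\log N]+1)$. Applied to $L$, for which (\ref{K cost}) gives $\mathbb{E}[\log L \mid X=x] \le 1$, this contributes only $O(1)$ bits on average, uniformly in $x$.

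Second, I would apply the same lemma separately to $\hat{K}_1$ and $\hat{K}_2$ using the same universal prefix-free code. Conditionally on $X=x$, the sum of their expected code lengths is at most
\[
\mathbb{E}[\log\hat{K}_1\mid x] + \mathbb{E}[\log\hat{K}_2\mid x] + \log(\mathbb{E}[\log\hat{K}_1\mid x]+1) + \log(\mathbb{E}[\log\hat{K}_2\mid x]+1) + 2c_0.
\]
I would combine the two logarithmic terms via $\log((a+1)(b+1)) \le 2\log((a+b)/2 + 1)$ (from AM-GM), and substitute $\mathcal{K}(x) := \mathbb{E}[\log\hat{K}_1\mid x] + \mathbb{E}[\log\hat{K}_2\mid x] \le D_{\mathrm{KL}}(P_{Y\mid X=x}\,\|\,P_Y) + 2\log e + 3$ from (\ref{K cost}), yielding a per-$x$ contribution of at most $\mathcal{K}(x) + 2\log(\mathcal{K}(x)/2+1) + O(1)$.

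Third, I would average over $X\sim P_X$. The linear part gives $\mathbb{E}_X[\mathcal{K}(X)] \le I(X;Y) + 2\log e + 3$. For the logarithmic term, a second application of Jensen's inequality (concavity of $\log$) pulls the expectation inside, producing a term of the form $2\log(I(X;Y) + C)$ for an explicit universal constant $C$ that can be absorbed into $8$. Adding the $O(1)$ cost from encoding $L$ and collecting all universal constants yields the stated bound $R \le I(X;Y) + 2\log(I(X;Y) + 8) + 12$. The main obstacle is the careful bookkeeping of the numerical constants so that they close up to the declared values $8$ and $12$, which requires being precise about the prefix-free integer code and the exact additive slack in each Jensen step; conceptually, however, the heavy technical lifting has already been done in (\ref{K cost}) (proved in Appendix \ref{analysis_ers}), so what remains here is a routine packaging argument combining entropy coding, AM-GM, and Jensen's inequality.
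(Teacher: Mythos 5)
Your proposal is correct and follows essentially the same route as the paper's proof: separate Zipf-based prefix-free coding of $L$, $\hat{K}_1$, $\hat{K}_2$, AM-GM to merge the two logarithmic overhead terms, averaging over $X$ (with Jensen) to convert the conditional KL bound from (\ref{K cost}) into $I(X;Y)$, plus the per-message prefix overheads to reach the constants $8$ and $12$. The only (immaterial) difference is ordering — you bound the code length per $x$ and then apply Jensen over $X$, whereas the paper applies the Zipf entropy bound directly to the $X$-averaged $\mathbb{E}[\log\hat{K}_i]$ — and, as in the paper, the genuinely technical content is already contained in the bound (\ref{K cost}).
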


\begin{remark} The upper-bound in \eqref{eq:rejection_coding_cost_ERS} is expected to be conservative, as evidenced by the evaluation of actual rates in Figure \ref{fig:ers_coding} (right). 
We further demonstrate the improvements in our proposed method over the baselines in the distributed compression application, to be elaborated upon in the subsequent discussion. 
\end{remark}

%

\subsection{Distributed Matching Probabilities}\label{ERS_distributed_matchin_main}

 
 We consider the communication setup described in Section~\ref{coupling_comm}, which generalizes the no-communication one in Section~\ref{coupl_nocomm}, see Remark \ref{remark_dm_com}. We use subscripts to distinguish the indices selected by each party, e.g., \(K_A\) and \(K_B\) denote the global indices chosen by the encoder (party \(A\)) and decoder (party \(B\)), respectively.  Recall that the encoder observes \( X {=} x {\sim} P_X \) and sets \( P^A_Y {=} P_{Y|X}(\cdot \mid x) \), while the decoder observes \( Z {=} z \) and sets \( P^B_Y {=} \tilde{P}_{Y|Z}(\cdot \mid z) \), not necessarily follow ${P}_{Y|Z}(\cdot \mid z)$. The target distributions \( P^A_Y \), \( P^B_Y \), and the proposal distribution \( Q_Y \)  in \( W \) must satisfy the bounding conditions outlined in Section~\ref{bounding_cond} for the ratio pairs \( (P^A_Y, Q_Y) \) and \( (P^B_Y, Q_Y) \). Each party then uses ERS to select their indices:
\begin{align}
    K_A = \mathrm{ERS}(W; P^A_{Y}, Q_Y), \quad K_B = \mathrm{ERS}(W; {P}^B_{Y}, Q_Y), \label{selection_couple}
\end{align}
where the function $\mathrm{ERS}(.)$ is defined in \eqref{ers_selection} and we set $Y_A{=}Y_{K_A}$ and $Y_B{=}Y_{K_B}$ as the values reported by each party. Proposition~\ref{prop:match_ers_general} shows a bound on the matching probability in this setting. The  bound for the no-communication case naturally follows with appropriate modification, see Appendix \ref{no_batch_comm}.





\begin{proposition}\label{prop:match_ers_general} Let $K_A, K_B, P^A_{Y}$ and $P^B_{Y}$ defined as above. For $N\geq 2$, we have:
\begin{align}\label{prop:match_ers3}
    \Pr (Y_{A} = Y_{B}|Y_{A} = y, X=x, Z=z) \geq \left({1 {+}\mu'_1(N) {+} \frac{P^A_{Y}(y)}{{P}^B_{Y}(y)} \left(1 + \mu'_2(N)\right)}\right)^{-1},
\end{align}
    where $\mu'_1(N)$ and $\mu'_2(N)$  defined in Appendix \ref{match_lemma_cond_coeff}  are decay coefficients depending on the distributions where $\mu'_1(N), \mu'_2(N) \xrightarrow[]{} 0$ as $N\xrightarrow[]{} \infty$ with rate $N^{-1} $under mild assumptions on the distributions $P^A_{Y}(.), P^B_Y(.)$ and $Q_Y(.)$. 

    Proof: See Appendix \ref{proof_match_lemma_cond}.
\end{proposition}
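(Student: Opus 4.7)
The plan is to decompose the matching event $\{K_A = K_B\}$ according to the global index $K = (N-1)K_1 + K_2$ into a batch-matching event $\{K_{A,1} = K_{B,1}\}$ and a within-batch matching event $\{K_{A,2} = K_{B,2}\}$. First I would condition on party $A$'s selected global index $K_A = k$ with $Y_k = y$ in batch $k_1 = K_{A,1}$, on all batch contents $\{B_i\}$ and all exponential variables $\{S_{ij}\}$, keeping only the $U_i$'s random for the across-batch step. This decomposition is natural because within-batch selection is driven by the shared $S_{ij}$ via Gumbel-max IS, while across-batch acceptance is driven by the shared $U_i$ via RS.

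For the within-batch part, conditioned on party $A$ selecting batch $k_1$ and local index $k_2$ with $Y_{k_1,k_2} = y$, party $B$ picks its candidate in batch $k_1$ as $\arg\max_j \{\log w^B_{k_1,j} - S_{k_1,j}\}$ with $w^B_{ij} = P^B_Y(Y_{ij})/Q_Y(Y_{ij})$, using the same Gumbel noise as $A$. Applying the Gumbel-max coupling argument underlying the IML of Phan et al.\ after carefully handling the conditioning on $Y_{k_1,k_2} = y$, the remaining $N-1$ entries in batch $k_1$ are i.i.d.\ from $Q_Y$, and their empirical weight $\bar{w}^B_{k_1} = \tfrac{1}{N}\sum_j w^B_{k_1,j}$ concentrates around $\mathbb{E}_{Q_Y}[w^B] = 1$. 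This yields a lower bound of the form $\bigl(1 + (1+\mu_2'(N)) P^A_Y(y)/P^B_Y(y)\bigr)^{-1}$ on the within-batch match probability, with $\mu_2'(N) = O(1/N)$ under mild moment conditions on $w^B$.

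For the across-batch part, party $A$'s batch acceptance rule uses a threshold proportional to $\bar{w}^A_i/\omega_A$, and likewise for $B$ with $\bar{w}^B_i/\omega_B$. With $U_i$'s shared and both parties scanning the same batches $B_i$ until their first acceptance, a standard RS coupling applied to the sequence of pairs of thresholds yields a conditional batch-matching probability with a leading term of $1$ and an $N$-dependent correction reflecting the deviation of $\bar{w}^B_{k_1}$ from its mean, which I would identify with the additive $\mu_1'(N)$ in the denominator of the claimed bound. Composing these two contributions and collecting correction terms then produces the stated inequality. The main obstacle I expect is the coupling between the two analyses: $\bar{w}^B_{k_1}$ appears both in the within-batch IML bound (as the normalizing mass for $P^B_Y(y)$) and in the across-batch RS bound (as the acceptance threshold for $B$ on the matched batch), so the two sub-probabilities cannot simply be multiplied. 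A careful joint expectation that uses the independence of the other batch contents from $B_{k_1}$ and the concentration of batch averages at rate $1/N$ will be needed to arrive at the stated multiplicative form with clean, distribution-dependent decay coefficients $\mu_1', \mu_2'$.
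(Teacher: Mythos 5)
Your sketch has the right ingredients (within-batch matching via the Gumbel-max/IML coupling on the shared $S_{ij}$, across-batch matching via the shared $U_i$, and an expectation over batch contents producing $O(1/N)$ corrections under moment conditions), but two steps are not sound as stated, and they are exactly where the paper's proof does something different. First, your across-batch step analyzes the full scanning process — both parties examining batches until their first acceptance and a ``standard RS coupling of thresholds'' — and simply asserts that this yields a leading term $1$ plus a correction identified with $\mu_1'(N)$. That analysis is never carried out, and if you do it à la the standard-RS matching computation you must control the joint events that both parties reject every earlier batch, which produces a normalization of the form of a sum over batch indices (a total-variation-like denominator at the batch level), not directly the additive $\mu_1'(N)$ in the claimed bound. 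The paper avoids this entirely: it lower-bounds $\Pr(K_A=K_B\mid\cdot)$ by keeping only the global indices $k\in\{1,\dots,N\}$, i.e.\ the event that \emph{both parties accept the very first batch}, and uses symmetry over the local index to reduce to $N\Pr(K_A=K_B=1\mid\cdot)$; the across-batch factor then becomes a single shared-uniform comparison, $\min\bigl(1,\ \tfrac{\hat{Z}(\tilde{P}_{Y|Z=z},y_{1:N})}{\hat{Z}(\tilde{P}_{Y|Z=z},y_{2:N})+\omega_z}\cdot\tfrac{\hat{Z}(P_{Y|X=x},y_{2:N})+\omega_x}{\hat{Z}(P_{Y|X=x},y_{1:N})}\bigr)$, with no rejection history to track.

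Second, your claim that, conditioned on party $A$ selecting batch $k_1$, local index $k_2$ and $Y_{k_1,k_2}=y$, the remaining $N-1$ entries of that batch are i.i.d.\ from $Q_Y$ is false as stated: the ERS posterior over the selected batch is tilted by the factor $1/\bar{Z}_x(y_{1:N},k_2)$ (see the ERS target distribution in the appendix), so the entries are exchangeable but not i.i.d.\ $Q_Y$. The paper handles this by a Bayes inversion, $\Pr(K_A=K_B=1\mid Y_{K_A}=y_1)=\Pr(K_A=K_B=1\mid Y_1=y_1)\,Q_Y(y_1)/P^A_Y(y_1)$, after which $Y_{2:N}$ really are i.i.d.\ $Q_Y$, and all three conditional factors (A selects index 1 and accepts; B's Gumbel-max picks index 1, bounded by the conditional IML; B accepts via the shared $U$) are computed given $y_{1:N}$ before a single expectation is taken, combined with Jensen's inequality and the inverse-moment bounds requiring $d_2(Q_Y\Vert P^A_Y), d_2(Q_Y\Vert P^B_Y)<\infty$ to obtain $\mu_1'(N),\mu_2'(N)=O(1/N)$. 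Your closing remark correctly flags that the two sub-probabilities cannot simply be multiplied because both involve the batch weights, but the resolution (condition on the batch contents throughout, multiply the conditional factors which depend on disjoint sources of randomness, and only then take the expectation with Jensen) is the substantive content of the proof and is missing from the proposal.
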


\textbf{ERS with Batch Index Communication. }In practice, \( P^B_{Y}(y) \) is often learned via deep learning, making it difficult to obtain the upper bound for $P^B_Y(y)/Q_Y(y)$, thus preventing a well-defined select condition. A practical workaround is for the encoder to transmit the selected batch index \( K_{1,A} \) to the decoder, limiting the search space to a finite subset. This aligns with Section~\ref{coupling_comm} by incorporating \( K_{1,A} \) into the construction of \( Y_A \), \( Z \), and \( Y_B \). Its matching bound, see Appendix~\ref{ERS_analysis_batch}, is similar to Proposition~\ref{prop:match_ers_general}, but with different decaying coefficients. 

\begin{figure*}[t]%
    \centering%
    \hspace{15mm}
        \includegraphics[width=0.9\linewidth]{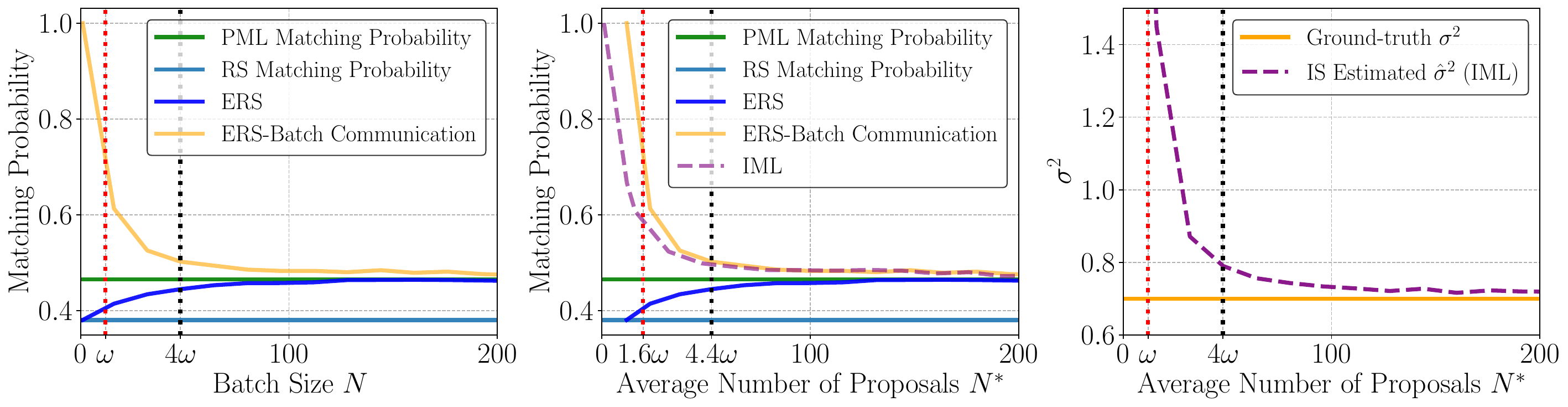}%
   \caption{(Best viewed in color)  We set $Q_Y{=}\mathcal{N}(0,100)$, $P^A_Y{=}\mathcal{N}(0.5, 0.7)$ and $P^B_Y{=}\mathcal{N}(-0.5,0.7)$. \textit{Left}: Matching probabilities versus the batch size \( N \). \textit{Middle}: Matching probabilities versus the average number of proposals where the \textbf{\textcolor{red}{red}} and \textbf{\textcolor{black}{black}} dotted lines correspond to the batch sizes \( \omega \) and \( 4\omega \) shown in the left figure. \textit{Right}: Sample quality of IS, measured by the estimated variance \( \hat{\sigma}^2 \).}
    \label{fig:match_compare}
\end{figure*}

\begin{remark}\label{coefficients_decay}
    Since the decay coefficients $\mu'_1(N),\mu'_2(N)\xrightarrow[]{}0.0$ with the rate $N^{-1}$, for any small $\epsilon$ one can choose $N > N_0(\epsilon)$  such that  $\mu'_1(N),\mu'_2(N) \leq \epsilon$. 
\end{remark}

\textbf{Empirical Results.}  Figure \ref{fig:match_compare} (left, middle) validates and compares ERS matching probability (with and without batch communication) with PML and IML, where we see both ERS approaches converge to PML performance. For the same average number of proposals $N^*$, Figure \ref{fig:match_compare} (middle) demonstrates that ERS (with batch index communication) achieves consistently higher matching probabilities than IS, while maintaining an unbiased sample distribution. For completeness, Figure \ref{fig:match_compare} (right) shows the bias of IS can remain high even when the number of proposals is sufficiently large, i.e. $4\omega$. We discuss the overhead of the batch index in Section \ref{wyner_ziv} on application to distributed compression.


\subsubsection{Lossy Compression with Side Information} \label{wyner_ziv} 
We apply our matching result with batch index communication to the Wyner-Ziv distributed compression setting~\cite{wyner1976rate}, where the encoder observes \( X {=} x {\sim} P_X \) and the decoder has access to correlated side information \( X' {\sim} P_{X'|X}(\cdot|x) \) unavailable to the encoder. Let \( P_{Y'|X}(\cdot|x) \) denote the target distribution that the encoder aims to simulate, which, together with \( X' \), induces the joint distribution \( P_{X,X',Y'} \). For any integer $\mathcal{V}{>}0$ and $U_i {\sim } \mathcal{U}(0,1)$, we set $Y_{ij}{=}(Y'_{ij}, V_{ij})$ in batch $B_i$ within $W$  where:
\begin{align}
Y'_{ij} \sim Q_{Y'}(\cdot) \text{ (i.e., the ideal output)}, \quad V_{ij} \sim \mathrm{Unif}[1{:}\mathcal{V}] \text{ (i.e., the hash value for index } j\text{)} \notag
\end{align}
The main idea is, after selecting the index \( K_A \) where \( Y_{K_A} {\sim} P_{Y'|X = x} \), the encoder sends its hash \( V_{K_A} \) along with the batch index \( K_{1,A} \) to the decoder. The decoder, on the other hand,  aims to infer $K_A$ by using the posterior $P_{Y'|X'=x'}$. The message \(( V_{K_A}, K_{1,A}) \) from the encoder will further reduce  the decoder's search space within $W$  and improves the matching probability (details in Appendix~\ref{proof_prop_application}).  Proposition \ref{prop_application} provides a bound on the probability the decoder outputs a wrong index:

  



\begin{proposition}
\label{prop_application}Fix any $\epsilon > 0$ and let $(P_X, P_{Y'|X},Q_{Y'})$ satisfies the extended bounding condition with $\omega$, for $N \geq \max(N_0(\epsilon), \omega)$ where $N_0(\epsilon)$ is defined in Remark \ref{coefficients_decay}, we have:
\setlength{\belowdisplayskip}{0.5em}
\setlength{\belowdisplayshortskip}{0.5em}
\begin{align}
&\Pr(Y'_{K_A} \neq Y'_{K_B}) \le   \mathbb{E}_{X,Y',X'}\left[1- \left(1+ \epsilon + (1+\epsilon)\mathcal{V}^{-1} \textcolor{black}{2^{i(Y';X) - i(Y';X')}} \right)^{-1} \right]
\label{eq:err-bnd-main}
\end{align}
where \textcolor{black}{$i_{Y';X}(y';x) = \log {P_{Y'|X}(y'|x)} - \log {P_{Y'}(y')}$} is the information  density. The coding cost is $\log(\mathcal{V}) + r$ where $r$ is the coding cost of sending the selected batch index $K_{1,A}$ and $r\leq 4$ bits. 

Proof: See Appendix \ref{proof_prop_application}
\end{proposition}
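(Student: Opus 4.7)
The plan is to reduce the error bound to a direct application of the ERS matching lemma with batch‐index communication (Proposition \ref{prop:match_ers_general}, as extended in Appendix \ref{ERS_analysis_batch}) on a suitably augmented product space, and then handle the coding cost separately via the Sorting Method applied to the batch index together with a uniform code for the hash.

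First, I would set up the augmented alphabet $\mathcal{Y}' \times [1{:}\mathcal{V}]$ with proposal $Q_Y(y',v) = Q_{Y'}(y')\cdot \mathcal{V}^{-1}$, encoder target $P^A_Y(y',v) = P_{Y'|X}(y'\mid x)\cdot \mathcal{V}^{-1}$, and a decoder target that, after receiving the hash $V_{K_A}$, concentrates on elements whose hash matches, namely $\tilde{P}^B_Y(y',v) = P_{Y'|X'}(y'\mid x') \cdot \mathbb{1}[v = V_{K_A}]$. This choice makes $K_A, K_B$ exactly the ERS selections in \eqref{selection_couple}, and sending $(V_{K_A}, K_{1,A})$ realizes the batch-index-communication variant. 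The bounding conditions hold: $(P^A_Y, Q_Y)$ has constant $\omega$ by hypothesis, and $(\tilde{P}^B_Y, Q_Y)$ has constant $\omega \mathcal{V}$ (finite), while the hypothesis $N \geq \omega$ guarantees the ERS bounding condition on the encoder side is usable.

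Next, I would evaluate the critical ratio at the accepted pair $(y', V_{K_A})$:
\[
\frac{P^A_Y(y', V_{K_A})}{\tilde{P}^B_Y(y', V_{K_A})} \;=\; \mathcal{V}^{-1}\,\frac{P_{Y'|X}(y'\mid x)}{P_{Y'|X'}(y'\mid x')} \;=\; \mathcal{V}^{-1}\, 2^{i(y';x)-i(y';x')},
\]
where the second equality uses $P_{Y'|X}/P_{Y'|X'} = (P_{Y'|X}/P_{Y'}) / (P_{Y'|X'}/P_{Y'})$. Choosing $N \geq N_0(\epsilon)$ so that the decay coefficients satisfy $\mu'_1(N),\mu'_2(N) \leq \epsilon$ (Remark \ref{coefficients_decay}) and noting that matching on the augmented space immediately yields $Y'_{K_A} = Y'_{K_B}$ (since the decoder only searches over hash-$V_{K_A}$ candidates, and the $Y'$ coordinates are distinct in the continuous setting), Proposition \ref{prop:match_ers_general} gives a lower bound on $\Pr(Y'_{K_A}=Y'_{K_B}\mid X,Y'_{K_A},X')$ of the claimed form. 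Taking the complement and then an expectation over $(X, Y', X') \sim P_X P_{Y'|X} P_{X'|X}$ produces \eqref{eq:err-bnd-main}.

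For the coding cost, the hash $V_{K_A}$ is a priori uniform on $[1{:}\mathcal{V}]$ and costs exactly $\log\mathcal{V}$ bits. For the batch index $K_{1,A}$, I would invoke the Sorting Method of Section \ref{ers_coding_cost}: group every $\lfloor \Delta^{-1}\rfloor$ batches, transmit the group index $L$ at cost $\mathbb{E}[\log L] \leq 1$ bit, and transmit the within-group rank $\hat{K}_1$ at cost $\mathbb{E}[\log \hat{K}_1]$ bounded via the calculations in Appendix \ref{analysis_ers}. The assumption $N \geq \omega$ forces $\Delta$ close enough to $1$ that $\lfloor \Delta^{-1}\rfloor$ is $O(1)$, and a standard entropy-coding argument with Zipf-type prefix codes absorbs the result into a universal constant, yielding $r \leq 4$ bits.

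The main obstacle I expect is twofold. Technically, one must verify that the augmented construction genuinely fits the framework of Section \ref{coupling_comm}: the Markov chain $Z$–$(X,Y_A)$–$W$ must hold for $Z = (V_{K_A}, K_{1,A}, X')$, and $\tilde{P}^B_Y$ must be a valid distribution on the product space. Conceptually, the delicate point is pinning down the constant $r \leq 4$: the Sorting Method bound on $\mathbb{E}[\log \hat{K}_1]$ depends on the distribution of the batch-acceptance events, and one has to argue quantitatively that $N \geq \omega$ renders this cost absorbable into a small universal constant rather than a distribution-dependent term.
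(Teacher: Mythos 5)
Your error-bound argument follows essentially the same route as the paper's proof: define the augmented targets $P^A(y',v)=P_{Y'|X}(y'|x)\mathcal{V}^{-1}$ and $\tilde P^B(y',v)=P_{Y'|X'}(y'|x')\mathbbm{1}[v=V_{K_A}]$, apply the batch-communication matching lemma of Appendix \ref{ERS_analysis_batch} (note it is Proposition \ref{prop:match_ers_comm} there, not Proposition \ref{prop:match_ers_general}, that the paper invokes), evaluate the ratio as $\mathcal{V}^{-1}2^{i(y';x)-i(y';x')}$, pick $N\geq N_0(\epsilon)$ so the decay coefficients are at most $\epsilon$, and take expectations. Two remarks on that part. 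First, your claim that $(\tilde P^B, Q)$ satisfies a bounding condition with constant $\omega\mathcal{V}$ is both unjustified (no bound on $P_{Y'|X'}/Q_{Y'}$ is assumed) and unnecessary: in the batch-communication variant the decoder performs Gumbel-Max within the received batch, which requires no bounding condition on the decoder side — this is precisely why that variant is used, and the coefficients $\mu'_1,\mu'_2$ in Proposition \ref{prop:match_ers_comm} depend only on the encoder-side ratio bound. If your argument actually needed the decoder-side bound, it would not go through under the stated hypotheses.

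Second, and this is the genuine gap, your treatment of the coding cost $r\leq 4$ does not close. You propose to send both the group index $L$ and the within-group rank $\hat K_1$, with the rank cost "bounded via the calculations in Appendix \ref{analysis_ers}"; but that bound is distribution-dependent (it carries a KL term), so it cannot be absorbed into a universal constant, and saying $\lfloor\Delta^{-1}\rfloor=O(1)$ is not enough to get an explicit $4$. The resolution in the paper is a one-line observation you are missing: $N\geq\omega$ gives $\Delta \geq N/(N-1+\omega) > 1/2$, hence $\lfloor\Delta^{-1}\rfloor = 1$, so each group contains exactly one batch and the rank $\hat K_1\equiv 1$ need not be transmitted at all. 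Only $L$ (which then identifies $K_{1,A}$) is sent; with $\mathbb{E}[\log L]\leq 1$ and Zipf-based prefix coding, $H[L]\leq \mathbb{E}[\log L]+\log(\mathbb{E}[\log L]+1)+1 = 3$, so $r\leq H[L]+1 = 4$ bits. You correctly flagged this as the delicate point, but left it open rather than resolving it.
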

\begin{remark}\label{remark_wyner}
We can reduce the overhead $r$ in  Proposition \ref{prop_application} by jointly compressing \( n \) i.i.d. samples, i.e., to $4/n$  per sample. This also improves the matching probability in practice (see Appendix~\ref{proof_prop_application}).
 \end{remark}


\section{Experiments}
We study the performance of ERS in the Wyner-Ziv distributed compression setting on synthetic Gaussian sources and MNIST dataset. All experiments are conducted on a single NVIDIA RTX A-4500.  We use the batch communication version of ERS and encode the index with unary coding. Finally, we use the following formula from IS literature \cite{chatterjee2018sample} as a starting point for choosing the batch size: $N{=}2^{\mathbb{E}_X[D_{KL}(P_{Y|X}(.|x)||Q_Y(.)]+t}$ where $t{\geq}4$ often gives $\Delta {\geq} 0.5$, resulting in a small overhead $r$.


\vspace{-3pt}\subsection{Synthetic Gaussian Sources}

\begin{figure}[t]
  \centering
  \begin{minipage}[b]{0.74\textwidth}
    \centering
    \includegraphics[width=01.0\textwidth]{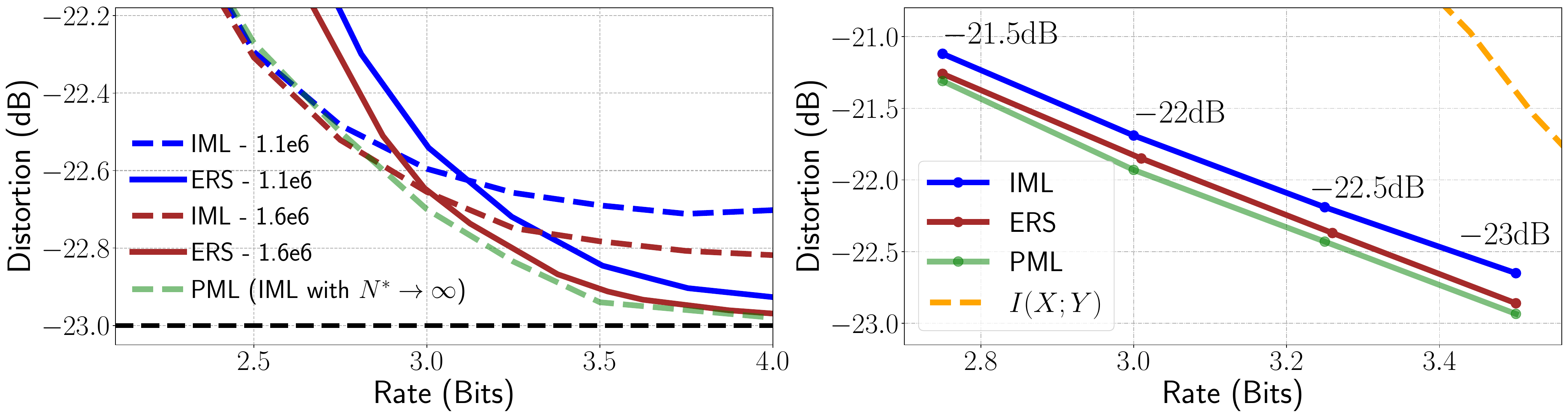}
    \vspace{-30pt}
  \end{minipage}
  \hspace{2pt}
  \begin{minipage}[b]{0.23\textwidth}
  \centering
  \raisebox{20pt}{ 
    \small
    \begin{tabular}{|c|c|}
      \hline
      $N^* $ & Target $\mathrm{dB}$ \\
      \hline
      $1.0\mathrm{e}6$ &  $-21.5\mathrm{dB}$\\
      $1.1\mathrm{e}6$ &  $-22\mathrm{dB}$\\
      $1.5\mathrm{e}6$ &  $-22.5\mathrm{dB}$\\
      $1.6\mathrm{e}6$ & $-23 \mathrm{dB}$\\
      \hline
    \end{tabular}
    \normalsize
  }
\end{minipage}
\vspace{20pt}
  \caption{\textit{Left}: Comparison of RD performance between different matching results for the Gaussian setting when targeting $-23\mathrm{dB}$ distortion (\textbf{black dotted line}), with the average number of proposals $N^*\in \{1.1\mathrm{e}6,1.6\mathrm{e}6\}$. \textit{Right}: RD curves of different methods. Each group targets the same distortion levels and uses the same average number of proposals $N^*$ for ERS and IML, shown in the right table.}
  \label{fig:Gauss_exp}
\end{figure}



We study and compare the performance of ERS, IML and PML in the Gaussian setting. Let $X\sim\mathcal{N}(0,\sigma^2_X)$ with $\sigma^2_X=1$ and is truncated within the range $[-2,2]$ and the side information $X'= X + \zeta$ where $\zeta\sim \mathcal{N}(0,\sigma^2_{X'|X})$ and $\sigma^2_{X'|X}=0.01$. The proposal and target distributions are $Q_{Y'}(.)=\mathcal{N}(0,  \sigma^2_{Y'}), \hspace{2pt} P_{Y'|X}(.|x) = \mathcal{N}(x, \sigma^2_{Y'|X}), \hspace{2pt} P_{Y'|X'}(.|x')=\mathcal{N}\left(x'{\sigma^2_{X}}/{\sigma^2_{X'}}, \sigma^2_{Y'} {-}{\sigma^4_{X}}/{\sigma^2_{X'}}\right)$ where \( \sigma_{Y'}^2 {=} \sigma_X^2 {+} \sigma_{Y'|X}^2 \), \( \sigma_{X'}^2 {=} \sigma_X^2 {+} \sigma_{X'|X}^2 \), and \( \sigma_{Y'|X}^2 \) is a fixed variance corresponding to the desired distortion level set by the encoder. The expression for \( P_{Y'|X'}(\cdot|x') \) is an approximation derived from the posterior distribution assuming \( X \) is unbounded (i.e., not truncated). We jointly compress 4 i.i.d. samples  to improve rate-distortion (RD) performance and average the result over \(10^6\) runs.

Figure~\ref{fig:Gauss_exp} (left) investigates the RD tradeoff between ERS and IML with similar number of proposals (on average) $N^*$ while targeting a distortion level of $-23\mathrm{dB}$, i.e. $\sigma^2_{Y'|X}{=}5\mathrm{e}^{-3}$. We observe that ERS outperforms IML in distortion regimes close to the target level, i.e. below $-22.6\mathrm{dB}$ as the rate increases, since IML samples are inherently biased. This bias also causes IML, with $N^*=1.6\mathrm{e}^6$, to be less effective than ERS, with $N^*=1.1\mathrm{e}^6$, for a distortion regime lower than $-22.8\mathrm{dB}$, despite having more samples. Also, the batch index conveys information that helps improve the matching probability, similar to Figure \ref{fig:match_compare} (middle), compensating for the overhead.  Overall, for appropriately chosen $N^*$, ERS is more effective than IML on achieving low distortion levels while remaining competitive compared to PML, which is unbiased and requires no extra overhead.  

In Figure~\ref{fig:Gauss_exp} (right), we plot the RD tradeoff at different target distortion levels. We compare the distortion achieved by different methods at the rate where ERS reaches distortion within approximately \(0.2\,\mathrm{dB}\) of the target. Again, for appropriately chosen batch size \( N \) and rate, ERS outperforms IML due to the inherent bias in importance sampling, and achieves performance close to that of PML. Note that PML does not generalize to practical setting when $P^B_Y$ is estimated via machine learning as the decoder cannot determine the number of samples upfront.  In general, all three approaches outperform the asymptotic baseline $I(X;Y)$ in which there is no side information. Finally, standard RS achieves  $-17\,\mathrm{dB}$ at 10 bits when targeting $-23\,\mathrm{dB}$, falling outside the plotted range. 

\subsection{Distributed Image Compression}

We apply our method in the task of distributed image compression \citep{whang2021neural, mital2022neural} with the MNIST dataset\cite{lecun1998gradient}. Following the setup in \cite{phan2024importance}, the side information is the cropped bottom-left quadrant of the image and the source is the remaining. To reduce the complexity caused by high dimensionality, we use an encoder neural network to project the data into a 3D embedding space. This vector and the side information are input into a decoder network to output the reconstruction $\hat{X}$, and the process is trained end-to-end under the $\beta$-VAE framework. For each input $X=x$, we set the target distribution $P_{Y'|X}(\cdot|x)= \mathcal{N}(\mu(x),\sigma^2(x))$ where $\mu(\cdot), \sigma(\cdot)$ are the outputs of the $\beta$-VAE network. Since $P_{Y'|X'}$ is unknown, we employ a neural contrastive estimator \cite{hermans2020likelihood} to learn the ratio between ${P_{Y'|X'}(y'|x')}/{Q_{Y'}(y')}$  from data, where $Q_{Y'} {=} \mathcal{N}(0,1)$. Since the upperbound of this ratio is unknown, PML cannot be applied \cite{theis2022algorithms}. Models and training details are in Appendix \ref{appen_nce} and \ref{mnist_appendix}.  

Extending the scope of the previous experiment, we  study the interaction between matching schemes and feedback mechanisms for error correction, introduced in previous IML work \cite{phan2024importance}. Here, the decoder returns its retrieved index to the encoder, which then confirms or corrects it with the cost of $1$ plus $\log(N/\mathcal{V})$ for ERS and $\log(N^*_{\mathrm{IML}}/\mathcal{V})$ for IML, see Appendix~\ref{feedback}. This is relevant when aiming to mitigate mismatching errors or to generate samples that closely follow the encoder's target distribution, as in applications such as differential privacy. Since IML produces biased samples, we reduce this bias by setting the number of proposals to the maximum feasible value in our simulation system, i.e., \(N^*_{\mathrm{IML}} = 2^{26}\), ensuring it exceeds the ones used by ERS, denoted \(N^*_{\mathrm{ERS}}\), in this experiment. 

We train four models, each targeting a different pixel distortion level, and compare their performance in Figure~\ref{fig:MNIST_exp}, where two samples are compressed jointly. With feedback, ERS consistently outperforms IML in both embedding and pixel domains. This is because the feedback scheme in IML incurs a higher return message cost due to the large \(N^*_{\mathrm{IML}}\), while still introducing slight bias in its output samples. In contrast, ERS operates with a smaller batch size \(N\), significantly reducing the correction message size without compromising the sample quality. Without feedback, under a distortion regime close to the target level, ERS outperforms IML for reasons discussed in the Gaussian experiment, though the performance gap is smaller. We include NDIC results \cite{mital2022neural}—a specialized deep learning approach that targets optimal RD performance. On the other hand, our method operates on a probabilistic matching nature and can accommodate scenarios with distributional constraints.

\begin{figure}[t]
  \centering
  \hspace{-30pt}
  \begin{minipage}[b]{0.74\textwidth}
    \centering
    \includegraphics[width=0.9\textwidth]{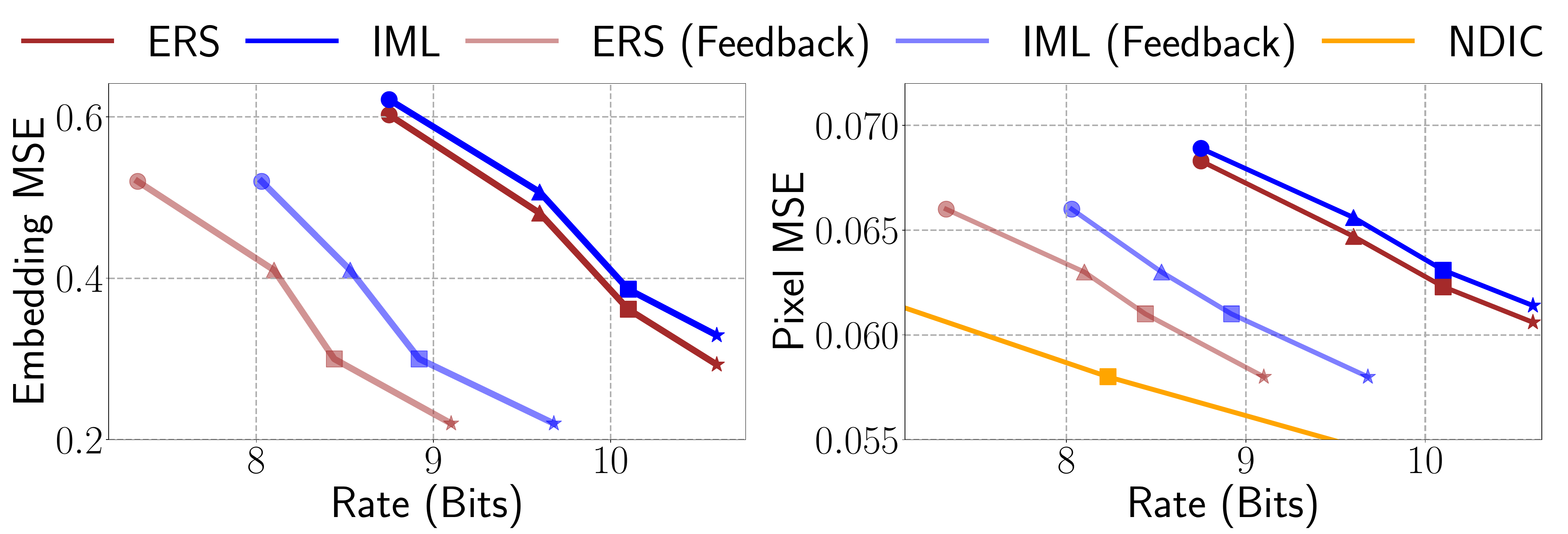}
    \vspace{-20pt}
  \end{minipage}
  \hspace{-10pt}
  \begin{minipage}[b]{0.2\textwidth}
  \centering
  \raisebox{25pt}{ 
    \small
    \begin{tabular}{|c|c|}
  \hline
  \multirow{2}{*}{$N_{\mathrm{ERS}}^*$} & Target Pixel \\
                         & Distortion \\
  \hline
  $1.5\times2^{17}$ & $0.066$ \\
  $1.2\times2^{19}$ & $0.063$ \\
  $1.2\times2^{20}$ & $0.061$ \\
  $1.2\times2^{21}$ & $0.058$ \\
  \hline
\end{tabular}
    \normalsize
  }
\end{minipage}
\vspace{20pt}
  \caption{MNIST Rate-distortion comparison for pixels, i.e. $||X-\hat{X}||^2_2$ and embeddings domain, i.e. $||\mu (X)-Y'||^2_2$, between ERS and IML. Identical markers (from top to bottom) indicate the same target models, with the target distortion levels corresponding to those achieved using feedback.}
  \label{fig:MNIST_exp}
\end{figure}

\section{Conclusion}\vspace{-5pt}
This work explores the use of the RS-based family for channel simulation and distributed compression. We focus on ERS where we develop a new efficient coding scheme for channel simulation and derive a performance bound for distributed compression that is comparable to PML \cite{li2021unified}. We validate our theoretical results on both synthetic and image datasets, showing their advantages and adaptability across various setups, including feedback-based error correction schemes. From these results, possible future directions include improving the current runtime efficiency—which is $O(\omega)$—by incorporating acceleration techniques such as space partitioning \cite{he2024accelerating} or importance sampling methods like Multiple IS \cite{mis}, as well as extending the distributed compression setup to incorporate differential privacy.

\section*{Acknowledgment}
Resources used in preparing this research were provided, in part, by the Province of Ontario, the Government of Canada through CIFAR, and companies sponsoring the Vector Institute \hyperlink{www.vectorinstitute.ai/partnerships/}{www.vectorinstitute.ai/partnerships/}.


\newpage
\bibliographystyle{plainnat} 
\bibliography{references} 

\begin{thebibliography}{47}
\providecommand{\natexlab}[1]{#1}
\providecommand{\url}[1]{\texttt{#1}}
\expandafter\ifx\csname urlstyle\endcsname\relax
  \providecommand{\doi}[1]{doi: #1}\else
  \providecommand{\doi}{doi: \begingroup \urlstyle{rm}\Url}\fi

\bibitem[Agustsson and Theis(2020)]{agustsson2020universally}
Eirikur Agustsson and Lucas Theis.
\newblock Universally quantized neural compression.
\newblock \emph{Advances in neural information processing systems}, 33:\penalty0 12367--12376, 2020.

\bibitem[Ball{\'e} et~al.(2018)Ball{\'e}, Minnen, Singh, Hwang, and Johnston]{balle2018variational}
Johannes Ball{\'e}, David Minnen, Saurabh Singh, Sung~Jin Hwang, and Nick Johnston.
\newblock Variational image compression with a scale hyperprior.
\newblock \emph{arXiv preprint arXiv:1802.01436}, 2018.

\bibitem[Bavarian et~al.(2016)Bavarian, Ghazi, Haramaty, Kamath, Rivest, and Sudan]{bavarian2016optimality}
Mohammad Bavarian, Badih Ghazi, Elad Haramaty, Pritish Kamath, Ronald~L Rivest, and Madhu Sudan.
\newblock Optimality of correlated sampling strategies.
\newblock \emph{arXiv preprint arXiv:1612.01041}, 2016.

\bibitem[Blau and Michaeli(2019)]{blau2019rethinking}
Yochai Blau and Tomer Michaeli.
\newblock Rethinking lossy compression: The rate-distortion-perception tradeoff.
\newblock In \emph{International Conference on Machine Learning}, pages 675--685. PMLR, 2019.

\bibitem[Braverman and Garg(2014)]{braverman2014public}
Mark Braverman and Ankit Garg.
\newblock Public vs private coin in bounded-round information.
\newblock In \emph{International Colloquium on Automata, Languages, and Programming}, pages 502--513. Springer, 2014.

\bibitem[Chatterjee and Diaconis(2018)]{chatterjee2018sample}
Sourav Chatterjee and Persi Diaconis.
\newblock The sample size required in importance sampling.
\newblock \emph{The Annals of Applied Probability}, 28\penalty0 (2):\penalty0 1099--1135, 2018.

\bibitem[Cuff(2013)]{cuff2013distributed}
Paul Cuff.
\newblock Distributed channel synthesis.
\newblock \emph{IEEE Transactions on Information Theory}, 59\penalty0 (11):\penalty0 7071--7096, 2013.

\bibitem[Daliri et~al.(2024)Daliri, Musco, and Suresh]{daliri2024coupling}
Majid Daliri, Christopher Musco, and Ananda~Theertha Suresh.
\newblock Coupling without communication and drafter-invariant speculative decoding.
\newblock \emph{arXiv preprint arXiv:2408.07978}, 2024.

\bibitem[Deligiannidis et~al.(2020)Deligiannidis, Doucet, and Rubenthaler]{deligiannidis2020ensemble}
George Deligiannidis, Arnaud Doucet, and Sylvain Rubenthaler.
\newblock Ensemble rejection sampling.
\newblock \emph{arXiv preprint arXiv:2001.09188}, 2020.

\bibitem[Deligiannidis et~al.(2024)Deligiannidis, Jacob, Khribch, and Wang]{deligiannidis2024importance}
George Deligiannidis, Pierre~E Jacob, El~Mahdi Khribch, and Guanyang Wang.
\newblock On importance sampling and independent metropolis-hastings with an unbounded weight function.
\newblock \emph{arXiv preprint arXiv:2411.09514}, 2024.

\bibitem[Elvira et~al.(2019)Elvira, Martino, Luengo, and Bugallo]{mis}
V{\'\i}ctor Elvira, Luca Martino, David Luengo, and M{\'o}nica~F Bugallo.
\newblock Generalized multiple importance sampling.
\newblock 2019.

\bibitem[Flamich(2023)]{flamich2023greedy}
Gergely Flamich.
\newblock Greedy poisson rejection sampling.
\newblock \emph{Advances in Neural Information Processing Systems}, 36:\penalty0 37089--37127, 2023.

\bibitem[Flamich(2024)]{flamich2024data}
Gergely Flamich.
\newblock \emph{Data Compression with Relative Entropy Coding}.
\newblock PhD thesis, University of Cambridge (United Kingdom), 2024.

\bibitem[Flamich and Theis(2023)]{flamich2023adaptive}
Gergely Flamich and Lucas Theis.
\newblock Adaptive greedy rejection sampling.
\newblock \emph{arXiv preprint arXiv:2304.10407}, 2023.

\bibitem[Flamich et~al.(2022)Flamich, Markou, and Hern{\'a}ndez-Lobato]{flamich2022fast}
Gergely Flamich, Stratis Markou, and Jos{\'e}~Miguel Hern{\'a}ndez-Lobato.
\newblock Fast relative entropy coding with a* coding.
\newblock In \emph{International Conference on Machine Learning}, pages 6548--6577. PMLR, 2022.

\bibitem[Flamich et~al.(2023)Flamich, Markou, and Hern{\'a}ndez-Lobato]{flamich2023faster}
Gergely Flamich, Stratis Markou, and Jos{\'e}~Miguel Hern{\'a}ndez-Lobato.
\newblock Faster relative entropy coding with greedy rejection coding.
\newblock \emph{Advances in Neural Information Processing Systems}, 36:\penalty0 50558--50569, 2023.

\bibitem[Flamich et~al.(2025)Flamich, Sriramu, and Wagner]{flamich2025redundancy}
Gergely Flamich, Sharang~M Sriramu, and Aaron~B Wagner.
\newblock The redundancy of non-singular channel simulation.
\newblock \emph{arXiv preprint arXiv:2501.14053}, 2025.

\bibitem[Harsha et~al.(2007)Harsha, Jain, McAllester, and Radhakrishnan]{harsha2007communication}
Prahladh Harsha, Rahul Jain, David McAllester, and Jaikumar Radhakrishnan.
\newblock The communication complexity of correlation.
\newblock In \emph{Twenty-Second Annual IEEE Conference on Computational Complexity (CCC'07)}, pages 10--23. IEEE, 2007.

\bibitem[Havasi et~al.(2019)Havasi, Peharz, and Hern{\'a}ndez-Lobato]{havasi2019minimal}
Marton Havasi, Robert Peharz, and Jos{\'e}~Miguel Hern{\'a}ndez-Lobato.
\newblock Minimal random code learning: Getting bits back from compressed model parameters.
\newblock In \emph{7th International Conference on Learning Representations, ICLR 2019}, 2019.

\bibitem[He et~al.(2023)He, Flamich, Guo, and Hern{\'a}ndez-Lobato]{he2023recombiner}
Jiajun He, Gergely Flamich, Zongyu Guo, and Jos{\'e}~Miguel Hern{\'a}ndez-Lobato.
\newblock Recombiner: Robust and enhanced compression with bayesian implicit neural representations.
\newblock \emph{arXiv preprint arXiv:2309.17182}, 2023.

\bibitem[He et~al.(2024)He, Flamich, and Hern{\'a}ndez-Lobato]{he2024accelerating}
Jiajun He, Gergely Flamich, and Jos{\'e}~Miguel Hern{\'a}ndez-Lobato.
\newblock Accelerating relative entropy coding with space partitioning.
\newblock \emph{Advances in Neural Information Processing Systems}, 37:\penalty0 75791--75828, 2024.

\bibitem[Hermans et~al.(2020)Hermans, Begy, and Louppe]{hermans2020likelihood}
Joeri Hermans, Volodimir Begy, and Gilles Louppe.
\newblock Likelihood-free mcmc with amortized approximate ratio estimators.
\newblock In \emph{International conference on machine learning}, pages 4239--4248. PMLR, 2020.

\bibitem[Isik et~al.(2024)Isik, Pase, Gunduz, Koyejo, Weissman, and Zorzi]{isik2024adaptive}
Berivan Isik, Francesco Pase, Deniz Gunduz, Sanmi Koyejo, Tsachy Weissman, and Michele Zorzi.
\newblock Adaptive compression in federated learning via side information.
\newblock In \emph{International Conference on Artificial Intelligence and Statistics}, pages 487--495. PMLR, 2024.

\bibitem[Kobus et~al.(2024)Kobus, Theis, and G{\"u}nd{\"u}z]{kobus2024gaussian}
Szymon Kobus, Lucas Theis, and Deniz G{\"u}nd{\"u}z.
\newblock Gaussian channel simulation with rotated dithered quantization.
\newblock In \emph{2024 IEEE International Symposium on Information Theory (ISIT)}, pages 1907--1912. IEEE, 2024.

\bibitem[Lecun et~al.(1998)Lecun, Bottou, Bengio, and Haffner]{lecun1998gradient}
Yann Lecun, L{\'e}on Bottou, Yoshua Bengio, and Patrick Haffner.
\newblock Gradient-based learning applied to document recognition.
\newblock \emph{Proceedings of the IEEE}, 86\penalty0 (11):\penalty0 2278--2324, 1998.

\bibitem[Li(2024)]{li2024pointwise}
Cheuk~Ting Li.
\newblock Pointwise redundancy in one-shot lossy compression via poisson functional representation.
\newblock In \emph{International Zurich Seminar on Information and Communication (IZS 2024). Proceedings}, pages 28--29. ETH Z{\"u}rich, 2024.

\bibitem[Li and Anantharam(2021)]{li2021unified}
Cheuk~Ting Li and Venkat Anantharam.
\newblock A unified framework for one-shot achievability via the poisson matching lemma.
\newblock \emph{IEEE Transactions on Information Theory}, 67\penalty0 (5):\penalty0 2624--2651, 2021.

\bibitem[Li and El~Gamal(2018)]{li2018strong}
Cheuk~Ting Li and Abbas El~Gamal.
\newblock Strong functional representation lemma and applications to coding theorems.
\newblock \emph{IEEE Transactions on Information Theory}, 64\penalty0 (11):\penalty0 6967--6978, 2018.

\bibitem[Li et~al.(2024)]{li2024channel}
Cheuk~Ting Li et~al.
\newblock Channel simulation: Theory and applications to lossy compression and differential privacy.
\newblock \emph{Foundations and Trends{\textregistered} in Communications and Information Theory}, 21\penalty0 (6):\penalty0 847--1106, 2024.

\bibitem[Liu et~al.(2015)Liu, Cuff, and Verd{\'u}]{liu2015one}
Jingbo Liu, Paul Cuff, and Sergio Verd{\'u}.
\newblock One-shot mutual covering lemma and marton's inner bound with a common message.
\newblock In \emph{2015 IEEE International Symposium on Information Theory (ISIT)}, pages 1457--1461. IEEE, 2015.

\bibitem[Liu et~al.(2006)Liu, Cheng, Liveris, and Xiong]{liu2006slepian}
Zhixin Liu, Samuel Cheng, Angelos~D Liveris, and Zixiang Xiong.
\newblock Slepian-wolf coded nested lattice quantization for wyner-ziv coding: High-rate performance analysis and code design.
\newblock \emph{IEEE Transactions on Information Theory}, 52\penalty0 (10):\penalty0 4358--4379, 2006.

\bibitem[Maddison(2016)]{maddison2016poisson}
Chris~J Maddison.
\newblock A poisson process model for monte carlo.
\newblock \emph{Perturbation, Optimization, and Statistics}, pages 193--232, 2016.

\bibitem[Mital et~al.(2022)Mital, {\"O}zy{\i}lkan, Garjani, and G{\"u}nd{\"u}z]{mital2022neural}
Nitish Mital, Ezgi {\"O}zy{\i}lkan, Ali Garjani, and Deniz G{\"u}nd{\"u}z.
\newblock Neural distributed image compression using common information.
\newblock In \emph{2022 Data Compression Conference (DCC)}, pages 182--191. IEEE, 2022.

\bibitem[Ozyilkan et~al.(2023)Ozyilkan, Ball{\'e}, and Erkip]{ozyilkan2023learned}
Ezgi Ozyilkan, Johannes Ball{\'e}, and Elza Erkip.
\newblock Learned wyner-ziv compressors recover binning.
\newblock \emph{arXiv preprint arXiv:2305.04380}, 2023.

\bibitem[Phan et~al.(2024)Phan, Khisti, and Louizos]{phan2024importance}
Buu Phan, Ashish Khisti, and Christos Louizos.
\newblock Importance matching lemma for lossy compression with side information.
\newblock In \emph{International Conference on Artificial Intelligence and Statistics}, pages 1387--1395. PMLR, 2024.

\bibitem[Rao and Yehudayoff(2020)]{rao2020communication}
Anup Rao and Amir Yehudayoff.
\newblock \emph{Communication Complexity: and Applications}.
\newblock Cambridge University Press, 2020.

\bibitem[Shah et~al.(2022)Shah, Chen, Balle, Kairouz, and Theis]{shah2022optimal}
Abhin Shah, Wei-Ning Chen, Johannes Balle, Peter Kairouz, and Lucas Theis.
\newblock Optimal compression of locally differentially private mechanisms.
\newblock In \emph{International Conference on Artificial Intelligence and Statistics}, pages 7680--7723. PMLR, 2022.

\bibitem[Song et~al.(2016)Song, Cuff, and Poor]{song2016likelihood}
Eva~C Song, Paul Cuff, and H~Vincent Poor.
\newblock The likelihood encoder for lossy compression.
\newblock \emph{IEEE Transactions on Information Theory}, 62\penalty0 (4):\penalty0 1836--1849, 2016.

\bibitem[Sriramu et~al.(2024)Sriramu, Barsz, Polito, and Wagner]{sriramu2024fast}
Sharang Sriramu, Rochelle Barsz, Elizabeth Polito, and Aaron Wagner.
\newblock Fast channel simulation via error-correcting codes.
\newblock \emph{Advances in Neural Information Processing Systems}, 37:\penalty0 107932--107959, 2024.

\bibitem[Steiner(2000)]{steiner2000towards}
Michael Steiner.
\newblock Towards quantifying non-local information transfer: finite-bit non-locality.
\newblock \emph{Physics Letters A}, 270\penalty0 (5):\penalty0 239--244, 2000.

\bibitem[Theis and Ahmed(2022)]{theis2022algorithms}
Lucas Theis and Noureldin~Y Ahmed.
\newblock Algorithms for the communication of samples.
\newblock In \emph{International Conference on Machine Learning}, pages 21308--21328. PMLR, 2022.

\bibitem[Triastcyn et~al.(2021)Triastcyn, Reisser, and Louizos]{triastcyn2021dp}
Aleksei Triastcyn, Matthias Reisser, and Christos Louizos.
\newblock Dp-rec: Private \& communication-efficient federated learning.
\newblock \emph{arXiv preprint arXiv:2111.05454}, 2021.

\bibitem[Verd{\'u}(2012)]{verdu2012non}
Sergio Verd{\'u}.
\newblock Non-asymptotic achievability bounds in multiuser information theory.
\newblock In \emph{2012 50th Annual Allerton Conference on Communication, Control, and Computing (Allerton)}, pages 1--8. IEEE, 2012.

\bibitem[Whang et~al.(2021)Whang, Acharya, Kim, and Dimakis]{whang2021neural}
Jay Whang, Anish Acharya, Hyeji Kim, and Alexandros~G Dimakis.
\newblock Neural distributed source coding.
\newblock \emph{arXiv preprint arXiv:2106.02797}, 2021.

\bibitem[Wyner and Ziv(1976)]{wyner1976rate}
Aaron Wyner and Jacob Ziv.
\newblock The rate-distortion function for source coding with side information at the decoder.
\newblock \emph{IEEE Transactions on information Theory}, 22\penalty0 (1):\penalty0 1--10, 1976.

\bibitem[Yang et~al.(2025)Yang, Will, and Mandt]{yang2025progressive}
Yibo Yang, Justus Will, and Stephan Mandt.
\newblock Progressive compression with universally quantized diffusion models.
\newblock In \emph{The Thirteenth International Conference on Learning Representations}, 2025.
\newblock URL \url{https://openreview.net/forum?id=CxXGvKRDnL}.

\bibitem[Zamir and Shamai(1998)]{zamir1998nested}
Ram Zamir and Shlomo Shamai.
\newblock Nested linear/lattice codes for wyner-ziv encoding.
\newblock In \emph{1998 Information Theory Workshop (Cat. No. 98EX131)}, pages 92--93. IEEE, 1998.

\end{thebibliography}



\newpage
\appendix

\section{Runtime of ERS.}

We provide an analysis of ERS runtime. Let $\omega=\max_{x,y} P_{Y|X}(y|x)/Q(y)$ and $\omega_x=\max_y  P_{Y|X}(y|x)/Q(y)$, where $P_{Y|X=x}$ is the target distribution and $Q_Y$ is the proposal distribution. For the batch size $N$ and input $x$, we have the following bound on the average batch acceptance probability $\Delta_x$, which we will show in  Appendix \ref{prelim}:
\begin{align}
    \Delta_x \geq \frac{N}{N-1+\omega_x} \geq \frac{N}{N-1+\omega},
\end{align}
Thus, the expected number of batches in ERS is:
\begin{align}
    \text{Expected Number of Batches} = \frac{1}{\Delta_x} \leq \frac{N-1+\omega}{N},
\end{align}
which leads to the runtime, i.e. the expected number of proposals as:
\begin{align}
    \text{Expected Runtime} = \frac{N}{\Delta_x} \leq {N-1+\omega}.
\end{align}
In practice, since we typically choose \( N = O(\omega) \), the expected runtime is also \( O(\omega) \). 

\section{Coding Cost of Standard Rejection Sampling}
For the proof, we generalize and use $P(.)$ and $Q(.)$ as the target and proposal distributions. This allows shorthand the notations while also generalizing the results for arbitrary distributions.
\vspace{-5pt}
\subsection{Extension of \citet{braverman2014public}'s Method for Continuous Setting}\label{binning_proof}\vspace{-5pt}
\begin{wrapfigure}{r}{0.45\textwidth}
\vspace{-10pt}
    \centering
    \includegraphics[width=0.45\textwidth]{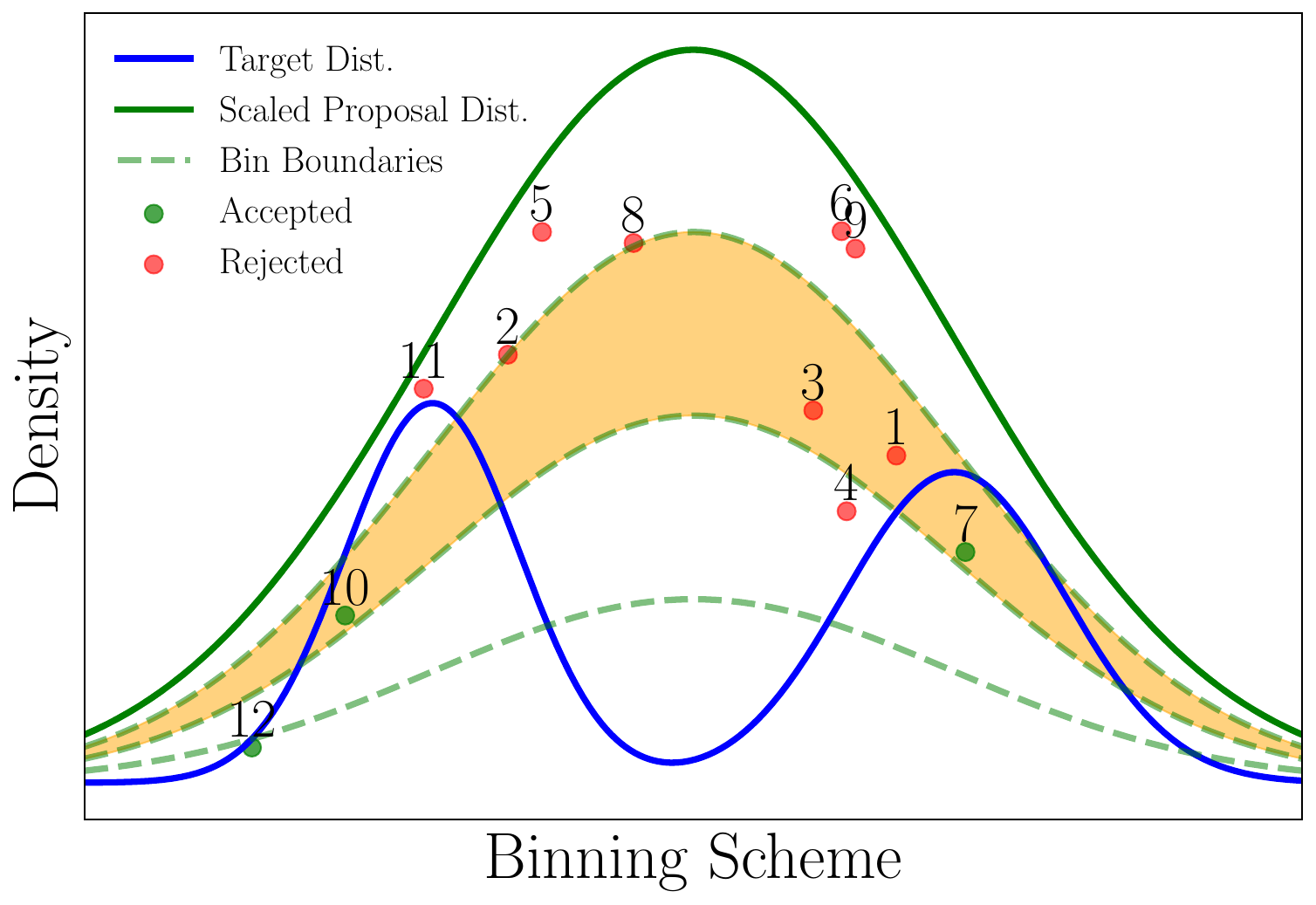}
    \vspace{-20pt}
    \caption{Binning Method for RS.}
    \label{fig:rs_bin}
    \vspace{-10pt}
\end{wrapfigure}
 This method is an extension of the work by \citet{braverman2014public} to the continuous setting. The core idea is to divide the acceptance region into smaller bins, visualized in Figure \ref{fig:rs_bin}.  Specifically, for each pair $(U_i,Y_i)$ from $W$, we denote $\tilde{U}_i=\omega U_i Q(Y_i)$. The encoder selects the index $K$ according to rejection sampling rule, which is $7$ in Figure \ref{fig:rs_bin}. It then sends the bin index of the first accepted sample, where the bin corresponds to the smallest scaled region that $\tilde{U}_K$ belongs to. In Figure \ref{fig:rs_bin}, this corresponds to the orange region and the content of the message is $3$. Then the encoder sends another message which indicates the rank of the selected sample within that bin, which is $1$. The decoder then  $K$ accordingly. Formally, the two steps are as follow:
\begin{itemize}
    \item \textit{Binning}: The encoder sends to the decoder the ceiling $T= \lceil \frac{\tilde{U}_{K}}{Q(Y_{K})}\rceil$. Upon receiving $T$, the decoder collects the set:
    \begin{equation}
        \mathcal{S}_T = \{i | (T-1)Q(Y_i) \leq \tilde{U}_i \leq TQ(Y_i)\},
    \end{equation}
    \item \textit{Index Selection}: The encoder locates the original chosen index $K$ within $\mathcal{S}_T$, says $G$, and send $G$ to the receiver. We have $\mathbb{E}[\log G] \leq 1$.
\end{itemize}
\textit{Binning Step.} We will show the $\mathbb{E}[\log T] \leq D_{KL}(P||Q) + \log(e)$., adapting the proof for the discrete case presented in \citep{rao2020communication}. First, we note that:
\begin{equation}\label{eq1}
    Y_{K} \sim P(.) , \quad U_{K}|Y_{K} \sim \mathcal{U}\left(0, \frac{P(Y_{K})}{\omega Q(Y_{K})}\right)
\end{equation}
We then have:
\begin{align}
    \mathbb{E}[\log T] &= \mathbb{E}\left[\log \left(\left\lceil \frac{\tilde{U}_{K}}{Q(Y_{K})}\right\rceil\right)\right] \\
                       &\leq  \mathbb{E}\left[\log \left(1 +  \frac{\tilde{U}_{K}}{Q(Y_{K})}\right)\right] \\
                       &= \mathbb{E}\left[\log \left(1 +  \omega U_{K}\right)\right] \\
                       &= \mathbb{E}\left[\mathbb{E}\left[\log \left(1 +  \omega U_{K}\right) \middle| Y_{K}\right]\right]\\
                       &= \int^{+\infty}_{-\infty} P(y)       \left[ \frac{\omega Q(y)}{ P(y)} \int^{\omega^{-1} P(y)/Q(y)}_0 \log \left(1 +  \omega{u}\right) du \right] dy \hspace{5pt}
 (\text{Due to (\ref{eq1}}))\\
                       &\leq \int^{+\infty}_{-\infty} P(y)       \left[ \frac{\omega Q(y)}{ P(y)} \int^{\omega^{-1} P(y)/Q(y)}_0 \log \left(1 +  \frac{P(y)}{Q(y)}\right) du \right] dy \\
                       &\leq \int^{+\infty}_{-\infty} P(y)       \left[ \frac{\omega Q(y)}{ P(y)} \int^{\omega^{-1} P(y)/Q(y)}_0 \log \left(\frac{P(y)}{Q(y)}\right) + \frac{Q(y)\log(e)}{P(y)} du \right] dy \\
                       &= \int^{+\infty}_{-\infty} P(y) \log \left(\frac{P(y)}{Q(y)}\right) dy + \int^{+\infty}_{-\infty} Q(y) \log (e) dy \\
                       &= D_{KL}(P||Q) + \log(e),
\end{align}
where we use the following results for the last inequality:
\begin{equation}
    \log(1+x) \leq \log(x) + \frac{\log(e)}{x} \quad (\text{for all } x > -1).
\end{equation}

\textit{Index Selection Step.} We first show that $\mathbb{E}[G] \leq 2$ by using recursion. We define $\mathcal{A}$ as an event  where the first samples is accepted, i.e. $U_1 \leq  \frac{P(Y_1)}{\omega Q(Y_1)}$. Then, if $\mathcal{A}$ happens then we have $G=1$, i.e. $\mathbb{E}[G|A]=1$, since it is also the first sample in $\mathcal{S}_T$. 

Before proceeding to the case where $\mathcal{A}$ does not happen, i.e. $\Bar{\mathcal{A}}$, we define the following random variable $M = \mathbbm{1}[1 \in \mathcal{S}_T]$, i.e. $M=1$ if the first proposed sample from $W$ stays within the ceiling $(T-1)Q(Y_1) \leq \tilde{U}_1 \leq TQ(Y_1)$ and $M=0$ otherwise. 

Then we have the two following recursion identities:
\begin{align}
\begin{cases} 
\mathbb{E}[G|\Bar{A}, M = 0] = \mathbb{E}[G] \\
\mathbb{E}[G|\Bar{A}, M = 1] = 1 + \mathbb{E}[G] 
\end{cases}
\end{align}
For the first equality, given that the first sample $U_1, Y_1$ does not stay within $\mathcal{S}_T$ does not implies any information about $G$, since all the samples are i.i.d. For the second equality takes into account the fact that we now accept the first sample $(U_1, Y_1)$ and repeat the counting process.  Hence, we have: 
\begin{align}
    \mathbb{E}[G|\Bar{A}] &= \Pr(M=0|\Bar{A})\mathbb{E}[G|\Bar{A}, M = 0] + \Pr(M=1|\Bar{A})\mathbb{E}[G|\Bar{A}, M = 1] \\
    &=  \mathbb{E}[G] + \Pr(M=1|\Bar{A})
\end{align}

We now express $\mathbb{E}[G]$ as follows:
\begin{align}
    \mathbb{E}[G] &= \Pr(A)\mathbb{E}[G|A] + \Pr(\Bar{A})\mathbb{E}[G|\Bar{A}] \\
    &= \Pr(A) +  \Pr(\Bar{A})(\mathbb{E}[G] + \Pr(M=1|\Bar{A}))
\end{align}
Rearranging the terms, we obtain:
\begin{align}
   \mathbb{E}[G] = 1 + \frac{\Pr(M=1, \Bar{A})}{\Pr(A)} 
\end{align}

We have $\Pr(A)=\int^{\infty}_{-\infty}\omega^{-1} P(y)Q^{-1}(y)Q(y)dy = \omega^{-1}$. For $\Pr(M=1, \Bar{A})$, we have:
\begin{align}
    \Pr(M=1, \Bar{A}) &\leq \Pr(M=1) \\
    &= \sum^\infty_{t=0}\Pr ((T-1)Q(Y_1) \leq \Tilde{U}_1 \leq (T-1)Q(y), T=t) \\
    &=  \sum^\infty_{t=0}\Pr ((t-1)Q(Y_1) \leq \Tilde{U}_1 \leq (t-1)Q(y)) \Pr(T=t) \\
    &= \sum^\infty_{t=0}\omega^{-1} \Pr(T=t) \\
    &= \omega^{-1}
\end{align}

Thus, we obtain $\mathbb{E}[G] \leq 2$ and hence $\mathbb{E}[\log G] \leq 1$.

\subsection{The Sorting Method}\label{sort_proof}
The encoding process is as follows:
\begin{itemize}
    \item \textit{Grouping}: the encoder sends the ceiling $L  = \lceil \frac{K}{\lfloor\omega\rfloor}\rceil $ to the decoder. The decoder then knows  $(L-1)\omega + 1 \leq K \leq L\omega$, i.e. $K$ is in range $L$. We have  $\mathrm{E}[\log L]=1$ bit.

    \item \textit{Sorting}: The encoder and decoder both sort the uniform random variables $U_i$ within the selected range $(L-1)\lfloor\omega\rfloor + 1 \leq i \leq L\lfloor\omega\rfloor$. Let the sorted list be $U_{\pi(1)} \leq U_{\pi(2)} \leq ...\leq U_{\pi(\lfloor\omega\rfloor)}$ where $\pi(.)$ is the mapping between the sorted index and the original unsorted one. The encoder sends the rank of $U_{K}$ within this list, i.e. sends the value $\hat{K}$ such that $K=\pi(\hat{K})$. The decoder receive $\hat{K}$ and retrieve $Y_{K}$ accordingly. The coding cost for this step is $D_{KL}(P||Q) + \log(e)$.
\end{itemize}
We provide detail analysis for each step below:

\textit{Grouping Step. } Since each proposal is accepted with probability $\omega^{-1}$, this means:
\begin{align}
    \Pr(K > \ell \lfloor \omega\rfloor) &= \left(1 - \omega^{-1}\right)^{\ell\lfloor \omega\rfloor}
    < \left(\frac{1}{2}\right)^{\ell} \label{floor_ineq},
\end{align}
where we  will prove the RHS inequality in Appendix \ref{proof_floor_ineq}. Hence, we have $\Pr(L > \ell) < \left(\frac{1}{2}\right)^{-\ell}$ and:
\begin{equation}
    \mathbb{E}[L] = \sum^{\infty}_{\ell=0} \Pr(L > \ell) < 1 + 0.5^{-1} + 0.5^{-2}+... = 2 .
\end{equation}
Finaly, using Jensen's inequality, we have:
\begin{equation}
    \mathbb{E}[\log L] \leq \log(\mathbb{E}[{L}]) = 1.
\end{equation}

\textit{Sorting Step. }To  bound the coding cost in step 2, we first express $\mathbb{E}[\log \hat{K}]$ with the rule of conditional expectation as follows:
\begin{align}
    \mathbb{E}[\log \hat{K}] &= \int^{\infty}_{-\infty} P(y) \mathbb{E}[\log \hat{K} | Y_{K} = y]dy \\
    &= \int^{\infty}_{-\infty} P(y) \left(\int^{\infty}_{-\infty}  \mathbb{E}[\log \hat{K} | Y_{K} = y, U_{K} = u] P(U_{K}=u|Y_{K}=y)du\right)dy\\
    &= \int^{\infty}_{-\infty} P(y) \left(\int^{\frac{P(y)}{\omega Q(y)}}_{0}  \mathbb{E}[\log \hat{K} | Y_{K} = y, U_{K} = u] \frac{\omega Q(y)}{P(y)}du\right)dy, \label{Cond_Exp}
\end{align}
where the last step, $P(U_{K}=u|Y_{K}=y) = \frac{\omega Q(y)}{P(y)}$ for $0\leq u \leq \frac{P(y)}{\omega Q(y)}$ is due to the acceptance condition in rejection sampling. We will show in Section \ref{proof_ineq_logk} that:
\begin{align}
    \mathbb{E}[\log \hat{K} | Y_{K} = y, U_{K} = u] \leq \log (\omega u + 1) \label{ineq_logK}
\end{align}
Then, combining this with Equation (\ref{Cond_Exp}), we obtain:
\begin{align}
    \mathbb{E}[\log \hat{K}] &\leq \int^{\infty}_{-\infty} P(y) \left(\int^{\frac{P(y)}{\omega Q(y)}}_{0}  \frac{\omega Q(y)}{P(y)}\log (\omega u + 1)du\right)dy \\
    &\leq  \int^{\infty}_{-\infty} P(y) \left(\int^{\frac{P(y)}{\omega Q(y)}}_{0}  \frac{\omega Q(y)}{P(y)}\log \left(\frac{ P(y)}{ Q(y)} + 1\right)du\right)dy\\
     &=  \int^{\infty}_{-\infty} P(y) \left[{\frac{P(y)}{\omega Q(y)}}  \frac{\omega Q(y)}{P(y)}\log \left(\frac{ P(y)}{ Q(y)} + 1\right)\right]dy \\
     &= \int^{\infty}_{-\infty} P(y) \log \left(\frac{ P(y)}{ Q(y)} + 1\right)dy \\
     &\leq \int^{\infty}_{-\infty} P(y) \left[\log \left(\frac{ P(y)}{ Q(y)}\right)+ \frac{\log(e) Q(y)}{P(y)}\right]dy \\
     &= D_{KL}(P||Q) + \log(e) \label{RS_main_bound}
\end{align}
Hence, we have  $ \mathbb{E}[\log \hat{K}] \leq D_{KL}(P||Q) + \log(e)$ on average.

\subsection{Proof for Inequality (\ref{floor_ineq})} \label{proof_floor_ineq}

The proof for this inequality is self-contained. We want to prove that for any $\omega \geq 1$, we have:
\begin{align}
    f(\omega)= (1-\omega^{-1})^{\lfloor \omega\rfloor} \leq \frac{1}{2}.
\end{align}
Consider the behavior of $f(\omega)$ at every interval $[n,n+1)$ where $n\in \mathbb{Z}^+, n\geq 1$. Since $\omega\geq 1$, the function $f_n(\omega)=\left(1-\omega^{-1}\right)^n$ is increasing and hence:
$$\sup_{\omega} f_n(\omega) = \left(1-\frac{1}{n+1}\right)^n=\left(\frac{n}{n+1}\right)^n$$
for every interval $[n,n+1)$. We will show that $\sup_{\omega} f_n(\omega)$ is decreasing for $n\geq 1$ and thus we have $\sup_\omega f(\omega) = \sup_{\omega} f_1(\omega)=\frac{1}{2}$.

Consider the function $g(x) = \left(\frac{x}{x+1}\right)^n$ for $x\geq 1, x\in \mathbb{R}$. Let $h(x) = \ln(g(x))=x\ln(\frac{x}{x+1})$, then we simply need to show $h(x)$ is decreasing. Consider its first derivative:
\begin{align}
    h'(x) = \ln\left(\frac{x}{x+1}\right) + \frac{1}{x+1} \leq 0,
\end{align}
since:
\begin{align}
    \ln\left(\frac{x}{x+1}\right) = \ln\left(1-\frac{1}{x+1}\right) \leq -\frac{1}{x+1}
\end{align}
due to the inquality $\ln(1+y) < y$ for all $y$.
\subsubsection{Proof for Inequality (\ref{ineq_logK})} \label{proof_ineq_logk}
We begin by applying Jensen's inequality for concave function $\log(x)$:
\begin{align}
    \mathbb{E}[\log \hat{K} | Y_{K} = y, U_{K} = u] &\leq \log \mathbb{E}[\hat{K} | Y_{K} = y, U_{K} = u] \quad \text{ (by Jensen's Inequality)} \\
    &= \log \mathbb{E}_{L} [\mathbb{E}[\hat{K} | Y_{K} = y, U_{K} = u, L = \ell]]
\end{align}
Given  $K$ is within the range $L = \ell$ and $U_{K}=u$, we can express $\hat{K}$ as follows:
\begin{align}
    \hat{K} &= |\{U_i < u, (\ell-1)\lfloor\omega\rfloor+ 1\leq i \leq \ell\lfloor\omega\rfloor \}| + 1,  \label{express_K} \\
    &= \Omega(u, \ell) + 1
\end{align}
i.e. the number of $U_i$  (plus 1 for the ranking) within the range $L$ that has value lesser than $u$. 

We can see that the the index $i$ within the range $L$ satisfying $U_i<u$ are from the index that are either (1) rejected, i.e. index $i< K$  or (2) not examined by the algorithm, i.e. index $i > K$. The rest of this proof will show the following upperbound:
\begin{align}
    \mathbb{E}[\Omega(u, \ell)|Y_K=y, U_K=u,L=\ell] \leq \omega u, \text{ for any } \ell
\end{align}
For readability, we split the proof into different proof steps. 

\textbf{Proof Step 1: } We  condition on the mapped index of $\pi(\hat{K})$ on the original array:
\begin{align}
    &\mathbb{E}[\hat{K} | Y_{K} = y, U_{K} = u, L = \ell]\\ &= \mathbb{E}_{\pi(\hat{K})}\left[\mathbb{E}[\hat{K} \mid  Y_{K} = y, U_{K} = u, L = \ell, \pi(\hat{K}) = k]\right] \\
    &= \mathbb{E}_{\pi(\hat{K})} \left [  \mathbb{E}[\Omega(u, \ell) + 1 \mid  Y_{K} = y, U_{K} = u, L = \ell, \pi(\hat{K}) = k]  \right]\\
    &= \mathbb{E}_{\pi(\hat{K})} \left [  \mathbb{E}[\Omega(u, \ell) \mid  Y_{K} = y, U_{K} = u, L = \ell, \pi(\hat{K}) = k]  \right] + 1 \\
    &= \mathbb{E}_{\pi(\hat{K})} \left [  \mathbb{E}[\Omega_1(u, \ell,{k}) + \Omega_2(u, \ell, {k}) \mid  Y_{K} = y, U_{K} = u, L = \ell, \pi(\hat{K}) = k]  \right] + 1,
\end{align}
where $\Omega_1(u, \ell, {k}), \Omega_2(u, \ell, {k})$ are the number of $U_i < u$ within the range $L=\ell$ that occurs before and after the selected index ${k}$ respectively. Specifically:
\begin{align}
    \Omega_1(u,\ell,{k}) &= |\{U_i < u, (\ell-1)\lfloor\omega\rfloor + 1\leq i < (\ell-1)\lfloor\omega\rfloor  + {k} \}| \\
    \Omega_2(u,\ell,{k}) &= |\{U_i < u, (\ell-1)\lfloor\omega\rfloor  + {k}  + 1 \leq i \leq \ell\lfloor\omega\rfloor  \}|, 
\end{align}
 which also naturally gives  $\Omega(u,\ell) = \Omega_1(u, \ell, {k}) + \Omega_2(u, \ell, {k})$. 
 
\textbf{Proof Step 2: } Consider $\Omega_2(u,\ell,{k})$, since each proposal $(Y_i,U_i)$ is i.i.d distributed and the fact that ${k}$ is the index of the accepted sample,  for every $i > K$, we have:
 $$\Pr(U_i<u\mid  Y_{K} = y, U_{K} = u, L = \ell, \pi(\hat{K}) = k) = \Pr(U_i < u)$$  
 This gives us: 
 \begin{align}
     \mathbb{E}[\Omega_2(u, \ell,k) \mid  Y_{K} = y, U_{K} = u, L = \ell, \pi(\hat{K}) = k]  &= (\lfloor\omega\rfloor -k) \Pr (U < u) \\
     &= (\lfloor\omega\rfloor - k) u \\
     &\leq  \frac{(\lfloor\omega\rfloor - k)u}{\Pr (\text{reject a sample})}\\
     &\leq  \frac{(\lfloor\omega\rfloor - k)u}{1 - \omega^{-1} }
 \end{align}
\textbf{Proof Step 3: } For $\Omega_1(u,\ell,{k})$, we do not have such independent property since for every sample with index $i < K$, we know that they are rejected samples, and hence for $i < k$:
\begin{align}
   \Pr(U_i<u\mid  Y_{K} = y, U_{K} = u, L = \ell, \pi(\hat{K}) = k) &= \Pr(U_i < u | Y_i \text{ is rejected} ) \label{omega_1_eq} \\
   &= \frac{\Pr(U_i < u , Y_i \text{ is rejected})}{\Pr(Y_i \text{ is rejected})} \\
   &\leq \frac{\Pr(U_i < u) }{\Pr(Y_i \text{ is rejected})} \\
   &= \frac{u }{1-\omega^{-1}},
\end{align}
which gives us:
\begin{align}
    \mathbb{E}[\Omega_2(u, \ell,k) \mid  Y_{K} = y, U_{K} = u, L = \ell, \pi(\hat{K}) = k] \leq \frac{(k-1)u}{1-\omega^{-1}}
\end{align}

To prove Equation (\ref{omega_1_eq}), note that the following events are equivalent:
\begin{align}
    \{ Y_{K} = y, U_{K} = u, L = \ell, \pi(\hat{K}) = k\} &= \{ Y_{k} = y, U_{k} = u, Y_{1...k-1} \text{ are rejected}\} \\
    &\triangleq \Lambda(u,y,k)
\end{align}
Here, we note that $Y_{k},U_{k}$ denote the value at index $k$ within $W$, which is different from $Y_{K},U_{K}$, the value selected by the rejection sampler. Hence:
\begin{align}
    \Pr(U_i<u|\Lambda(u,y,k)) 
    &= \frac{\Pr(U_i<u, Y_{1...k-1} \text{ are rejected} |Y_{k}=y, U_{k}=u)}{\Pr( Y_{1...k-1} \text{ are rejected} |Y_{k}=y, U_{k}=u)}\\
    &= \frac{\Pr(U_i<u, \ Y_{1...k-1} \text{ are rejected})}{\Pr( Y_{1...k-1} \text{ are rejected})} \quad\text{(Since } (Y_i,U_i)\text{ are i.i.d)}\\
    &= \Pr(U_i < u | Y_i \text{ is rejected} ),
\end{align} 

\textbf{Proof Step 4: } From the above result from Step 2 and 3, we have $\Omega(u, \ell) = \Omega_1(u, \ell, {k}) + \Omega_2(u, \ell, {k}) \leq \omega u$ and as a result:
\begin{align}
    \mathbb{E}[K | Y_{K} = y, U_{K} = u, L = \ell]  &\leq \frac{(\lfloor\omega\rfloor -1)u}{1-\omega^{-1}} + 1\\
     &\leq \frac{(\omega -1)u}{1-\omega^{-1}}  +1 \quad (\text{Since} \lfloor\omega\rfloor \leq \omega)\\
     &=\omega u + 1
\end{align}
which completes the proof.

\subsection{Overall Coding Cost.} \label{rej_coding_final}
We now provide the upperbound on $H[K]$ for our \textit{Sorting Method}. Since the message in the \textit{Binning Method} also consists of two parts, the results are the same. For each part of the message, namely $L$ and $K$, we encode it with a prefix-code from Zipf distribution \cite{li2018strong}. For $H[L]$, we have:
\begin{align}
    H[L] &\leq \mathrm{E}_X[\mathrm{E}[\log L | X=x]] + \log(\mathrm{E}_X[\mathrm{E}[\log L|X=x]] + 1) + 1\\
    &= 3 \text{ bits}
\end{align}
Hence, the rate for the first message is $R_1 \leq H[L] + 1 = 4 \text{bits}$.

Similarly, for $H[\hat{K}]$:
\begin{align}
    H[\hat{K}] &\leq \mathbb{E}_X[\mathbb{E}[\log \hat{K}|X=x]\ + \log(\mathbb{E}_X[\mathbb{E}[\log \hat{K}|X=x] + 1) + 1\\
    &= I(X;Y) + \log(e) + \log(I(X;Y) + \log(e) + 1) +1 \\
    &\leq I(X;Y) + \log(I(X;Y) + 1) + 2\log(e) + 1
\end{align}
Hence, the rate for the second message is $R_2 \leq H[\hat{K}] + 1 = I(X;Y) + \log(I(X;Y) + 1) + 2\log(e) + 2 \text{bits}$
. Also note that:
\begin{align}
    H[K|W] &= H[L,\hat{K}|W] \quad (\text{Given } W, K \text{ and } (L,\hat{K}) \text{ are bijective }) \\
    &\leq H[L|W] + H[\hat{K}|W] \\
    &\leq H[L] + H[\hat{K}]\\
    &\leq I(X;Y) + \log(I(X;Y) + 1) + 7 \text{ (bits)}
\end{align}
Since we are compressing two messages separately, we have:
$R \leq R_1 + R_2 = I(X;Y) + \log(I(X;Y) + 1) + 9 \text{ (bits)}$

\section{Matching Probability of Rejection Sampling}\label{match_RS}
\subsection{Distributed Matching Probabilities of RS} \label{sec:rs_distributed_matching}
 
Follow the setup in Section \ref{coupl_nocomm}, each party independently performs RS using the proposal distribution \( Q_Y(\cdot) \) to select indices \( K_A \) and \( K_B \) and set $(Y_A, Y_B)=(Y_{K_A}, Y_{K_B})$. We assume the bounding condition holds for both parties, i.e.  
$
\max_y\left({P^A_Y(y)}{Q^{-1}_Y(y)}, {P^B_Y(y)}{Q^{-1}_Y(y)}\right) \leq \omega,
$
Proposition \ref{prop:match_RS} shows the probability that they select the same index, given that \( Y_{K_A} = y \).
\begin{proposition} \label{prop:match_RS} Let $W, Q(.), P^A_Y(.)$ and $P^B_Y(.)$ defined as above. Then we have:
\begin{equation}
    \Pr(Y_{{A}} = Y_{{B}}| Y_{{A}} = y ) = \frac{\min (1, P^B_Y(y)/P^A_Y(y))}{1 + \mathrm{TV}(P^A_Y, P^B_Y)} \geq \frac{1}{2\left(1 + P^A_Y(y)/P^B_Y(y)\right)}
\end{equation}    
Furthermore, we have:
\begin{align}
    \Pr(Y_{K_{A}} = Y_{K_{B}}) = \frac{1-\mathrm{TV}(P^A_Y, P^B_Y)}{1+ \mathrm{TV}(P^A_Y, P^B_Y)}.
\end{align}
where $\mathrm{TV}(P^A_Y, P^B_Y)$ is the total variation distance between two distribution $P^A_Y$ and $ P^B_Y$.
\end{proposition}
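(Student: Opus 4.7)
The plan is a coupling/renewal argument built on the fact that both samplers share the same uniform $U_i$. First I would decompose each trial $(U_i, Y_i)$ into four mutually exclusive outcomes according to which of the two samplers accepts. Conditional on $Y_i = y$ the acceptance thresholds are $a(y) = P^A_Y(y)/(\omega Q_Y(y))$ for party $A$ and $b(y) = P^B_Y(y)/(\omega Q_Y(y))$ for party $B$; because $U_i$ is shared, the four events are $AB$ (both accept, $U_i \le \min(a,b)$), $A$-only, $B$-only, and $N$ (neither). Averaging $U_i \sim \mathcal{U}(0,1)$ and $Y_i \sim Q_Y$ and using the standard identities $\int \min(P^A_Y, P^B_Y)\,dy = 1 - \mathrm{TV}$ and $\int (P^A_Y - P^B_Y)^+\,dy = \mathrm{TV}$ yields the per-trial probabilities
\[
p_{AB} = \tfrac{1 - \mathrm{TV}}{\omega}, \qquad p_A = p_B = \tfrac{\mathrm{TV}}{\omega}, \qquad p_N = 1 - \tfrac{1 + \mathrm{TV}}{\omega},
\]
where $\mathrm{TV} := \mathrm{TV}(P^A_Y, P^B_Y)$.

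Next I would observe that $K_A = K_B = \tau$ iff trials $1, \ldots, \tau - 1$ are all of type $N$ and trial $\tau$ is of type $AB$: any earlier trial of type $A$-only or $B$-only would already have stopped one of the samplers. Summing the geometric series gives
\[
\Pr(K_A = K_B) = \frac{p_{AB}}{1 - p_N} = \frac{1 - \mathrm{TV}}{1 + \mathrm{TV}},
\]
which (since the $Y_i$'s are a.s. distinct for continuous $Q_Y$, so $Y_A = Y_B$ and $K_A = K_B$ coincide up to a null set) yields the second identity. Running the same renewal computation while keeping the joint density of the successful $Y_\tau$ produces
\[
\Pr\bigl(Y_A = Y_B,\, Y_A \in dy\bigr) = \sum_{\tau\ge 1} p_N^{\tau-1}\,\tfrac{\min(P^A_Y(y), P^B_Y(y))}{\omega}\,dy = \frac{\min(P^A_Y(y), P^B_Y(y))}{1 + \mathrm{TV}}\,dy,
\]
and dividing by the marginal $\Pr(Y_A \in dy) = P^A_Y(y)\,dy$ delivers the claimed equality for $\Pr(Y_A = Y_B \mid Y_A = y)$.

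For the lower bound I would combine $1 + \mathrm{TV} \le 2$ with the elementary fact that $\min(1, r) \ge r/(r+1) = 1/(1 + 1/r)$ for any $r > 0$ (both cases $r \lessgtr 1$ reduce to the trivial $1 + \min(r, 1/r) \ge 1$), applied at $r = P^B_Y(y)/P^A_Y(y)$. The step that requires the most care is the conditional one: $Y_A$ is read off at a random stopping time, so one must justify that the renewal identity for the joint density is legitimate and that identifying $\{Y_A = Y_B\}$ with $\{K_A = K_B\}$ is valid almost surely; both reduce to the continuity of $Q_Y$ making the $Y_i$'s distinct. In a discrete setting the matching probability can only be larger (extra matches arise from $K_A \neq K_B$ with $Y_{K_A} = Y_{K_B}$), so the stated lower bound is preserved.
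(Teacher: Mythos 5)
Your proposal is correct and follows essentially the same route as the paper's proof: couple the two samplers through the shared uniforms, note that matching requires all earlier trials to be jointly rejected (probability $1-\frac{1}{\omega}\int\max(P^A_Y,P^B_Y)$ per trial, i.e.\ your $p_N$) and the final trial to be jointly accepted with density $\min(P^A_Y,P^B_Y)/\omega$, sum the geometric series, and finish with the total-variation identities and the elementary bound $\min(1,r)\ge (1+1/r)^{-1}$ together with $\mathrm{TV}\le 1$. Your four-type per-trial decomposition and direct computation of the joint density are just a mild reorganization of the paper's $\gamma(y),\hat\gamma(y)$ argument, and your handling of $\{Y_A=Y_B\}$ versus $\{K_A=K_B\}$ in the continuous case matches the paper's closing remark.
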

This matching probability is not as strong, compared to PML as well as IML, details in Appendix \ref{compare_possion}\footnote{\citet{daliri2024coupling} also arrives to a similar conclusion but for discrete case, targeting a different problem.}. In the case of GRS, we provide an analysis via a non-trivial example in Appendix \ref{GRS_match}, where we demonstrate that it is possible to construct target and proposal distributions such that $\Pr(K_A {=} K_B\mid Y_{K_A} {=} y) \to 0.0$, even when $P^A_Y(y) = P^B_Y(y)$. In contrast, this probability is greater than $1/4$ for standard RS. In summary, while GRS and RS can achieve a coding cost in (\ref{com_cost_1}), its matching probability remains lower than that attainable by PML and IML.

\subsubsection{Proof.}
We denote by $K_{A}, K_{B}$ the index selected by parties $A$ and $B$, respectively. We first note that the event $\{K_{A}=K_{B}=i, Y_i = y\}$ is equivalent to the event $\{K_{A}=K_{B}=i, Y_{K_{A}}=y\}$, thus:
\begin{equation}
    \Pr(K_{A}=K_{B}=i|Y_{K_{A}}=y) = \frac{\Pr(K_{A}=K_{B}=i| Y_{i}=y)Q_Y(y)}{P^A_Y(y)},
\end{equation}
where the denominator is due to $Y_{K_{A}} \sim P^A_Y(.)$. Since:
\begin{align}
    \Pr(K_{A}=K_{B}|Y_{K_{A}}=y) &= \sum^{\infty}_{i=1}  \Pr(K_{A}=K_{B}=i|Y_{K_{A}}=y) \\
    &= \frac{Q_Y(y)}{P^A_Y(y)} \sum^{\infty}_{i=1}  P(K_{A}=K_{B}=i| Y_{i}=y)
\end{align}

We will later show that: 
\begin{align}\label{match_i}
    \Pr(K_{A}=K_{B}=i| Y_{i}=y){=} \frac{\min (P^A_Y(y), P^B_Y(y)) }{\omega Q_Y(y)} \left[1 - \frac{1}{\omega} \int \max (P^A_Y(y), P^B_Y(y))dy \right]^{i-1},
\end{align}
which gives us:
\begin{align}
    &\Pr(K_{A}{=}K_{B}|Y_{K_{A}}{=}y)\\ &{=} \frac{Q_Y(y)}{P^A_Y(y)} {\cdot } \frac{\min (P^A_Y(y), P^B_Y(y)) }{\omega Q_Y(y)} \sum^{\infty}_{i=1} \left[1 {-} \frac{1}{\omega} \int \max (P^A_Y(y), P^B_Y(y))dy \right]^{i-1} \\
    &= \frac{ \min (P^A_Y(y), P^B_Y(y)) }{ \omega P^A_Y(y)}\sum^{\infty}_{i=0} \left[1 - \frac{1}{\omega} \int \max (P^A_Y(y), P^B_Y(y))dy \right]^{i} \\
    &= \frac{ \min (P^A_Y(y), P^B_Y(y)) }{ \omega P^A_Y(y)} \frac{\omega}{ \int \max (P^A_Y(y), P^B_Y(y))dy  } \\
    &= \frac{\min (1, P^B_Y(y)/P^A_Y(y))}{\int \max (P^A_Y(y), P^B_Y(y))dy} \\
    &= \frac{\min (1, P^B_Y(y)/P^A_Y(y))}{1 + \mathrm{TV}(P^A_Y, P^B_Y)},
\end{align}
where $TV(P^A_Y, P^B_Y)$ is the total variation distance between $P^A_Y(.)$ and $P^B_Y(.)$. Using the inequality $\min(u,v) \geq \frac{uv}{u+v}$ and the fact that $TV(P^A_Y, P^B_Y)\leq 1$ gives us the latter inequality.

To show (\ref{match_i}), we first compute the following probabilities where $A$ and $B$ both accept/terminate a given sample $Y=y$: 
\begin{align}
    \gamma(y) &= \Pr(\text{$A$ and $B$ accepts } Y |Y=y) \\ 
                &= \Pr(U \leq \min (P^A_Y(y), P^B_Y(y))) |Y=y) \\
                &= \frac{\min (P^A_Y(y), P^B_Y(y)) }{\omega Q_Y(y)}
\end{align}
and,
\begin{align}
    \hat{\gamma}(y) &= \Pr(\text{$A$ and $B$ rejects } Y |Y=y) \\ 
                &= \Pr(U > \max (P^A_Y(y), P^B_Y(y))) |Y=y) \\
                &= 1 - \frac{\max (P^A_Y(y), P^B_Y(y)) }{\omega Q_Y(y)}
\end{align}

Then we have:
\begin{align}
    &\Pr(K_{A}=K_{B}=i| Y_{i}=y_i) \\&= \int \Pr(K_{A}=K_{B}=i| Y_{1:i}=y_{1:i}) Q_Y(Y_{1:i-1}=y_{1:i-1}| Y_{i}=y) d{y_{1:i-1}} \\
                                    &= \int \Pr(K_{A}=K_{B}=i|  Y_{1:i}=y_{1:i}) Q_Y(Y_{1:i-1}=y_{1:i-1}) d{y_{1:i-1}} \\
                                    &= \int \Pr(K_{A}=K_{B}=i|  Y_{1:i}=y_{1:i}) Q_Y(Y_{1:i-1}=y_{1:i-1}) d{y_{1:i-1}} \\
                                    &= \gamma(y_i) \int  \prod^{i-1}_{j=1}\hat{\gamma}(y_j)Q_Y(y_j) d{y_{1:i-1}} \\
                                    &= \gamma(y_i) \prod^{i-1}_{j=1} \int  \hat{\gamma}(y)Q_Y(y) d{y} \\
                                    &= \frac{\min (P^A_Y(y), P^B_Y(y)) }{\omega Q_Y(y)} \left[ \int  \left(1 - \frac{\max (P^A_Y(y), P^B_Y(y)) }{\omega Q_Y(y)}\right) Q_Y(y) d{y} \right]^{i-1} \\
                                    &= \frac{\min (P^A_Y(y), P^B_Y(y)) }{\omega Q_Y(y)} \left[1 - \frac{1}\omega{} \int \max (P^A_Y(y), P^B_Y(y)dy \right]^{i-1}
\end{align}
Finally, we note that: 
\begin{align}
    &\Pr(\text{$B$ outputs } y |\text{$A$ outputs } y) \\&=  \Pr(K_{B} = K_{A}  |Y_{K_{P_A}=y}) + \Pr(\text{party $B$ outputs } y, K_{B} \neq K_{A}|Y_{K_{A}=y})
\end{align}
Finally, note that in the case where $P_A(.), P_B(.)$ are continuous distribution, we have:
\begin{equation}
    \Pr(\text{party $B$ outputs } y, K_{P_B} \neq K_{P_A}|Y_{K_{P_A}=y})=0.0
\end{equation} This completes the proof.

\subsection{Comparision with Poisson Matching Lemma} \label{compare_possion}

We will compare the average matching probability $\Pr(K_A=K_B)$ between RS and PML in the continuous case. Starting from equation (30) in \cite{li2021unified} and assume $P^A_Y(y) \leq P^B_Y(y)$, we have:

\begin{align}
    &P(Y_A= Y_B=y) \\&= \Pr(K_A=K_B|Y_A=y) P(Y_A=y)\\
    &= \frac{1}{\int^{\infty}_{-\infty} \max\left\{\frac{P^A_Y(v)}{P^A_Y(y)}, \frac{P^B_Y(v)}{P^B_Y(y)}\right\} dv} \\
    &=\frac{P^A_Y(y)}{\int^{\infty}_{-\infty} \max\left\{{P^A_Y(v)}, \frac{P^B_Y(v)}{P^B_Y(y)}P^A_Y(y)\right\} dv}\\
    &\geq \frac{P^A_Y(y)}{\int^{\infty}_{-\infty} \max\left\{{P^A_Y(v)}, {P^B_Y(v)}\right\} dv} \quad (\text{Since we assume } P^A_Y(y) \leq P^B_Y(y))\\
    &=\frac{P^A_Y(y)}{1 + \mathrm{TV}(P^A_Y, P^B_Y) }
\end{align}
Repeating the same step for $P^A_Y(y) \geq P^B_Y(y)$, we have:

\begin{align}
    &P(Y_A= Y_B=y) \geq \frac{\min(P^A_Y(y), P^B_Y(y)}{1 + \mathrm{TV}(P^A_Y, P^B_Y)}
\end{align}

Taking the integral with respect to $y$ for both sides gives us the desired inequality where the RHS expression is the average matching probability of RS. Finally, the same conclusion holds for IML since the matching probability of IML converges to that of PML.

\section{Greedy Rejection Sampling.}
\subsection{Coding Cost}
Compared to the standard RS approach described above, GRS is a more well-known tool for channel simulation \cite{flamich2023adaptive, harsha2007communication}, as its runtime entropy, i.e., \( H[K] \), is significantly lower than that of standard RS. Unlike standard RS, where the acceptance probability remains the same on average at each step, GRS greedily accepts samples from high-density regions as early as possible (see \cite{flamich2023adaptive} for more details). Using these properties, \citet{flamich2023adaptive} provide the following upper bound on \( H[K] \), which generalizes the discrete version established by \citet{harsha2007communication}:
\begin{equation}
    H[K] \leq I[X;Y] + \log(I[X;Y] +1) + 4,
\end{equation}
which has a smaller constant compared to the bound for standard RS. We conclude with a note on the coding cost of GRS, highlighting that, unlike standard RS, which is relatively easy to implement in practice, GRS can be more challenging to deploy as it requires repeatedly computing a complex and potentially intractable integral.  

\subsection{Matching Probability in Greedy Rejection Sampling}\label{GRS_match}
\textbf{Setup.} Let the proposal distribution $Q_Y$ be a discrete uniform $\mathrm{Unif}[1,n]$, i.e. $Q_Y(y) = q = 1/n$ and $U \sim \mathcal{U}(0,1)$ as in standard RS. Then, we define $W$ as follow:
\begin{align}
    W = \{(Y_1, U_1), (Y_2,U_2),...\}
\end{align}
Our goal is to show that, for this proposal distribution $Q_Y$, there exists the target distributions $P^A_Y(.)$ and $ P^B_Y(.)$  such that the GRS matching probability $\Pr(Y_A=Y_B|Y_A=y) \xrightarrow[]{}0.0$ even when $P^A_Y(y)=P^B_Y(y)$. Let $n=2k+1$, we construct the following $P^A_Y$ and $P^B_Y$:
\begin{align}
    &P^A_Y(Y=1) = \frac{k+1}{2k+1},\quad P^A_Y(Y=i) = 
    \begin{cases} 
        \frac{1}{2k+1}, & \text{for } 1 < i \leq k + 1 \\ 
        0.0, & \text{for } i > k + 1
    \end{cases},\\
    &P^B_Y(Y=1) = \frac{k+1}{2k+1},\quad P^B_Y(Y=i) = 
    \begin{cases} 
        \frac{1}{2k+1}, & \text{for } i > k +1 \\ 
        0.0, & \text{for } 1 < i \leq k +1
    \end{cases},
\end{align}
where we visualize this in Figure \ref{grs_example}.

\begin{figure}[t]
    \centering
    \includegraphics[width=0.5\linewidth]{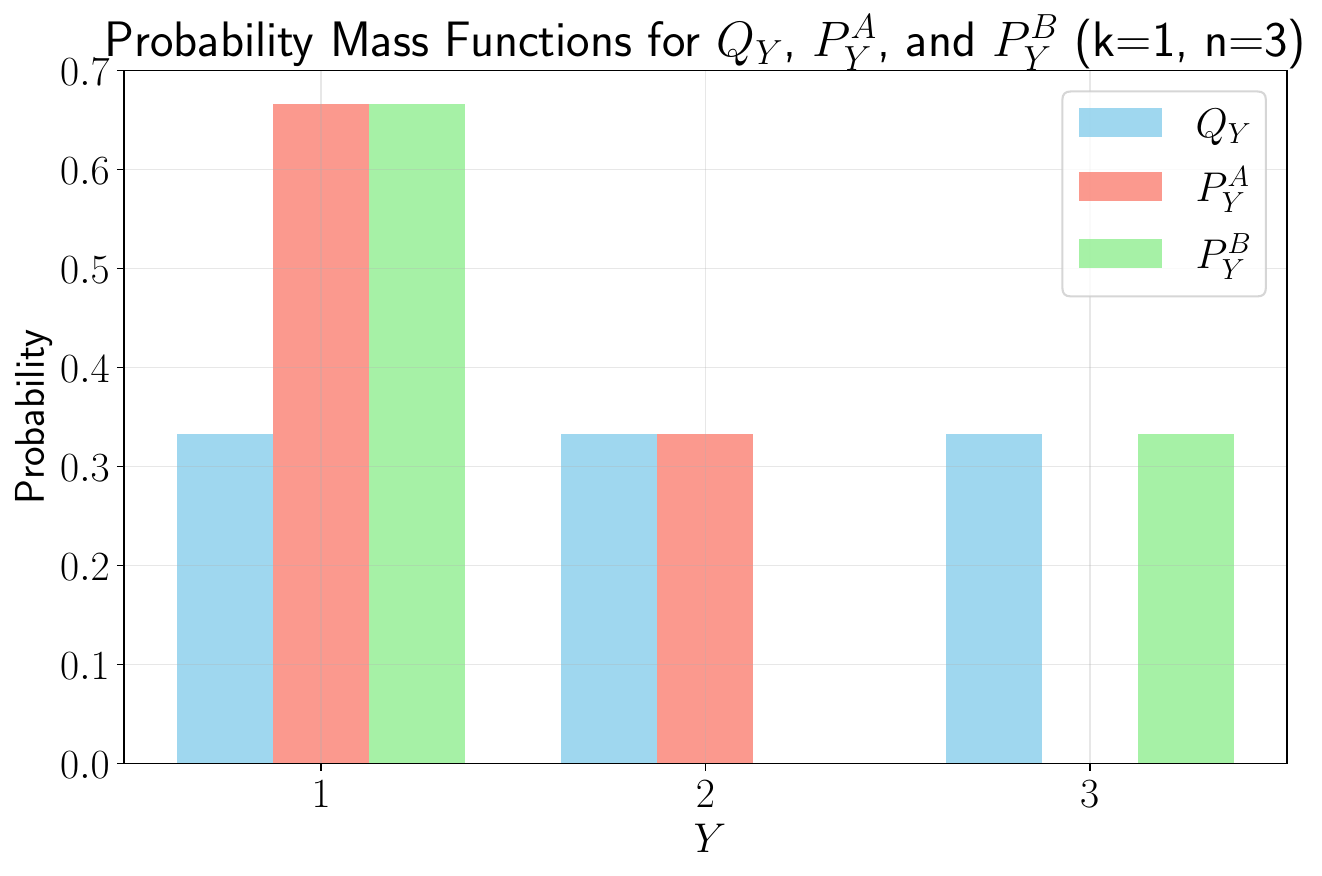}
    \caption{Visualization of example distributions in Section \ref{GRS_match} for $k=1$.}
    \label{grs_example}
\end{figure}
\textbf{GRS Matching Probability.} Given party $A$ has target distribution $P^{A}_Y(.)$ and party $B$ has target distribution $P^{B}_Y(.)$, with each running the GRS procedure to obtain their samples $Y_A, Y_B$ respectively. We want to characterize the probability that party $A$ and party $B$ outputs the same value, give party $A$'s output. We denote $K_A$ and $K_B$ as the index within $W$ that party $A$ and party $B$ select respectively, i.e., $Y_{K_A} = Y_A$ and $Y_{K_B}=Y_B$. 

Consider the event $Y_A=1$, with the construction above, we have the following properties:
\begin{itemize}
    \item If party $A$ and party $B$ both see the first proposal $Y_1=1$, they will greedily accept it, since $P^A_Y(Y=1)=P^B_Y(Y=1) \geq Q_Y(Y=1)$. So in this case: $$\Pr(K_A=K_B=1, Y_A=1) = Q_Y(Y=1) = \frac{1}{2k+1}$$
    \item On the other hand, if the first proposal $Y_1\neq 1$ then either party $A$ or $B$ must accept and output $Y_1 \neq 1$ since for $y\neq 1$, the probability distribution complement each other and equal to $Q_Y(y)=\frac{1}{2k+1}$. For example, for $n=3$ and $Y_2=2$,  then party $A$ will accept it while party $B$ must reject it. Therefore, we have:
    $$\Pr(K_A=K_B>1, Y_A=1) = 0.0.$$
    \item Finally, from the previous analysis, for any positive integers $i\neq j$, we have $$\Pr(K_A=i, K_B =j, Y_A=1, Y_B=1)=0.0,$$
    Indeed, consider $i=1$ then $\Pr(K_A=1, K_B =j, Y_A=1, Y_B=1)=0.0$ since both of them must accept the first proposal $Y_1=1$. On the other hand, if $i>1$ then we must have $j=1$ since we know that $Y_1\neq 1$ in this case and thus one of the party must stop. Since $i>1$,  it has to be party $B$ and in this case, $Y_B\neq 1$.
\end{itemize}
For this reason, we have:
\begin{align}
    &\Pr(Y_A=Y_B=1) \\&= \Pr(K_A=K_B, Y_A=1, Y_B=1) + \Pr(K_A\neq K_B, Y_A=1, Y_B=1)\\
    &=\Pr(K_A=K_B, Y_A=1) + \sum_{i\neq j}\Pr(K_A=i, K_B =j, Y_A=1, Y_B=1) \\
    &=\Pr(K_A=K_B, Y_A=1) \\
    &=\Pr(K_A=K_B=1, Y_A=1) + \Pr(K_A=K_B>1, Y_A=1)\\&= Q_Y(Y=1) \\
    &= \frac{1}{2k+1}
\end{align}
and hence:
\begin{align}
    \Pr(Y_A=Y_B|Y_A=1) =\frac{1}{k+1}
\end{align}
which approaches $0.0$ as $n\xrightarrow[]{}\infty$. Overall, due to its greedy selection approach, GRS may yield lower matching probabilities compared to other methods such as PML which we provide the analysis below.

\textbf{Matching Probability of PML. } In PML, the matching probability is \(\Pr(Y_A = Y_B \mid Y_A = 1) = 1\). This results from PML's more global selection process compared to GRS, as it evaluates all candidates comprehensively. In particular, let $W={(S_1,Y_1),...,(S_n,Y_n)}$ where $S_i\sim \mathrm{Exp}(1)$ and let $K_A,K_B$ be the value within $W$ that each party respectively select in this case. Note that the construction of $W$  in the discrete case for PML does not require $Q_Y$. The selection process according to PML is as follows:
\begin{align}
    K_A = \arg\min_{1\leq i\leq n} \frac{S_i}{P^A_Y(Y_i)} \quad K_B = \arg\min_{1\leq i\leq n} \frac{S_i}{P^B_Y(Y_i)}, 
\end{align}
and each party outputs $Y_A=Y_{K_A}, Y_B=Y_{K_B}$.
We see that if $K_A=1$, then we must have $K_B=1$. This is because for any $i>1$, we have $P^A_Y(Y=1)=P^B_Y(Y=1) > P^B_Y(Y=i)$  and $P^A_Y(Y=i)=P^B_Y(Y=i+1+k) $. Thus, this gives \(\Pr(Y_A = Y_B \mid Y_A = 1) = 1\).

\newpage
\section{ERS Coding Scheme}\label{analysis_ers}
\subsection{Prelimaries}\label{prelim} 
\begin{algorithm}[t]
\caption{Ensemble Rejection Sampling  - $\mathrm{ERS}(W; P_Y, Q_Y, \omega =\max_y \frac{P_Y(y)}{Q_Y(y)}, \mathrm{scale}=1 )$}
\label{alg:ERS}

\KwIn{ Target distribution $P_{Y}$,  Proposal distribution $Q_Y$, and the source of randomness $W$ (see Section~\ref{sec:background_ers}). Default value $\omega = \max_y \frac{P_Y(y)}{Q_Y(y)}$ unless override by some value $> \omega$. Default scaling factor $\mathrm{scale}=1$ unless override by some value within $(0,1]$.}
\KwOut{ Selected Index $K$ and sample $Y_K \sim P_{Y}$}

\begin{enumerate}[nosep,  leftmargin=*]
\item Observe batch $\{B_i, U_i\}$
\item Select candidate index $K^{\mathrm{cand}}_i$:
\begin{align}
K^{\mathrm{cand}}_i = \operatorname*{argmin}_{1 \leq k \leq N} \frac{S_{ik}}{\lambda_{ik}}, \quad \text{where:} \quad \lambda_{ik} = {\frac{P_{Y}(Y_{ik})}{Q_Y(Y_{ik})}}  \notag
\end{align}
\item Compute:
$$\hat{Z}(Y_{i,1:N}) = \sum^N_{k=1} \lambda_{ik}\hspace{2pt} , \quad \bar{Z}(Y_{i,1:N}, K^{\mathrm{cand}}_i) = \hat{Z}(Y_{i,1:N}) + \omega - \lambda_{i, K^{\mathrm{cand}}_i}$$
\item  Set $K_1=i$, $ K_2=K^{\mathrm{cand}}_i$, $K=(N{-}1)1i + K^{\mathrm{cand}}_i$ and return $Y_K$ if:
\[
U_i \leq \frac{\hat{Z}(Y_{i,1:N})}{\bar{Z}(Y_{i,1:N}, K^{\mathrm{cand}}_i)} \cdot \mathrm{scale},
\]
else repeat Step 1 with $B_{i+1}.$
\end{enumerate}
\end{algorithm}

We show the standard ERS algorithm in Algorithm \ref{alg:ERS}, following the original version introduced by \citet{deligiannidis2020ensemble} with a slight generalization in terms of the scaling factor ($0<\mathrm{scale} \leq 1$) that we will use for channel simulation purpose. This section begins by establishing some detailed quantities that will be used repeatedly. For simplicity, we use $P_x(.)$ for the target distribution $P_{Y|X=x}$ and $Q(.)$ for the proposal distribution. Let $\omega_x= \max_y P_x(y)/Q(y)$, we define the quantities:
\begin{align}
    \hat{Z}_x(y_{1:N}) = \sum^{N}_{j=1}\frac{P_x(y_j)}{Q(y_j)} , \quad \bar{Z}_x(y_{1:N}, k) =\hat{Z}_x(y_{1:N}) - \frac{P_x(y_k)}{Q(y_k)} + \omega_x
\end{align}
and denote the following constants:
\begin{align}
    \Delta_x = \mathbb{E}_{Y_{1:N}\sim Q}\left[\frac{N}{\bar{Z}_x(Y_{1:N}, 1)}\right], \quad \Delta = \frac{N}{N-1 + \omega},
\end{align}
where we recall that $\omega = \max_{x,y}P_x(y)/Q(y)$,   by Jensen's inequality we have the following:
\begin{align}
    \Delta_x \geq \frac{N}{N-1 + \omega_x} \geq \frac{N}{N-1 + \omega} = \Delta \text{ for every }x.
\end{align}
From this, we can see that as $N\xrightarrow{} \infty$, we achieve $\Delta \xrightarrow[]{} 1.0$. This value $\Delta$ turns out to be the average batch acceptance probability when we set $\mathrm{scale}=\frac{\Delta}{\Delta_x}$, which we elaborate on below. 

\textbf{Scaled Acceptance Probability.}  For the channel simulation setting in this section, we slightly modify the acceptance probability in Algorithm \ref{alg:ERS} (step 4) with a scaling factor $\mathrm{scale} = \frac{\Delta}{\Delta_x} \leq 1$ such that the average batch acceptance probability is the same, regardless of the target distribution $P_x$ \footnote{This is similar to the case of standard RS where we accept/reject based on the global ratio bound $\omega$ instead of $\omega_x$.}. In particular, the encoder selects the index according to:
\begin{align}
    K = \mathrm{ERS}(W; P_x, Q, \mathrm{scale}=\frac{\Delta}{\Delta_x}),
\end{align}
which means for a batch $i$ containing samples $Y_{i,1:N} = y_{1:N}$, we accept it within step 4 if:
\begin{align}
    \text{Accept if } U_i \leq \frac{\hat{Z}_x(y_{1:N})}{\bar{Z}_x(y_{1:N}, k)} \frac{\Delta}{\Delta_x},
\end{align}
where we modify the scaling $\mathrm{scale} = \frac{\Delta}{\Delta_x} \leq 1$ in Algorithm \ref{alg:ERS}, which is a constant and does not affect the resulting output distribution. The value of $k$ is determined via the Gumbel-Max selection procedure in Step 2. The intuition is, within every accepted batch without scaling, we randomly reject $(1-\mathrm{scale})$ of them. Formally, first consider the following  \textbf{ERS proposal distribution}:
\begin{align}
    \bar{Q}_{Y_{1:N},K}(y_{1:N}, k; x) &= \left(\frac{P_x(y_k)/Q(y_k)}{\sum^{N}_{j=1}P_x(y_j)/Q(y_j)}\right)\prod^N_{j=1}Q(y_j) \\
    &= \left( \frac{P_x(y_k)/Q(y_k)}{\hat{Z}_x(y_{1:N})} \right)\prod^N_{j=1}Q(y_j), 
\end{align}
where the first product in the RHS is the likelihood we obtain the samples $y_{1:N}$ from the original proposal distribution $Q_Y(.)$ and the ratio is due to the IS process.  Now, the \textbf{ERS target distribution} is $ \bar{P}_{Y_{1:N},K}(y_{1:N}, k; x)$ where
\begin{align}
    \bar{P}_{Y_{1:N},K}(y_{1:N}, k; x) &= \frac{1}{\alpha}\left (\frac{P_x(y_k)/Q(y_k)}{\hat{Z}_x(y_{1:N})}  \frac{\hat{Z}_x(y_{1:N})}{\bar{Z}_x(y_{1:N},k)} \frac{\Delta}{\Delta_x} \right)\prod^N_{j=1}Q(y_j) \\
    &= \left(\frac{P_x(y_k)/Q(y_k)}{\Delta_x\bar{Z}_x(y_{1:N},k)}\right)\prod^N_{j=1}Q(y_j) , \label{ERS_target_distribution}
\end{align}
which is the batch target distribution that yields $Y\sim P_{x}$ when no scaling occur (see \cite{deligiannidis2020ensemble}, Section 2.2), since the normalization factor $\alpha$ is:
\begin{align}
    \alpha &= \sum^N_{k=1}\int^{\infty}_{-\infty}  \left(\frac{P_x(y_k)/Q(y_k)}{\bar{Z}_x(y_{1:N},k)}\right) \frac{\Delta}{\Delta_x}\left(\prod^N_{j=1}Q(y_j)\right) dy_{1:N} \\
    &= N\frac{\Delta}{\Delta_x}\int^{\infty}_{-\infty}  \left(\frac{P_x(y_k)/Q(y_k)}{\bar{Z}_x(y_{1:N},1)}\right) \left(\prod^N_{j=1}Q(y_j)\right)  dy_{1:N}\quad \text{(Due to symmetry)}\\
    &= N\frac{\Delta}{\Delta_x}\int^{\infty}_{-\infty} \frac{1}{\bar{Z}_x(y_{1:N},1)} \left(\prod^N_{j=2}Q(y_j) \right) dy^N_2 \\
    &= \Delta 
\end{align}

It turns out that $\Delta$ is also the \textbf{batch acceptance probability} since:
\begin{align}
    \Pr(\text{Accept batch } B) &= \mathbb{E}_{(Y_{1:N},K) \sim \bar{Q}}\left[\frac{\Delta}{\Delta_x} \frac{\hat{Z}_x(y_{1:N})}{\bar{Z}_x(y_{1:N},k)}\right] \\
    &= \frac{\Delta}{\Delta_x} \sum^N_{k=1}\int^{\infty}_{-\infty}  \left(\frac{P_x(y_k)/Q(y_k)}{\bar{Z}_x(y_{1:N},k)}\right) \\
    &= \frac{\Delta}{\Delta_x} N\int^{\infty}_{-\infty}  \left(\frac{P_x(y_1)/Q(y_1)}{\bar{Z}_x(y_{1:N},1)}\right)
    \left(\prod^N_{j=1}Q(y_j)\right) dy_{1:N} \\
    &= \Delta \label{batch_accept_prob},
\end{align}
and it can be observed that, without the scaling factor $\frac{\Delta}{\Delta_x}$, the batch acceptance probability is $\Delta_x$. Finally, we can view the ERS as a standard RS procedure with proposal distribution $\bar{Q}_{Y^N_1,K}$ and target distribution $\bar{P}_{Y^N_1, K}$. 

\textbf{Harris-FKG/Chebyshev Inequality.} We introduce the following inequality (Harris-FKG/Chebyshev), which will be used in the proof:
\begin{proposition}
\label{prop:ERSstatic} For function $f,g$ on $Y\sim P(.)$ where $f$ is non-increasing and $g$ is non-decreasing, we have:
\[
\mathbb{E}[f(Y)g(Y)] \leq \mathbb{E}[f(Y)]\mathbb{E}[g(Y)]
\]
\end{proposition}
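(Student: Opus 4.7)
The plan is to use the classical ``doubling'' argument (also known as the two-copy trick) that gives the Chebyshev correlation inequality. The core observation is pointwise: if $f$ is non-increasing and $g$ is non-decreasing, then for any two values $y_1, y_2$ in the support of $P$, the product $(f(y_1) - f(y_2))(g(y_1) - g(y_2))$ is non-positive. Indeed, if $y_1 \geq y_2$ then $f(y_1) - f(y_2) \leq 0$ and $g(y_1) - g(y_2) \geq 0$, and the reverse holds when $y_1 \leq y_2$; in either case the product is $\leq 0$.

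The first step I would carry out is to introduce two independent copies $Y_1, Y_2 \overset{\mathrm{i.i.d.}}{\sim} P$ and apply the pointwise inequality to obtain
\[
\mathbb{E}\bigl[(f(Y_1) - f(Y_2))(g(Y_1) - g(Y_2))\bigr] \leq 0.
\]
The second step is to expand the product inside the expectation into four terms. Taking expectations and using the independence of $Y_1$ and $Y_2$, the cross terms factor as
\[
\mathbb{E}[f(Y_1)g(Y_2)] = \mathbb{E}[f(Y)]\,\mathbb{E}[g(Y)] = \mathbb{E}[f(Y_2)g(Y_1)],
\]
while the two diagonal terms each equal $\mathbb{E}[f(Y)g(Y)]$ by identical distribution. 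The final step is to rearrange these four quantities, which yields $2\mathbb{E}[f(Y)g(Y)] \leq 2\mathbb{E}[f(Y)]\mathbb{E}[g(Y)]$, giving the claim after dividing by $2$.

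There is no real obstacle here: the argument is essentially a one-line pointwise sign check followed by a linearity-of-expectation manipulation. The only implicit hypothesis I would want to mention is integrability of $f(Y)$, $g(Y)$, and $f(Y)g(Y)$ so that all expectations in the doubling identity are well-defined; this is used tacitly in the applications of the proposition elsewhere in the paper.
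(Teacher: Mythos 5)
Your proposal is correct and matches the paper's own argument: the paper also introduces two independent copies $Y_1, Y_2 \sim P$, uses the pointwise sign observation $[f(Y_1)-f(Y_2)][g(Y_1)-g(Y_2)] \leq 0$, and expands the expectation using independence to conclude. The integrability remark you add is a reasonable (tacit in the paper) hypothesis but does not change the substance.
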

\begin{proof}
    Let $Y_1, Y_2 \sim P(.)$ and they are independent. Then we have:
    \begin{equation}
        [f(Y_1) - f(Y_2)][g(Y_1) - g(Y_2)] \leq 0
    \end{equation}
    Hence:
    \begin{align}
        \mathbb{E}\{[f(Y_1) - f(Y_2)][g(Y_1) - g(Y_2)]\} \leq 0
    \end{align}
    This gives us:
    \begin{align}
        \mathbb{E}[f(Y_1)g(Y_1)] +  \mathbb{E}[f(Y_2)g(Y_2)]\leq \mathbb{E}[f(Y_1)]\mathbb{E}[g(Y_2)] + \mathbb{E}[f(Y_2)]\mathbb{E}[g(Y_1)],
    \end{align}
    which completes the proof.
\end{proof}



\subsection{Encoding $K_1$.} We encode $K_1$ the same way as the scheme for standard RS. Similar to standard RS, we encode $K_1$ into two messages. Specifically:
\begin{itemize}
    \item Step 1: the encoder sends the ceiling $L  = \lceil \frac{K_1}{\lfloor\Delta^{-1}\rfloor}\rceil $ to the decoder. The decoder then knows  $(L-1)\lfloor\Delta^{-1}\rfloor^{-1} + 1 \leq L \leq L\lfloor\Delta^{-1}\rfloor^{-1}$, i.e. $K_1$ is in chunk $L$ that consists of $\lfloor\Delta^{-1}\rfloor^{-1}$ batches. We have $\mathbb{E}[\log(L)] \leq 1$ bit.

    \item Step 2: The encoder and decoder both sort the uniform random variables $U_i$ within the selected chunk $(L-1)\lfloor\Delta^{-1}\rfloor^{-1}+ 1 \leq i \leq L\lfloor\Delta^{-1}\rfloor^{-1}$. Let the sorted list be $U_{\pi(1)} \leq U_{\pi(2)} \leq ...\leq U_{\pi(\lfloor\Delta^{-1}\rfloor)}$ where $\pi(.)$ is the mapping between the sorted index and the original unsorted one. The encoder sends the rank of $U_{K_1}$ within this list, i.e. sends the value $T$ such that $K_1=\pi(\hat{K}_1)$. The decoder receive $\hat{K}_1$ and retrieve $B_{K_1}$ accordingly. Section \ref{proof_T_Tp} shows the coding cost for this step.
\end{itemize}

We provide the detail analysis in Section \ref{proof_L_Tp} and \ref{proof_T_Tp}. Notice that the role $\Delta$ plays here is similar to that of $\omega$ in standard RS.
\subsubsection{Coding Cost of $L$}\label{proof_L_Tp}
Similar to RS, since each batch is accepted with probability $\Delta$ (see \eqref{batch_accept_prob}), this means:
$$\Pr(K_1> \ell \Delta^{-1}) = \left(1 - \Delta\right)^{\ell \lfloor\Delta^{-1}\rfloor} < 0.5^{-\ell},$$ which is equivalent to $\Pr(L > \ell) < 0.5^{-\ell}$. Note that we reuse the inequality in Appendix \ref{proof_floor_ineq}. We have:
\begin{equation}
    \mathbb{E}[L] = \sum^{\infty}_{\ell=0} \Pr(L > \ell) <1 + 0.5^{-1} + 0.5^{-2}+... = 2 ,
\end{equation}
implying $\mathbb{E}[\log L] \leq 1.$

\subsubsection{Coding Cost of $\hat{K}_1$}\label{proof_T_Tp}
We will show that:
\begin{align}
    \mathbb{E}[\log \hat{K}_1] &\leq\frac{N}{\Delta_x} \mathbb{E}_{Y_{1:N} \sim Q} \left[\frac{P_x(Y_1)/Q(Y_1)}{\bar{Z}_x(Y_{1:N}, 1) } \log\left(\frac{\hat{Z}_x(Y_{1:N})}{\Delta_x \bar{Z}_x(Y_{1:N},1) }\right) \right] \label{ERS_upperbound},
\end{align}
where we provide the result of (\ref{ERS_upperbound}) in Section \ref{ERS_upperbound_proof}.

\subsection{Encoding $K_2$.} 
 Given an accepted  batch $\{(Y_i,S_i)\}^N_{i=1}$ , we have:
\begin{align}
    K_2 &= \arg\min_{1\leq i\leq N} \frac{S_i}{\lambda_i}; \quad  \quad \Theta_P = \min_{1\leq i\leq N} \frac{S_i}{\lambda_i},    
\end{align}
where we have the weights $\lambda_i$ defined as:
\begin{align}
    \lambda_i = \frac{P(Y_i)}{Q(Y_i)} 
\end{align}

 After communicating the selected batch index $K_1$, the encoder and decoder sort the exponential random variables $\{S_{K_1,i}\}^N_{i=1}$, i.e.
\begin{align}
    S_{K_2, \pi(1)} \leq S_{K_2, \pi(2)} \leq ... \leq S_{K_2, \pi(N)},
\end{align}
and send the sorted index $\hat{K}_2$ of $K_2$, i.e. $\pi(\hat{K}_2) = K_2$. The decoder also performs the sorting operation and retrieve $K_2$ accordingly. Since $K_2$ are obtained from the batch selected by ERS, we analyze  $\mathbb{E}[\log \hat{K}'_2 | Y_{1:N} \text{ are selected}]$, where $\hat{K}'_2$ and $K'_2$ are defined the same as $\hat{K}_2$ and $K_2$ (follows the same Gumbel-Max procedure) but for arbitrary $N$ i.i.d. proposals $Y_{1:N} \sim Q(.)$. In this case:
$$\mathbb{E}[\log \hat{K}_2 ] 
    =\mathbb{E}[\log \hat{K}'_2 |Y_{1:N} \text{ are selected}]$$
Notice the following identity:
$$ \bar{P}(y_{1:N}, k_2; x) = P(Y_{1:N}=y_{1:N}|Y_{1:N} \text{ are selected}, K'_2=k_2) \Pr(K'_2=k_2|Y_{1:N} \text{ are selected}) $$
where $\bar{P}(y_{1:N}, k_2; x)$ is the ERS target distribution described previously in Appendix \ref{prelim}. Then, we obtain the following likelihood:
\begin{align}
   &P(Y_{1:N}=y_{1:N}| Y_{1:N} \text{ are selected}, K'_2=1)
\\&=\frac{\bar{P}(y_{1:N}, j; x)}{\Pr(K'_2=1|Y_{1:N} \text{ are selected})} \\
&=N\frac{P_x(y_1)/Q(y_1)}{\Delta_x \bar{Z}_x(y_{1:N},1)}\prod^N_{i=1}Q(y_i)
\end{align}
With this, we now bound the expectation term of interest $\mathbb{E}[\log \hat{K}_2]$ as follows:
\begin{align}
    &\mathbb{E}[\log \hat{K}_2 ] \\
    &=\mathbb{E}[\log \hat{K}'_2 |Y_{1:N} \text{ are selected}]\\
    &= \mathbb{E}[\log \hat{K}'_2 |Y_{1:N} \text{ are selected}, K'_2=1] \quad \text{( Due to Symmetry)}\\
    &= \mathbb{E}_{Y_{1:N}}[\mathbb{E}[\log \hat{K}'_2 |Y_{1:N} \text{ are selected}, K'_2=1, Y_{1:N}=y_{1:N}]] \\
    &= N\int^{\infty}_{-\infty}  \frac{{P_x(y_{1})}/{Q(y_{1})}}{\Delta_x \bar{Z}_x(y_{1:N}, 1)} \mathbb{E}[\log \hat{K}'_2|Y_{1:N} \text{ are selected},Y_{1:N}=y_{1:N}, K'_2=1] \left(\prod^N_{i=1} Q(y_i)\right)dy_{1:N}  \\
    &= N\int^{\infty}_{-\infty}  \frac{{P_x(y_{1})}/{Q(y_{1})}}{\Delta_x \bar{Z}_x(y_{1:N}, 1)} \mathbb{E}[\log \hat{K}'_2|Y_{1:N}=y_{1:N}, K'_2=1] \left(\prod^N_{i=1} Q(y_i)\right)dy_{1:N},
\end{align}
where the last equality is because, given $\{Y_{1:N}{=}y_{1:N}, K'_2{=}1\}$, the event $\{Y_{1:N} \text{ are selected}\}$ and the random variable $\hat{K}'_2$ are independent. In particular, the decision whether to accept a batch or not does not depends on the rank of $S_{K'_2}$, that is:
\begin{align}
    &\Pr(Y_{1:N} \text{ are selected} | Y_{1:N}=y_{1:N}, K'_2=1, \hat{K}'_2=k_2)\\ &= \Pr(Y_{1:N} \text{ are selected} | Y_{1:N}=y_{1:N}, K'_2=1)\\ &=\frac{\hat{Z}_x(y_{1:N})}{\bar{Z}_x(y_{1:N}, 1)} \frac{\Delta}{\Delta_x}
\end{align}

We then have: 
\begin{align}
&\mathbb{E}[\log \hat{K}'_2 |Y_{1:N} \text{ are selected}]\\
&{=}N\int^{\infty}_{-\infty} \prod^N_{i=1} Q(y_i) \frac{{P_x(y_1)}/{Q(y_1)}}{\Delta_x \bar{Z}_x(y_{1:N},1)} \left( \int^{\infty}_{0} e^{-\theta}\mathbb{E}[\log \hat{K}'_2|Y_{1:N}{=}y_{1:N}, K'_2{=}1, \Theta_P {=} \theta] d\theta \right) dy_{1:N},
\end{align}
since, given $Y_{1:N}$, $\Theta_P$ is independent of $K'_2$ and $\Theta_P {\sim} \text{Exp}(1)$ (see \cite{phan2024importance}, Appendix 18). We now provide an upperbound of $\mathbb{E}[\log \hat{K}_2|Y_{1:N}{=}y_{1:N}, K_2{=}1, \Theta_P {=} \theta]$, which follows the argument presented in \cite{phan2024importance}, and is repeated here. Applying Jensen's inequality, we have:
\begin{align}
    \mathbb{E}[\log \hat{K}'_2|Y_{1:N}=y_{1:N}, K'_2=1, \Theta_P = \theta] \leq \log\mathbb{E}[\hat{K}'_2|Y_{1:N}=y_{1:N}, K'_2=1, \Theta_P = \theta],
\end{align}
 We then rewrite $\hat{K}'_2$ as the following:
\begin{align}
    \hat{K}'_2 = |\{S_i < S_{K'_2}\}| +1 ,
\end{align}
which gives us:
\begin{align}
    &\mathbb{E}[\hat{K}'_2|Y_{1:N}=y_{1:N}, K'_2=1, \Theta_P = \theta]\\
    &= 1+ \mathbb{E}[|\{S_i <S_{K'_2}\}||Y_{1:N}=y_{1:N}, K'_2=1, \Theta_P = \theta]\\
    &= 1+ \mathbb{E}\left[\left|\left\{S_i < \theta \frac{P_x(Y_{K'_2})/Q(Y_{K'_2})}{\hat{Z}_x(Y_{1:N})}\right\}\right| \Bigg|Y_{1:N}=y_{1:N}, K'_2=1, \Theta_P = \theta\right] \\
    &= 1 + \sum^N_{i=2}\Pr\left(S_i < \theta \frac{P_x(Y_{K'_2})/Q(Y_{K'_2})}{\hat{Z}_x(Y_{1:N})}\Bigg|Y_{1:N}=y_{1:N}, K'_2=1, \Theta_P = \theta\right) \\
    &= 1 + \sum^N_{i=2}\Pr\left(S_i < \theta \frac{P_x(Y_{1})/Q(Y_{1})}{\hat{Z}_x(Y_{1:N})}\Bigg|Y_{1:N}=y_{1:N}, \frac{S_j}{\frac{P_x(y_j)/Q(y_j)}{\hat{Z}_x(y_{1:N})} } {\geq} \theta \text{ for } j{\neq} 1, \frac{S_{1}}{\frac{P_x(y_{1})/Q(y_{1})}{\hat{Z}_x(y_{1:N})} } {=} \theta  \right) \\
     &= 1 + \sum^N_{i=2}\Pr\left(S_i < \theta \frac{P_x(Y_{1})/Q(Y_{1})}{\hat{Z}_x(Y_{1:N})}\Bigg|Y_{1:N}{=}y_{1:N}, \frac{S_j}{\frac{P_x(y_j)/Q(y_j)}{\hat{Z}_x(y_{1:N})} } {\geq} \theta \text{ for } j{\neq} 1, \frac{S_{1}}{\frac{P_x(y_{1})/Q(y_{1})}{\hat{Z}_x(y_{1:N})} } {=} \theta  \right) \\
     &= 1 + \sum^N_{i=2}\Pr\left(S_i < \theta \frac{P_x(Y_{1})/Q(Y_{1})}{\hat{Z}_x(Y_{1:N})}\Bigg|Y_{1:N}=y_{1:N}, \frac{S_i}{\frac{P_x(y_i)/Q(y_i)}{\hat{Z}_x(y_{1:N})} } \geq \theta\right) 
\end{align}

Note that:
\begin{align}
    &\Pr\left(S_i < \theta \frac{P_x(Y_{1})/Q(Y_{1})}{\hat{Z}_x(y_{1:N})}\Bigg|Y_{1:N}=y_{1:N}, \frac{S_i}{\frac{P_x(y_i)/Q(y_i)}{\hat{Z}_x(y_{1:N})} } \geq \theta\right) \\ &= \mathbf{1}\left\{\theta \frac{P_x(Y_{1})/Q(Y_{1})}{\hat{Z}_x(y_{1:N})} \geq \theta \frac{P_x(Y_{i})/Q(Y_{i})}{\hat{Z}_x(y_{1:N})}\right\} \left[1- \exp\left(-\theta\frac{P_x(y_1)/Q(y_1) - P_x(y_i)/Q(y_i)}{\hat{Z}_x(y_{1:N})}\right)\right] \\
    &\leq 1- \exp\left(-\theta\frac{P_x(y_1)/Q(y_1) - P_x(y_i)/Q(y_i)}{\hat{Z}_x(y_{1:N})}\right) \\
    &\leq \frac{\theta [P_x(y_1)/Q(y_1)- P_x(y_i)/Q(y_i)]}{\hat{Z}_x(y_{1:N})} \\
    &\leq \frac{\theta P_x(y_1)/Q(y_1)}{\hat{Z}_x(y_{1:N})}
\end{align}
As such:
\begin{align}
      \mathbb{E}[\hat{K}'_2|Y_{1:N}=y_{1:N}, K'_2=1, \Theta_P = \theta] 
      &\leq 1 + \sum^N_{i=2}\frac{\theta P_x(y_1)/Q(y_1)}{\hat{Z}_x(y_{1:N})} \\
      &\leq 1+\frac{N\theta P_x(y_1)/Q(y_1)}{\hat{Z}_x(y_{1:N})}
\end{align}
and thus:
\begin{align}
    &\int^{\infty}_{0} e^{-\theta}\mathbb{E}[\log K|Y_{1:N}=y_{1:N}, K_2=1, \Theta_P = \theta] d\theta\\ &\leq \int^{\infty}_{0} e^{-\theta}\log\left(1+\frac{N\theta P_x(y_1)/Q(y_1)}{\hat{Z}_x(y_{1:N})}\right) d\theta   \\
    &\leq \log \left(\frac{NP_x(y_1)/Q(y_1)}{\hat{Z}_x(y_{1:N})} + 1\right) ,
\end{align}
which is due to Jensen's inequality for concave function $\log(.)$. Finally, we have:
\begin{align}
    &\mathbb{E}[\log \hat{K}_2]\\ &\leq \mathbb{E}_{Y_{1:N} \sim Q(.)}\left[\frac{NP_x(Y_1)/Q(Y_1)}{\Delta_x \bar{Z}_x(Y_{1:N},1)} \log \left(\frac{NP_x(Y_1)/Q(Y_1)}{\hat{Z}_x(Y_{1:N})} + 1\right) \right] \\
    &\leq \mathbb{E}_{Y_{1:N} \sim Q(.)}\left[\frac{NP_x(Y_1)/Q(Y_1)}{\Delta_x \bar{Z}_x(Y_{1:N},1)} \log \left(\frac{NP_x(Y_1)/Q(Y_1)}{\hat{Z}_x(Y_{1:N})}\right) + \frac{\log(e) \hat{Z}_x(Y_{1:N})}{NP_x(Y_1)/Q(Y_1)} \frac{NP_x(Y_1)/Q(Y_1)}{\Delta_x \bar{Z}_x(Y_{1:N},1)}  \right] \\
    &= \mathbb{E}_{Y_{1:N} \sim Q(.)}\left[\frac{NP_x(Y_1)/Q(Y_1)}{\Delta_x \bar{Z}_x(Y_{1:N},1)} \log \left(\frac{NP_x(Y_1)/Q(Y_1)}{\hat{Z}_x(Y_{1:N})}\right)\right] + \log(e)
\end{align}

The last inequality is due to the FKG inequality:

\begin{align}
    &\mathbb{E}_{Y_{1:N} \sim Q(.)}\left[\frac{\log(e)\hat{Z}_x(Y_{1:N})}{\Delta_x\bar{Z}_x(Y_{1:N},1)}\right]\\ &=\log(e)\mathbb{E}_{Y^N_2 \sim Q(.)}\left[\left(1+\sum^N_{i=2}\frac{P_x(Y_i)}{Q(Y_i)}\right)\left(\frac{1}{\Delta_x\bar{Z}_x(Y_{1:N},1)}\right)\right] \\&\leq \log(e) \mathbb{E}_{Y^N_2 \sim Q(.)}\left[1+\sum^N_{i=2}\frac{P_x(Y_i)}{Q(Y_i)}\right] \mathbb{E}_{Y^N_2 \sim Q(.)}\left[\frac{1}{\Delta_x\bar{Z}_x(Y_{1:N},1)}\right]\\
    &= \log(e)
\end{align}
So we have the bound on $\mathbb{E}[\log(\hat{K}_2)]$ as:
\begin{align}
    \mathbb{E}[\log(\hat{K}_2)] \leq \mathbb{E}_{Y_{1:N} \sim Q(.)}\left[\frac{NP_x(Y_1)/Q(Y_1)}{\Delta_x \bar{Z}_x(Y_{1:N},1)} \log \left(\frac{NP_x(Y_1)/Q(Y_1)}{\hat{Z}_x(Y_{1:N})}\right)\right] + \log(e)
\end{align}

\subsection{Total Coding Cost of $K$}\label{ers_coding_cost_proof}
We now provide an upperbound on the total coding cost of $K$. We have:
\begin{align}
    H(K|W) &= H(L, \hat{K}_1, \hat{K}_2|W) \\
    &\leq H(L|W) + H(\hat{K}_1|W) + H(\hat{K}_2|W) \\
    &\leq H(L) + H(\hat{K}_1) + H(\hat{K}_2)
\end{align}

For each of the message, we encode using Zipf distribution. Since $\mathbb{E}[\log (L)] \leq 1$, then: $$H(L) \leq 3$$

For $H(\hat{K}_1)$, we have:
\begin{align}
    H(\hat{K}_1) \leq \mathbb{E}_X[\mathbb{E}[\log(\hat{K}_1)]] + \log(\mathbb{E}_X[\mathbb{E}[\log(\hat{K}_1)]] + 1) + 1
\end{align}

and $H(\hat{K}_2)$, we have:
\begin{align}
    H(\hat{K}_2) \leq \mathbb{E}_X[\mathbb{E}[\log(\hat{K}_2)]] + \log(\mathbb{E}_X[\mathbb{E}[\log(\hat{K}_2)]] + 1) + 1
\end{align}
and thus we have:
\begin{align}
    &H(K|W) \\
    &\leq (\mathbb{E}_X[\mathbb{E}[\log(\hat{K}_1)]+\mathbb{E}[\log(\hat{K}_2)]]) {+} \log((\mathbb{E}_X[\mathbb{E}[\log(\hat{K}_1)]]+1)(\mathbb{E}_X[\mathbb{E}[\log(\hat{K}_2)]] +1)) {+} 5
\end{align}
By AM-GM inequality, we have:
\begin{align}
    &\log((\mathbb{E}_X[\mathbb{E}[\log(\hat{K}_1)]]+1)(\mathbb{E}_X[\mathbb{E}[\log(\hat{K}_2)]] +1)) \\&\leq \log(\frac{1}{4}(\mathbb{E}_X[\mathbb{E}[\log(\hat{K}_1)]]+1 + \mathbb{E}_X[\mathbb{E}[\log(\hat{K}_2)]] +1)^2) \\
    &=2\log(\mathbb{E}_X[\mathbb{E}[\log(\hat{K}_1)]] + \mathbb{E}_X[\mathbb{E}[\log(\hat{K}_2)]] + 2) - 2
\end{align}
We will show $\mathbb{E}[\log(\hat{K}_1)]+\mathbb{E}[\log(\hat{K}_2)] \leq D_{KL}(P_x||Q) + 3 + 2\log(e)$ at the end of this section. Given this, we have:
\begin{align}
    H(K|W) &\leq I(X;Y) + 3 + 2\log(e) + 2\log(I(X;Y) + 5 + 2\log(e)) -2 + 5\\
    &\leq I(X;Y) + 2\log(I(X;Y) + 8) + 9.
\end{align}
Since we are encoding 3 messages separately, we add $1$ bit overhead for each message and thus arrive to the constant $12$ as in the original result.

The rest is to bound $\mathbb{E}[\log(\hat{K}_1)]+\mathbb{E}[\log(\hat{K}_2)]$, note that:

\begin{align}
    &\mathbb{E}[\log(\hat{K}_1)]+\mathbb{E}[\log(\hat{K}_2)] \\
    &{\leq} 2\log(e) {+}  \mathbb{E}_{Y_{1:N} {\sim} Q(.)} \left[\frac{NP_x(Y_1)/Q(Y_1)}{\Delta _x\bar{Z}_x(Y_{1:N},1)}\left(\log\left(\frac{\hat{Z}_x(Y_{1:N})}{\Delta_x\bar{Z}_x(Y_{1:N},1)}\right) {{+}} \log \left(\frac{NP_x(Y_1)/Q(Y_1)}{\hat{Z}_x(Y_{1:N})}\right)\right)\right] \\
    &= 2\log(e) +  \mathbb{E}_{Y_{1:N} \sim Q(.)} \left[\frac{NP_x(Y_1)/Q(Y_1)}{\Delta_x \bar{Z}_x(Y_{1:N},1)}\log\left(\frac{NP_x(Y_1)/Q(Y_1)}{\Delta_x\bar{Z}_x(Y_{1:N},1)}\right)\right] \\
    &= 2\log(e) + \mathbb{E}_{Y_{1:N} \sim Q(.)} \left[\frac{NP_x(Y_1)/Q(Y_1)}{\Delta_x \bar{Z}_x(Y_{1:N},1)}\left(\log\frac{P_x(Y_1)}{Q(Y_1)} + \log\left(\frac{N}{\Delta_x \bar{Z}_x(Y_{1:N},1)}\right)\right)\right] \\
    &= 2\log(e) + D_{KL}(P_x||Q) + \mathbb{E}_{Y_{1:N} \sim Q(.)}\left[\frac{N}{\Delta_x \bar{Z}_x(Y_{1:N},1)}\log\left(\frac{N}{\Delta_x\Bar{Z}_x(Y_{1:N},1)}\right)\right]\\
    &= 2\log(e) +  D_{KL}(P_x||Q) + E_1
\end{align}

where: \begin{align}
    E_1 &= \mathbb{E}_{Y_{1:N} \sim Q(.)}\left[\frac{N}{\Delta_x \bar{Z}_x(Y_{1:N},1)}\log\left(\frac{N}{\Delta_x\bar{Z}_x(Y_{1:N},1)}\right)\right]
\end{align}

We will show in Appendix \ref{E1-proof} that: 
\begin{align}
    E_1 \leq 3
\end{align}
and thus:
\begin{align}
    \mathbb{E}[\log(\hat{K}_1)]+\mathbb{E}[\log(\hat{K}_2)]  \leq 2\log(e) + 3 + D_{KL}(P_x||Q)
\end{align} 

\subsubsection{Bound on $E_1$}\label{E1-proof}

We consider two cases, when the batch size $N \leq 7\omega_x$ and when $N > 7\omega_x$. 

\underline{\textit{Case 1}}: $N \leq 7\omega_x$

Recall that $\bar{Z}_x(Y_{1:N},1) > \omega_x$ and $\Delta_x \geq \frac{N}{N-1+\omega_x}$, we have:
\begin{align}
    \frac{N}{\Delta_x \bar{Z}_x(Y_{1:N},1)} &\leq \frac{N-1+\omega_x}{\omega_x} \\
    &< \frac{8\omega_x -1 }{\omega_x} \quad (\text{ Since } N \leq 7\omega)  \\
    &< 8 
\end{align}
Thus, we have:
\begin{align}
    E_1 &=  \mathbb{E}_{Y_{1:N} \sim Q(.)} \left[\frac{N}{\Delta_x\bar{Z}_x(Y_{1:N},1)}\log\left(\frac{N}{\Delta_x\bar{Z}_x(Y_{1:N},1)}\right)\right] \\
    &\leq  \mathbb{E}_{Y_{1:N} \sim Q(.)} \left[\frac{N}{\Delta_x\bar{Z}_x(Y_{1:N},1)}\log\left(8\right)\right] \\
    &= 3 \quad (\text{Since } \Delta_x = \mathbb{E}_{Y_{1:N} \sim Q(.)} \left[\frac{N}{\bar{Z}_x(Y_{1:N},1)}\right]), 
\end{align}
and hence $E_1 \leq 3$ bit. 

\underline{\textit{Case 2}}: $N > 7\omega$

To upper-bound $E_2$ in this regime, we first note that: 
\begin{equation}
    \Delta_x = \mathbb{E}_{Y_{1:N} \sim Q(.)} \left[\frac{N}{\bar{Z}_x(Y_{1:N},1)} \right] = \Pr(\text{Accept batch } B) \leq 1
\end{equation}

Another way to see this is through the following arguments:
\begin{align}
    &\mathbb{E}_{Y_{1:N} \sim Q(.)} \left[\frac{N}{\bar{Z}_x(Y_{1:N},1)} \right] \\ &= \mathbb{E}_{Y_{1:N} \sim Q(.)} \left[\frac{N P_x(Y_1)/Q(Y_1)}{\bar{Z}_x(Y_{1:N},1)} \right] \\
    &= \mathbb{E}_{Y_{1:N} \sim Q(.)} \left[\frac{N P_x(Y_1)/Q(Y_1)}{\hat{Z}_x(Y_{1:N})} \frac{\hat{Z}_x(Y_{1:N})}{\bar{Z}_x(Y_{1:N},1)} \right] \\
    &\leq \mathbb{E}_{Y_{1:N} \sim Q(.)} \left[\frac{N P_x(Y_1)/Q(Y_1)}{\hat{Z}_x(Y_{1:N})} \right] \quad \left(\text{Since } \frac{\hat{Z}_x(Y_{1:N})}{\bar{Z}_x(Y_{1:N},1)} \leq 1\right) \\
    &= \sum^N_{i=1} \mathbb{E}_{Y_{1:N} \sim Q(.)} \left[\frac{P_x(Y_i)/Q(Y_i)}{\hat{Z}_x(Y_{1:N})} \right] \quad (\text{Due to symmetry}) \\
    &= 1,
\end{align}
and as a consequence (which we will be using later), we have:
\begin{align}
    &\mathbb{E}_{Y_{1:N} \sim Q(.)}\left[\frac{N+1}{\omega_x + \hat{Z}_x(Y_{1:N})} \right] \\&= \mathbb{E}_{{Y^{N+1}_1 \sim Q(.)}}\left[\frac{N+1}{\omega_x + \hat{Z}_x(Y_{1:N})} \right] \\
    &\leq 1. \label{corol_ineq}
\end{align}
Then, observe that:
\begin{align}
    E_1  &= \mathbb{E}_{Y_{1:N} \sim Q(.)}\left[\frac{N}{\Delta_x \bar{Z}_x(Y_{1:N},1)}\log\left(\frac{N}{\Delta_x\Bar{Z}_x(Y_{1:N},1)}\right)\right] \\
    &= \frac{1}{\Delta_x} \mathbb{E}_{Y_{1:N} \sim Q(.)} \left[\frac{N}{\bar{Z}_x(Y_{1:N},1)}\log\left(\frac{N}{\bar{Z}_x(Y_{1:N},1)}\right)\right] +\log\frac{1}{\Delta_x}\\
    &\leq 3 \text{ bits}
\end{align}
where, to show the inequality at the end,  we will prove the following two inequalities:
\begin{align}
    &\log\frac{1}{\Delta_x} \leq \log \left(\frac{8}{7}\right)  \\ &\frac{1}{\Delta_x} \mathbb{E}_{Y_{1:N} \sim Q(.)} \left[\frac{N}{\bar{Z}_x(Y_{1:N},1)}\log\left(\frac{N}{\bar{Z}_x(Y_{1:N},1)}\right)\right] \leq \frac{16}{7},
\end{align}
and hence $E_2 \leq 3$ (bits). For the first inequality, we have:
\begin{align}
    \Delta_x &= \mathbb{E}_{Y_{1:N} \sim Q(.)} \left[\frac{N}{\bar{Z}_x(Y_{1:N},1)} \right] \\
    &\geq  \frac{N}{\mathbb{E}_{Y_{1:N} \sim Q(.)} [\bar{Z}_x(Y_{1:N},1)]}  \quad (\text{ Jensen's Inequality }  ) \\
    &= \frac{N}{N-1+\omega_x}  \\
    &\geq \frac{N}{N-1 +N/7} \quad \text{( Since } N > 7\omega_x) \\
    &\geq \frac{7}{8} \label{1/2 ineq}, 
\end{align}
hence, we have: \begin{align}
    \frac{1}{\Delta_x} \leq 8/7,
\end{align}which yields the first inequality after taking the $\log(.)$ in both sides.

For the second inequality, we begin by establishing the following key inequality:
\begin{equation}
    \frac{N}{\bar{Z}_x(Y_{1:N},1)} \leq \frac{2N}{\hat{Z}_x(Y_{1:N}) + \omega_x} \label{key_ineq},
\end{equation}
which is due to:

\begin{align}
    \frac{N}{\bar{Z}_x(Y_{1:N},1)} &= \frac{N}{\omega_x + \sum^N_{i=2}\frac{P_x(Y_i)}{Q(Y_i)}} \\
    &\leq \frac{N}{\omega_x + \frac{1}{2}\sum^N_{i=2}\frac{P_x(Y_i)}{Q(Y_i)}} \quad (\text{Since } \frac{P_x(Y_i)}{Q(Y_i)} \geq 0 \text{ for all } i)\\
    &= \frac{2N}{2\omega_x + \sum^N_{i=2}\frac{P_x(Y_i)}{Q(Y_i)} } \\
    &\leq \frac{2N}{\omega_x + \sum^N_{i=1}\frac{P_x(Y_i)}{Q(Y_i)}} \quad (\text{Since } \frac{P_x(Y_i)}{Q(Y_i)} \leq \omega \text{ for all } i)\\
    &=\frac{2N}{\hat{Z}_x(Y_{1:N}) + \omega_x},
\end{align}

Then, we have:
\begin{align}
    &\frac{1}{\Delta_x} \mathbb{E}_{Y_{1:N} \sim Q(.)} \left[\frac{N}{\bar{Z}_x(Y_{1:N},1)}\log\left(\frac{N}{\bar{Z}_x(Y_{1:N},1)}\right)\right]\ \\ &{\leq} \frac{8}{7} \mathbb{E}_{Y_{1:N} \sim Q(.)} \left[\frac{N}{\bar{Z}_x(Y_{1:N},1)}\log\left(\frac{N}{\bar{Z}_x(Y_{1:N},1)}\right)\right] \quad ( \text{Since } \Delta_x \geq \frac{7}{8} \text{ from (\ref{1/2 ineq})}) \\
    &{=}\frac{8}{7}\mathbb{E}_{Y_{1:N} \sim Q(.)}\left[\frac{NP_x(Y_1)/Q(Y_1)}{\bar{Z}_x(Y_{1:N},1)}\log \left(\frac{N}{\bar{Z}_x(Y_{1:N},1)}\right)\right]\\
    &{\leq} \frac{8}{7} \mathbb{E}_{Y_{1:N} \sim Q(.)}\left[\frac{NP_x(Y_1)/Q(Y_1)}{\bar{Z}_x(Y_{1:N},1)}\log \left(\frac{N}{\bar{Z}_x(Y_{1:N},1)} + 1\right)\right] \\
    &{\leq} \frac{8}{7} \mathbb{E}_{Y_{1:N} \sim Q(.)}\left[\frac{NP_x(Y_1)/Q(Y_1)}{\bar{Z}_x(Y_{1:N},1)}\log \left(\frac{2N}{\hat{Z}_x(Y_{1:N}) + \omega_x} + 1\right)\right] \hspace{2pt} (\text{Due to Inequality (\ref{key_ineq}) )} \\
    &{\leq} \frac{8}{7} \mathbb{E}_{Y_{1:N} \sim Q(.)}\left[\frac{NP_x(Y_1)/Q(Y_1)}{\hat{Z}_x(Y_{1:N})}\log \left(\frac{2N}{\hat{Z}_x(Y_{1:N}) + \omega_x} {+} 1\right)\right] \hspace{0pt} (\text{Since } \hat{Z}_x(Y_{1:N}) {\leq} \bar{Z}_x(Y_{1:N},1))\\
    &{=} \frac{8}{7}\sum^N_{i=1} \mathbb{E}_{Y_{1:N} \sim Q(.)}\left[\frac{P_x(Y_i)/Q(Y_i)}{\hat{Z}_x(Y_{1:N})}\log \left(\frac{2N}{\hat{Z}_x(Y_{1:N}) + \omega_x} + 1\right)\right] \quad (\text{Due to symmetry}) \\
    &{=} \frac{8}{7} \mathbb{E}_{Y_{1:N} \sim Q(.)}\left[\frac{\sum^N_{i=1}P_x(Y_i)/Q(Y_i)}{\hat{Z}_x(Y_{1:N})}\log \left(\frac{2N}{\hat{Z}_x(Y_{1:N}) + \omega_x} + 1\right)\right] \\
    &{=} \frac{8}{7} \mathbb{E}_{Y_{1:N} \sim Q(.)}\left[\log \left(\frac{2N}{\hat{Z}_x(Y_{1:N}) + \omega_x} + 1\right)\right] \\
    &{\leq} \frac{8}{7} \log  \left(\mathbb{E}_{Y_{1:N} \sim Q(.)}\left[\frac{2N}{\hat{Z}_x(Y_{1:N}) + \omega_x} + 1\right]\right) \quad (\text{Jensen's Inequality}) \\
    &{=} \frac{8}{7}\log  \left( 1 + \frac{2N}{N+1}\mathbb{E}_{Y_{1:N} \sim Q(.)}\left[\frac{N+1}{\hat{Z}_x(Y_{1:N}) + \omega_x}\right]\right) \\
    &{\leq} \frac{8}{7}\log  \left( 1 + \frac{2N}{N+1}\right) \hspace{2pt} (\text{Since } \mathbb{E}_{Y_{1:N} \sim Q(.)}\left[\frac{N+1}{\hat{Z}_x(Y_{1:N}) + \omega_x}\right] < 1 \text{ due to  Inequality (\ref{corol_ineq}})) \\
    &{\leq} \frac{8}{7}\log(4) \\ &= \frac{16}{7} \text{(bits)}
\end{align}
which completes the proof for this part.

\vspace{-10pt}
\subsubsection{Proof of Inequality (\ref{ERS_upperbound})}\label{ERS_upperbound_proof}
We first express the quantity $\mathbb{E}[\log \hat{K}_1]$ with conditional expectation. The accepted batch and selected local index $K_2$ are distributed according to ${Y}_{K_1, 1:N}, K_2 \sim \bar{P}_{Y_{1:N}, K; x}$, then:
\begin{align}
    &\mathbb{E}[\log \hat{K}_1]\\ &= \mathbb{E}[\mathbb{E}[\log \hat{K}_1 | {Y}_{K_1,1:N}=y_{1:N}, K_2=k_2]  ] \\
    &= \sum^N_{k_2=1} \int^{\infty}_{-\infty}  \left(\prod^N_{j=1, j\neq k_2} Q(y_j)\right) \frac{P_x(y_k)}{\bar{Z}_x(y_{1:N},k)\Delta_x } \mathbb{E}[\log \hat{K}_1 | Y_{K_1,1:N}=y_{1:N}, K_2=k_2] dy_{1:N} \\
    &= N \int^{\infty}_{-\infty}  \left(\prod^N_{j=2} Q(y_j)\right) \frac{P_x(y_1)}{\bar{Z}_x(y_{1:N},1)\Delta_x } \mathbb{E}[\log \hat{K}_1 | {Y}_{K_1,1:N}=y_{1:N}, K_2=1] dy_{1:N}
\end{align}
Notice that, since we accept a batch $i$ when $U_i \leq \frac{\hat{Z}_x(y_{1:N})}{ \bar{Z}_x(y_{1:N}, 1)} \frac{\Delta}{\Delta_x} $, we have that:
$$P(U_{K_1}=u | Y_{K,1:N}=y_{1:N}, K_2=1) = \frac{\bar{Z}_x(y_{1:N}, 1)}{\hat{Z}_x(y_{1:N})} \frac{\Delta_x}{\Delta},$$
then conditioning on $U_{K_1}$ for the last expectation term above:
\begin{align}
    &\mathbb{E}[\log \hat{K}_1 | Y_{K_1,1:N}=y_{1:N}, K_2=k] \\&= \int^{\infty}_{-\infty}   \mathbb{E}[\log \hat{K}_1 | {Y}_{K_1,1:N}=y_{1:N}, K_2=1, U_{K_1}=u] P(U_{K_1}=u|{Y}_{K_1,1:N}=y_{1:N}, K_2=1) du \\
    &= \int^{\frac{\hat{Z}_x(y_{1:N})}{\bar{Z}_x(y_{1:N},1)} \frac{\Delta}{\Delta_x}}_{0}   \mathbb{E}[\log \hat{K}_1 | {Y}_{K_1,1:N}=y_{1:N}, K_2=1, U_{K_1}=u]\frac{\bar{Z}_x(y_{1:N},1)}{\hat{Z}_x(y_{1:N})} \frac{\Delta_x}{\Delta}du \\
    &\leq \frac{\Delta_x}{\Delta} \int^{\frac{\hat{Z}_x(y_{1:N})}{\bar{Z}_x(y_{1:N},1)}\frac{\Delta}{\Delta_x}}_{0}  \frac{\bar{Z}_x(y_{1:N},1)}{\hat{Z}_x(y_{1:N})}  \log\left[ 1 + \frac{u}{\Delta }\right] du  \quad \text{(See the Sort Coding bound below.)}\label{ordering_ERS_ineq}\\
    &\leq \frac{\Delta_x}{\Delta} \int^{\frac{\hat{Z}_x(y_{1:N})}{\bar{Z}_x(y_{1:N},1)}\frac{\Delta}{\Delta_x}}_{0}  \frac{\bar{Z}_x(y_{1:N},1)}{\hat{Z}_x(y_{1:N})}  \log\left[ 1 + \frac{\hat{Z}_x(y_{1:N})}{\Delta_x \bar{Z}_x(y_{1:N},1) }\right]du \\
    &=   \log\left[ 1 + \frac{\hat{Z}_x(y_{1:N})}{\Delta_x \bar{Z}_x(y_{1:N},1) }\right],
\end{align}
Finally, we have:
\begin{align}
    &\mathbb{E}[\log \hat{K}_1] \\
    &= N \int^{\infty}_{-\infty}  \left(\prod^N_{j=2} Q(y_j)\right) \frac{P_x(y_1)}{\bar{Z}_x(y_{1:N},1)\Delta_x } \log\left[ 1 + \frac{\hat{Z}_x(y_{1:N})}{\Delta_x \bar{Z}_x(y_{1:N},1) }\right] dy_{1:N} \\
    &\leq N \int^{\infty}_{-\infty}  \left(\prod^N_{j=2} Q(y_j)\right) \frac{P_x(y_1)}{\bar{Z}_x(y_{1:N},1)\Delta_x } \left (\log\left[ \frac{\hat{Z}_x(y_{1:N})}{\Delta_x \bar{Z}_x(y_{1:N},1) }\right] + {\log e }\frac{\Delta_x \bar{Z}_x(y_{1:N},1)}{\hat{Z}_x(y_{1:N})}\right) dy_{1:N} \\
    &= N \int^{\infty}_{-\infty}  \left(\prod^N_{j=2} Q(y_j)\right) \frac{P_x(y_1)}{\bar{Z}_x(y_{1:N},1)\Delta_x } \log\left[ \frac{\hat{Z}_x(y_{1:N})}{\Delta_x \bar{Z}_x(y_{1:N},1) }\right] dy_{1:N}  + \log(e) \\
    &= \frac{N}{\Delta_x} \mathbb{E}_{Y_{1:N} \sim Q} \left[\frac{P_x(Y_1)/Q(Y_1)}{\bar{Z}_x(Y_{1:N}, 1) } \log\left(\frac{\hat{Z}_x(Y_{1:N})}{\Delta_x \bar{Z}_x(Y_{1:N},1) }\right) \right]
\end{align}

We show the proof for \eqref{ordering_ERS_ineq} below.

\textbf{Sort Coding Bound. }To bound the expectation term, we first apply Jensen's inequality and conditioning on the accepted chunk of batches $L=\ell $:
\begin{align}
    &\mathbb{E}[\log \hat{K}_1 | {Y}_{K_1,1:N}=y_{1:N}, K_2=1, U_{K_1}=u] \\&\leq \log(\mathbb{E}[\hat{K}_1 | {Y}_{K_1,1:N}=y_{1:N}, K_2=1, U_{K_1}=u]) \\
    &= \log(\mathbb{E}[\hat{K}_1 | {Y}_{K_1,1:N}=y_{1:N}, K_2=1, U_{K_1}=u]) \\
    &= \log(\mathbb{E}_L[\mathbb{E}[\hat{K}_1 | {Y}_{K_1,1:N}=y_{1:N}, K_2=1, U_{K_1}=u, L=\ell]])
\end{align}

We now repeat the previous argument in standard RS. Specifically, given  $\hat{K}_1$ is within the range $L = \ell$ and $U_{K_1}=u$, we can express $\hat{K}_1$ as follows:
\begin{align}
    \hat{K}_1 &= |\{U_i < u, (\ell-1)\lfloor\Delta^{-1}\rfloor+ 1\leq i \leq \ell\lfloor\Delta^{-1}\rfloor \}| + 1,   \\
    &= \Omega(u, \ell) + 1
\end{align}
i.e. the number of $U_i$  (plus 1 for the ranking) within the range $L$ that has value lesser than $u$. 

We can see that the the index $i$ within the range $L$ satisfying $U_i<u$ are from the indices that are either (1) rejected, i.e. index $i< \hat{K}_1$  or (2) not examined by the algorithm, i.e. index $i > \hat{K}_1$. The rest of this proof will show the following bound:
\begin{align}
    \mathbb{E}[\Omega(u, \ell)|{Y}_{K_1,1:N}=y_{1:N}, K_2=1, U_{K_1}=u, L=\ell] \leq \Delta^{-1} u, \text{ for any } \ell
\end{align}
For readability, we split the proof into different proof steps. 

\textbf{Proof Step 1: } We  condition on the mapped index of $\pi(\hat{K})$ on the original array:
\begin{align}
    &\mathbb{E}[\hat{K}_1 | {Y}_{K_1,1:N}=y_{1:N}, K_2=1, U_{K_1}=u, L=\ell]\\ &= \mathbb{E}_{\pi(\hat{K}_1)}\left[\mathbb{E}[\hat{K}_1 \mid {Y}_{K_1,1:N}=y_{1:N}, K_2=1, U_{K_1}=u, L=\ell,  \pi(\hat{K}_1) = {k}_1]\right] \\
    &= \mathbb{E}_{\pi(\hat{K}_1)} \left [  \mathbb{E}[\Omega(u, \ell) + 1 \mid  {Y}_{K_1,1:N}=y_{1:N}, K_2=1, U_{K_1}=u, L=\ell,  \pi(\hat{K}_1) = {k}_1]  \right]\\
    &= \mathbb{E}_{\pi(\hat{K}_1)} \left [  \mathbb{E}[\Omega(u, \ell) \mid  {Y}_{K_1,1:N}=y_{1:N}, K_2=1, U_{K_1}=u, L=\ell,  \pi(\hat{K}_1) = {k}_1]  \right] + 1 \\
    &= \mathbb{E}_{\pi(\hat{K}_1)} \left [  \mathbb{E}[\Omega_1(u, \ell,{k}_1) + \Omega_2(u, \ell, {k}_1) \mid  {Y}_{K_1,1:N}=y_{1:N}, K_2=1, U_{K_1}=u, L=\ell,  \pi(\hat{K}_1) = {k}_1]  \right] + 1,
\end{align}
where $\Omega_1(u, \ell, {k}_1), \Omega_2(u, \ell, {k}_1)$ are the number of $U_i < u$ within the range $L=\ell$ that occurs before and after the selected index ${k}_1$ respectively. Specifically:
\begin{align}
    \Omega_1(u,\ell,{{k}_1}) &= |\{U_i < u, (\ell-1)\lfloor\Delta^{-1}\rfloor + 1\leq i < (\ell-1)\lfloor\Delta^{-1}\rfloor  + {k}_1 \}| \\
    \Omega_2(u,\ell,{k}_1) &= |\{U_i < u, (\ell-1)\lfloor\Delta^{-1}\rfloor  + {k}_1  + 1\leq i \leq \ell\lfloor\Delta^{-1}\rfloor  \}|, 
\end{align}
 which also naturally gives  $\Omega(u,\ell) = \Omega_1(u, \ell, {k}_1) + \Omega_2(u, \ell, {k}_1)$.
 
\textbf{Proof Step 2: } Consider $\Omega_2(u,\ell,{k}_1)$, since each proposal $(Y_{i, 1:N},U_i)$ is i.i.d distributed and the fact that ${k}_1$ is the index of the \textit {first accepted batch},  for every $i > k_1$, we have:
 $$\Pr(U_i<u\mid \bar{Y}_{1:N}=y_{1:N}, K_2=1, U_{K_1}=u, L=\ell,  \pi(\hat{K}_1) = k_1) = \Pr(U_i < u)$$  
 This gives us: 
 \begin{align}
     &\mathbb{E}[\Omega_2(u, \ell,{k}_1) \mid {Y}_{K_1,1:N}=y_{1:N}, K_2=1, U_{K_1}=u, L=\ell,  \pi(\hat{K}_1) = {k}_1]  \\&= (\lfloor\Delta^{-1}\rfloor -{k}_1) \Pr (U < u) \\
     &= (\lfloor\Delta^{-1}\rfloor -{k}_1)  u \\
     &\leq  \frac{(\lfloor\Delta^{-1}\rfloor -{k}_1) u}{\Pr (\text{Batch is rejected})}\\
     &\leq  \frac{(\lfloor\Delta^{-1}\rfloor -{k}_1) u}{1 - \Delta }
 \end{align}
\textbf{Proof Step 3: } For $\Omega_1(u,\ell,\hat{k}_1)$, we do not have such independent property since for every batch with index $i < k_1$, we know that they are rejected batches, and hence for $i < k_1$:
\begin{align}
   &\Pr(U_i<u\mid  {Y}_{K_1,1:N}=y_{1:N}, K_2=k, U_{K_1}=u, L=\ell,  \pi(\hat{K}_1) = {k}_1) \\ &= \Pr(U_i < u | Y_{i, 1:N} \text{ is rejected} ) \label{Delta_1_eq} \\
   &= \frac{\Pr(U_i < u , Y_{i, 1:N} \text{ is rejected})}{\Pr(Y_{i,1:N} \text{ is rejected})} \\
   &\leq \frac{\Pr(U_i < u) }{\Pr(Y_{i,1:N} \text{ is rejected})} \\
   &= \frac{u }{1-\Delta},
\end{align}
which gives us:
\begin{align}
    \mathbb{E}[\Omega_2(u, \ell,{k}_1) \mid {Y}_{K_1,1:N}=y_{1:N}, K_2=1, U_{K_1}=u, L=\ell,  \pi(\hat{K}_1) = {k}_1] \leq \frac{({k}_1-1)u}{1-\Delta}
\end{align}

To prove Equation (\ref{Delta_1_eq}), note that the following events are equivalent:
\begin{align}
    &\{ Y_{K_1,1:N} = y_{1:N}, K_2=1, U_{K_1} = u, L = \ell, \pi(\hat{K}_1) = k_1\} \\ &= \{ Y_{k_1,1:N} = y_{1:N}, K_2=1,  U_{k} = u, B_{1,..,k-1} \text{ are rejected}\} \\
    &\triangleq \Lambda(u,y, k_1)
\end{align}
Here, we note that $Y_{k_1},U_{k_1}$ denote the value at batch index $k$ within $W$, which is different from $Y_{K_1},U_{K_1}$, the value selected by the rejection sampler. Hence:
\begin{align}
    &\Pr(U_i<u|\Lambda(u,y, k_1)) \\
    &= \frac{\Pr(U_i<u, B_{1...k_1-1} \text{ are rejected} |Y_{k,1:N}=y_{1:N}, U_{k}=u, K_2=1)}{\Pr( B_{1...k_1-1} \text{ are rejected} |Y_{k,1:N}=y_{1:N}, U_{k}=u, K_2=1)}\\
    &= \frac{\Pr(U_i<u, \ B_{1...k_1-1} \text{ are rejected})}{\Pr( B_{1...k_1-1} \text{ are rejected})} \quad\text{(Since } (Y_i,U_i)\text{ are i.i.d)}\\
    &= \Pr(U_i < u | B_i \text{ is rejected} ),
\end{align} 

\textbf{Proof Step 4: } From the above result from Step 2 and 3, we have $\Omega(u, \ell) = \Omega_1(u, \ell, {k}) + \Omega_2(u, \ell, {k}) \leq \frac{(\lfloor\Delta^{-1}\rfloor -1) u}{1 - \Delta }$ and as a result:
\begin{align}
    \mathbb{E}[\log \hat{K}_1 | {Y}_{K_1,1:N}=y_{1:N}, K_2=1, U_{K_1}=u]  &\leq \frac{(\lfloor\Delta^{-1}\rfloor -1) u}{1 - \Delta } + 1\\
     &\leq \frac{(\Delta^{-1} -1) u}{1 - \Delta }  +1 \quad (\text{Since} \lfloor\Delta^{-1}\rfloor \leq \Delta^{-1})\\
     &= \Delta^{-1} u + 1
\end{align}
which completes the proof.


\newpage
\section{ERS Matching Lemmas}

\subsection{Preliminaries} We begin by providing the following bounds on inverse moments of averages. 
\begin{proposition}\label{inv_ineq_1}
    Let $Y_1,Y_2,...,Y_N \sim Q_Y(.)$ and suppose the target distribution $P_Y$ satifies:
    \begin{align}
     &d_2(Q_Y||P_Y) \triangleq \mathbb{E}_{Y\sim Q_Y(.)}\left[\frac{Q_Y(Y)}{P_Y(Y)}\right] < \infty, \quad
    \end{align}
    then we have:
    \begin{align}
        &\mathbb{E}_{Y_{1:N}\sim Q_Y(.)}\left[\frac{N}{\sum^N_{i=1} \frac{P_Y(Y_i)}{Q_Y(Y_i)}}\right] \leq d_2(Q_Y||P_Y).
    \end{align}
\end{proposition}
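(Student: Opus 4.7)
The plan is to reduce the claim to a one-line application of the AM--HM inequality (equivalently, Jensen's inequality for the convex function $x\mapsto 1/x$ on $(0,\infty)$), followed by an exchange of expectation and evaluation of a single-sample integral.

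First I would introduce the shorthand $Z_i = P_Y(Y_i)/Q_Y(Y_i)$, which is almost surely positive under $Q_Y$ (the finiteness of $d_2(Q_Y\|P_Y)$ forces $P_Y(y)>0$ whenever $Q_Y(y)>0$, so the ratio is well defined $Q_Y$-a.s.). Note that the quantity we want to bound is $\mathbb{E}[N/\sum_{i=1}^N Z_i]$, while
$$\mathbb{E}[1/Z_1] \;=\; \int Q_Y(y)\,\frac{Q_Y(y)}{P_Y(y)}\,dy \;=\; d_2(Q_Y\|P_Y).$$

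Next I would apply the deterministic AM--HM inequality to the positive numbers $Z_1,\dots,Z_N$:
$$\frac{1}{N}\sum_{i=1}^N Z_i \;\geq\; \frac{N}{\sum_{i=1}^N 1/Z_i},$$
which rearranges to the pointwise bound
$$\frac{N}{\sum_{i=1}^N Z_i} \;\leq\; \frac{1}{N}\sum_{i=1}^N \frac{1}{Z_i}.$$
(This is exactly Jensen's inequality for the convex map $x\mapsto 1/x$ applied to the empirical distribution on $\{Z_1,\dots,Z_N\}$.) Taking $\mathbb{E}_{Y_{1:N}\sim Q_Y}$ on both sides, and using that the $Z_i$ are i.i.d., the right-hand side equals $\mathbb{E}[1/Z_1]=d_2(Q_Y\|P_Y)$, yielding the desired bound.

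There is essentially no obstacle: the only care needed is justifying the interchange of expectation with the finite sum (immediate by linearity, since each term is nonnegative) and verifying that $1/Z_1$ is integrable, which is exactly the hypothesis $d_2(Q_Y\|P_Y)<\infty$. The same two-line argument will reappear (with the denominator shifted by $\omega$ or by a single removed term) when bounding the analogous quantities $\mathbb{E}[N/\bar{Z}_x(Y_{1:N},1)]$ that govern the batch acceptance probability $\Delta_x$, so framing the proof cleanly via AM--HM here sets up those later invocations.
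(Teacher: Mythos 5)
Your proof is correct and follows essentially the same route as the paper: the paper derives the same pointwise bound $\frac{N}{\sum_i P_Y(Y_i)/Q_Y(Y_i)} \leq \frac{1}{N}\sum_i Q_Y(Y_i)/P_Y(Y_i)$ (citing Cauchy--Schwarz, which is equivalent to your AM--HM/Jensen step) and then takes expectations using the i.i.d.\ structure. The only difference is the name attached to the elementary inequality, so there is nothing substantive to change.
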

\begin{proof}
    Applying the Cauchy-Schwarz inequality, we have:
    \begin{align}
        \frac{N}{\sum^N_{i=1} \frac{P_Y(Y_i)}{Q_Y(Y_i)}} \leq \frac{1}{N}\sum^N_{j=1} \frac{Q_Y(Y_i)}{P_Y(Y_i)}
    \end{align}
    Taking the expectation of both sides yield the desired inequality.
\end{proof}

\begin{remark}
    In general, stronger results on the inverse moments of averages exist under weaker moment assumptions, specifically:  
\[
\mathbb{E}_{Y\sim Q_Y} \left[\left(\frac{Q_Y(Y)}{P_Y(Y)}\right)^{\eta}\right]
\]  
is finite for some \(\eta < 0\). The resulting bound has a similar form (some power terms involved) to that of Proposition \ref{inv_ineq_1} but requires a mild threshold on \(N\). For further details, see Proposition A.1 in \cite{deligiannidis2024importance}.
\end{remark}
We show an application of Proposition \ref{inv_ineq_1}, which we will use repeatly:
\begin{corollary}\label{inv_ineq_2}
    Let $Y_1,Y_2,...,Y_N \sim Q_Y(.)$ and suppose the target distributions $P^A_Y, P^B_Y$ satisfy:
    \begin{align}
     &d_2(Q_Y||P^A_Y) \triangleq \mathbb{E}_{Y\sim Q_Y(.)}\left[\frac{Q_Y(Y)}{P^A_Y(Y)}\right] < \infty, \quad \text{and} \quad \frac{P^A_Y(y)}{Q_Y(y)}, \frac{P^B_Y(y)}{Q_Y(y)} \leq \omega \text{ for all } y.
    \end{align}
    Then, for any $N \geq 1$,
    \begin{align}
        \mathbb{E}_{Y_{1:N}\sim Q_Y(.)}\left[\frac{\sum^N_{j=1} \frac{P^B_Y(Y_j)}{Q_Y(Y_j)}}{\sum^N_{i=1} \frac{P^A_Y(Y_i)}{Q_Y(Y_i)}}\right] 
        \leq \mathbb{I}_N(\omega, 1) \cdot d_2(Q_Y||P_Y),
    \end{align}
    where we define $\mathbb{I}_N(\omega, i) \triangleq (2 \mathbbm{1}_{N>i} + \omega \mathbbm{1}_{N=i}$).
\end{corollary}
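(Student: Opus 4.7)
The plan is to split into two cases according to the value of $N$.

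\textbf{Case $N=1$.} The ratio collapses to $P^B_Y(Y_1)/P^A_Y(Y_1)$. Using the bounding assumption $P^B_Y(y) \leq \omega Q_Y(y)$ pointwise, I can replace $P^B_Y$ by $\omega Q_Y$ in the numerator and recognize the resulting expression under $Y_1 \sim Q_Y$ as $\omega \, d_2(Q_Y\Vert P^A_Y)$. This matches the prefactor $\omega = \mathbb{I}_1(\omega,1)$.

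\textbf{Case $N \geq 2$.} Write $r^A_i := P^A_Y(Y_i)/Q_Y(Y_i)$ and $r^B_j := P^B_Y(Y_j)/Q_Y(Y_j)$. By linearity and symmetry, it suffices to bound a single term of the form
\begin{align*}
\mathbb{E}\!\left[\frac{r^B_j}{\sum_{i=1}^N r^A_i}\right]
\leq \mathbb{E}\!\left[\frac{r^B_j}{\sum_{i\neq j} r^A_i}\right],
\end{align*}
where I dropped $r^A_j \geq 0$ from the denominator. This is the decoupling step: now $r^B_j$ depends only on $Y_j$ while the denominator depends only on $\{Y_i : i\neq j\}$, so by independence the expectation factorizes. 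The first factor is $\mathbb{E}_{Y\sim Q_Y}[P^B_Y(Y)/Q_Y(Y)] = 1$, while the second factor is bounded by $d_2(Q_Y\Vert P^A_Y)/(N-1)$ via Proposition~\ref{inv_ineq_1} applied to the $N-1$ samples $\{Y_i : i\neq j\}$. Summing over the $N$ values of $j$ yields the factor $N/(N-1)$, which is at most $2$ for $N \geq 2$, matching $\mathbb{I}_N(\omega,1) = 2$ in this range.

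\textbf{Anticipated obstacle.} There is no deep obstacle here; the proof is short. The only subtle point is noticing that the ``obvious'' attempt (bounding $r^B_j \leq \omega$ pointwise and then applying Proposition~\ref{inv_ineq_1} to the full sum of $N$ terms) gives $\omega \cdot d_2(Q_Y\Vert P^A_Y)$, which is worse than the claimed constant $2$ when $\omega$ is large. The sharper bound requires the decoupling trick of dropping $r^A_j$ so that $r^B_j$ integrates exactly to $1$ against $Q_Y$ instead of being dominated by $\omega$. The $N=1$ case is precisely the degenerate situation where this decoupling cannot be performed (there is no index $i \neq j$ to form $S$), which is why the bound must degrade to the $\omega$-dependent constant there, consistent with the definition of $\mathbb{I}_N(\omega,1)$.
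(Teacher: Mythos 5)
Your proof is correct and follows essentially the same route as the paper: the $N=1$ case via the pointwise bound $P^B_Y \leq \omega Q_Y$, and for $N\geq 2$ the symmetry reduction, dropping $r^A_j$ from the denominator, factorizing by independence so that $\mathbb{E}_{Q_Y}[P^B_Y/Q_Y]=1$, and then invoking Proposition~\ref{inv_ineq_1} on the remaining $N-1$ samples to get the factor $N/(N-1)\leq 2$. No substantive differences from the paper's argument.
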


\begin{proof}
    For $N=1$, applying the conditions for $P^A_Y$ and $P^B_Y$ gives us an upper-bound of $\omega d_2(Q_Y||P^A_Y)$. For $N>1$, we have:
    \begin{align}
        \mathbb{E}_{Y_{1:N}\sim Q_Y(.)}\left[\frac{\sum^N_{j=1} \frac{P^B_Y(Y_i)}{Q_Y(Y_i)}}{\sum^N_{i=1} \frac{P^A_Y(Y_i)}{Q_Y(Y_i)}}\right] &= N\mathbb{E}_{Y_{1:N}\sim Q_Y(.)}\left[\frac{ \frac{P^B_Y(Y_1)}{Q_Y(Y_1)}}{\sum^N_{i=1} \frac{P^A_Y(Y_i)}{Q_Y(Y_i)}}\right] \text{ (Due to symmetry)} \\
        &\leq N\mathbb{E}_{Y_{1:N}\sim Q_Y(.)}\left[\frac{ \frac{P^B_Y(Y_1)}{Q_Y(Y_1)}}{\sum^N_{i=2} \frac{P^A_Y(Y_i)}{Q_Y(Y_i)}}\right] \text{ (since } \frac{P^A_Y(Y_i)}{Q_Y(Y_i)} \geq 0)  \\
        &=\frac{N}{N-1}\mathbb{E}_{Y_{1:N}\sim Q_Y(.)}\left[\frac{N- 1}{\sum^N_{i=2} \frac{P^A_Y(Y_i)}{Q_Y(Y_i)}}\right]  \\
        &\leq 2 d_2(Q_Y||P^A_Y) \quad \text{ (Proposition \ref{inv_ineq_1} and } N >1 ),
    \end{align}
which completes the proof.
\end{proof}

\subsection{Distributed Matching Without Batch Communication}\label{no_batch_comm}
Before the proof, we outline the details of each case in Section \ref{distributed_matching}, covering scenarios without and with communication between the encoder and decoder.

\textbf{Without-Communication.} In this scenario, let $P^A_Y(.)$ and $P^B_Y(.)$ be the target distributions at the encoder and decoder respectively, where we use the same shared randomness $W$ as in Section \ref{sec:background_ers} where we use the proposal distribution $Q_Y(.)$. Furthermore, we assume that:
\begin{align}
   \max_y\left(\frac{P^A_Y(y)}{Q_Y(y)}\right) = \omega_A, \quad  \max_y\left(\frac{P^B_Y(y)}{Q_Y(y)}\right) = \omega_B, \quad  \max_y\left(\frac{P^A_Y(y)}{Q_Y(y)}, \frac{P^B_Y(y)}{Q_Y(y)}\right) \leq \omega,
\end{align}
Using the ERS procedure, the encoder and decoder select the indices $K_A$ and $K_B$ respectively. 
\begin{align}
    K_A = \mathrm{ERS}(W; P^A_{Y}, Q_Y), \quad K_B = \mathrm{ERS}(W; {P}^B_{Y}, Q_Y),
\end{align}
The $\mathrm{ERS}(\cdot)$ function follows Algorithm~\ref{alg:ERS}, without requiring any specific scaling factor. During the selection process, the calculation of $\bar{Z}$ in Step 3 of this algorithm, which determines the acceptance probability, uses the ratio upper bounds $\omega_A$ and $\omega_B$ for parties $A$ and $B$, respectively. Proposition~\ref{match_lemma1} establishes the bound on the probability that both parties produce the same output, conditioned on $Y_{K_A} = y$.

\begin{proposition}\label{match_lemma1} Let $K_A, K_B$ and $P^A_Y, P^B_Y$ defined as above and $N\geq 2$, we have:
\begin{align}\label{prop:match_ers1}
    \Pr (Y_{K_A} = Y_{K_B}|Y_{K_A} = y) \geq \left({1 +\mu_1(N) + \frac{P^A_Y(y)}{P^B_Y(y)} \left(1 + \mu_2(N)\right)}\right)^{-1},
\end{align}
    where $\mu_1(N)$ and $\mu_2(N)$ are defined as in Appendix \ref{coefficient_match_lemma1} and we note that $\mu_1(N), \mu_2(N) \xrightarrow[]{} 0$ as $N\xrightarrow[]{} \infty$ under mild assumptions on the distributions $P^A_Y, P^B_Y$ and $Q_Y$.
\end{proposition}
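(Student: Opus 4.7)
The plan is to exploit the i.i.d.\ structure of $\{(B_i, U_i)\}_{i\geq 1}$ in $W$ and reduce the matching event to a single-batch computation. Let $T_A(B) := \hat{Z}^A(Y_{1:N})/\bar{Z}^A(Y_{1:N}, K^A_{\mathrm{cand}})$ and analogously $T_B(B)$ denote the conditional acceptance probabilities of a batch, and set $\gamma := \mathbb{E}[1 - \max(T_A(B_1), T_B(B_1))]$, the probability that both parties reject a single random batch. The event $\{K_A = K_B,\, Y_{K_A} = y\}$ decomposes as a disjoint union over the common accepted batch index $i$; by independence across batches and the fact that for a.s.\ continuous $Y$, value-matching is equivalent to matching both batch and local index, I obtain
\begin{align*}
\Pr(K_A = K_B,\, Y_{K_A} \in dy) \;=\; \sum_{i \geq 1} \gamma^{\,i-1} \phi(y)\, dy \;=\; \frac{\phi(y)\, dy}{1 - \gamma},
\end{align*}
where $\phi(y)$ is the density of the event ``both accept batch $1$ and pick the same local index having value $y$.'' Since $\Pr(Y_{K_A} \in dy) = P^A_Y(y)\, dy$ by correctness of ERS, the problem reduces to lower bounding $\phi(y)/((1-\gamma) P^A_Y(y))$.

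To compute $\phi(y)$, I use the exponential-race representation of Gumbel-Max selection. By symmetry over local indices, assume the selected index is $1$ with $Y_1 = y$ and $Y_{2:N} \sim Q_Y$ i.i.d. Conditional on $Y_{1:N}$, a standard calculation integrating $\Pr(S_k > S_1 \max(\lambda^A_k/\lambda^A_1, \lambda^B_k/\lambda^B_1)\ \forall k \neq 1 \mid S_1)$ against $S_1 \sim \mathrm{Exp}(1)$ gives
\begin{align*}
\Pr\bigl(K^{\mathrm{cand}}_A = K^{\mathrm{cand}}_B = 1 \,\bigm|\, Y_{1:N}\bigr) \;=\; \Bigl(1 + \textstyle\sum_{k=2}^N \max(\lambda^A_k/\lambda^A_1,\, \lambda^B_k/\lambda^B_1)\Bigr)^{-1}.
\end{align*}
Given both candidates equal $1$, the joint acceptance probability is $\Pr(U_1 \leq \min(T_A, T_B)) = \min(T_A, T_B)$ with $T_A = \hat{Z}^A/\bar{Z}^A_1$ and $T_B = \hat{Z}^B/\bar{Z}^B_1$, and $U_1$ independent of $S_{1:N}$. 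Multiplying by the symmetry factor $N$ and integrating over $Y_{2:N} \sim Q_Y$ yields
\begin{align*}
\phi(y) \;=\; N\, Q_Y(y) \cdot \mathbb{E}_{Y_{2:N}\sim Q_Y}\!\left[\frac{\min(T_A, T_B)}{1 + \sum_{k=2}^N \max(\lambda^A_k/\lambda^A_1,\, \lambda^B_k/\lambda^B_1)}\right].
\end{align*}

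The remaining technical work is to lower bound this expression into the stated form. I would apply $\max(a,b) \leq a+b$ in the denominator and $\min(T_A,T_B) \geq T_A T_B/(T_A+T_B)$ in the numerator, splitting the integrand into asymmetric sums of inverse weights that can be controlled via Proposition~\ref{inv_ineq_1} and Corollary~\ref{inv_ineq_2}; dividing by $P^A_Y(y) = \lambda^A_1 Q_Y(y)$ produces a leading $\lambda^B_1/(\lambda^A_1 + \lambda^B_1) = (1 + P^A_Y(y)/P^B_Y(y))^{-1}$ times a prefactor tending to $1$ as $N\to\infty$. A separate calculation bounds the outer factor $1/(1-\gamma) = 1/\mathbb{E}[\max(T_A, T_B)]$, which approaches $1$ at rate $O(1/N)$ since $T_A, T_B \to 1$ and contributes to $\mu_1(N)$. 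The main obstacle is the algebraic bookkeeping: keeping the ratio $P^A_Y(y)/P^B_Y(y)$ isolated from all other $Y$-dependent quantities while folding the various $O(1/N)$ residuals cleanly into the two separate decay coefficients $\mu_1(N),\, \mu_2(N)$ in the precise form recorded in Appendix~\ref{coefficient_match_lemma1}.
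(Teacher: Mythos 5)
Your setup is sound and in places sharper than the paper's own argument: the renewal decomposition $\Pr(K_A=K_B,\,Y_{K_A}\in dy)=\phi(y)\,dy/(1-\gamma)$ is exact (the paper instead simply truncates the sum over global indices to the first batch, i.e.\ it drops the factor $1/(1-\gamma)\ge 1$ rather than computing it), your exponential-race formula $\bigl(1+\sum_{k\ge2}\max(\lambda^A_k/\lambda^A_1,\lambda^B_k/\lambda^B_1)\bigr)^{-1}$ for both candidates equalling $1$ is correct and is what the paper obtains only as a lower bound by invoking the importance matching lemma of \citet{phan2024importance}, and the joint acceptance $\min(T_A,T_B)$ through the shared $U_1$ is exactly right. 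Since $1/(1-\gamma)\ge1$, you can simply discard that factor for a lower bound; it does not need to (and in the paper does not) enter $\mu_1(N)$.

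There are, however, two genuine problems with the remaining plan. First, the inequality you propose for the numerator, $\min(T_A,T_B)\ge T_AT_B/(T_A+T_B)$, is too lossy: as $N\to\infty$ both $T_A,T_B\to1$, so this bound degrades the matching probability by an asymptotic factor of $2$, which cannot be absorbed into coefficients $\mu_1(N),\mu_2(N)\to0$; the statement would then only be provable in the weaker form $\bigl(2+2\,P^A_Y(y)/P^B_Y(y)\bigr)^{-1}$ plus vanishing terms. The correct elementary step, used in the paper, is $\min(T_A,T_B)\ge T_AT_B$, valid because $T_A,T_B\in[0,1]$. Second, the part you defer as ``algebraic bookkeeping'' is in fact the quantitative core of the proof: one must first apply Jensen's inequality for $x\mapsto1/x$ to move $\mathbb{E}_{Y_{2:N}\sim Q_Y}$ into the denominator, then split the resulting integrand as $\zeta_1+\frac{P^A_Y(y)}{P^B_Y(y)}\zeta_2$ with the ratio isolated, and finally bound $\mathbb{E}[\zeta_1]\le1+\mu_1(N)$ and $\mathbb{E}[\zeta_2]\le\mu_2(N)$ term by term using $\sum_{i=1}^N z_i\ge\sum_{i=2}^N z_i$, the bound $P^{A/B}_Y/Q_Y\le\omega$, and the inverse-moment estimates of Proposition~\ref{inv_ineq_1} and Corollary~\ref{inv_ineq_2} (which is precisely how the $d_2(Q_Y\Vert P^{A/B}_Y)$ and $\omega^2/(N-1)$ terms in Appendix~\ref{coefficient_match_lemma1} arise). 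Without carrying out that step, and with the lossy $\min$ bound replaced, your argument is a correct skeleton but not yet a proof of the proposition in the stated form.
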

\begin{proof}
    See Appendix \ref{proof_match_lemma1}.
\end{proof}

\textbf{With Communication.} Following the setup described in Section \ref{ERS_distributed_matchin_main}, we define the ratio upperbounds in the communication case as below: 
\begin{align}
    \max_z\left(\frac{{P}_{Y|Z}(y|x)}{Q_Y(y)}\right) = \omega_x, \quad \max_z\left(\frac{\Tilde{P}_{Y|Z}(y|z)}{Q_Y(y)}\right) = \omega_z, \quad \max_{y,z}\left(\frac{P_{Y|X}(y|x)}{Q_Y(y)}, \frac{\Tilde{P}_{Y|Z}(y|z)}{Q_Y(y)}\right) \leq \omega, \notag
\end{align}
and similar to the case without communication, the $\mathrm{ERS}(.)$ selection process at the encoder and decoder also follows Algorithm \ref{alg:ERS}, with the calculation of $\bar{Z}$ in Step 3 uses the upperbound $\omega_x$ and $\omega_z$ respectively for the encoder and decoder. The bound for this case is shown below.


\begin{proposition}\label{match_lemma_cond} For $N\geq 2$ and $X,Y,Z$ defined as above, we have:
\begin{align}
    \Pr (Y_{K_A} {=} Y_{K_B}|Y_{K_A} {=} y, X{=} x, Z{=}z) \geq \left({1 {+}\mu_{1}^{\mathrm{cond}}(N) {+} \frac{P_{Y|X}(y|x)}{\Tilde{P}_{Y|Z}(y|z)} \left(1 {+} \mu_{2}^{\mathrm{cond}}(N)\right)}\right)^{-1},
\end{align}
    where $\mu_{1}^{\mathrm{cond}}(N)$ and $\mu_{2}^{\mathrm{cond}}(N)$ are defined as in Appendix \ref{match_lemma_cond_coeff} and we note that $\mu_{1}^{\mathrm{cond}}(N), \mu_{2}^{\mathrm{cond}}(N) \xrightarrow[]{} 0$ as $N\xrightarrow[]{} \infty$ under mild assumptions on the distributions $P_{Y|X}(.|x), \Tilde{P}_{Y|Z}(.|z)$ and $Q_Y(.)$.
\end{proposition}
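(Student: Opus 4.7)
My plan is to first reduce the communication setting to the unconditional matching analysis behind Proposition~\ref{match_lemma1} by exploiting the Markov chain $Z - (X, Y_A) - W$ stated in Section~\ref{coupling_comm}. Conditioned on $X = x$ and $Y_{K_A} = y$, the variable $Z$ is conditionally independent of $W$, so further conditioning on $Z = z$ leaves the conditional law of $W$ unchanged, while pinning down the two targets as the deterministic distributions $P^A_Y = P_{Y|X}(\cdot\mid x)$ and $P^B_Y = \tilde{P}_{Y|Z}(\cdot\mid z)$. The conditional matching probability then coincides with the unconditional one studied in Proposition~\ref{match_lemma1} for these specific distributions and ratio bounds $\omega_x, \omega_z$, so the task becomes re-running that proof with the extra $x,z$-dependence tracked symbolically.

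Next, I would decompose the event $\{K_A = K_B\}$ through the shared (batch, local) index structure. Writing $K_A = (K_{1,A}, K_{2,A})$ and $K_B = (K_{1,B}, K_{2,B})$, matching requires both parties to (i) reject every batch before $K_{1,A}$, (ii) accept batch $K_{1,A}$, and (iii) pick the same local index via the shared Gumbel variables $S_{K_{1,A},1:N}$. Since batches are i.i.d.\ and each batch carries its own $U_i$, the rejection prefix factorizes and the sum over $i \geq 1$ reduces, by the same symmetry used in Appendix~\ref{prelim}, to a single-batch analysis conditioned on $Y_{K_A} = y$. Within that batch I would handle the Gumbel-max IS matching via the Poisson-style argument: conditioned on $K_{2,A}=k$ being the argmin of $S_j/\lambda^A_j$ with $\lambda^A_j = P^A_Y(Y_j)/Q_Y(Y_j)$, memorylessness yields a lower bound of the form $\lambda^B_k / \sum_j \max(\lambda^B_j, (\lambda^A_j/\lambda^A_k)\lambda^B_k)$ for the probability that $k$ is also the argmin of $S_j/\lambda^B_j$; this is the ERS analogue of the PML step. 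For the batch-acceptance coupling I would use the shared $U_i$: party $A$ accepts iff $U_i \leq T^A_i := \hat{Z}^A/\bar{Z}^A$ and party $B$ iff $U_i \leq T^B_i$, so joint acceptance has conditional probability $\min(T^A_i, T^B_i)$. Combining the two couplings and integrating against $Y_{K_A}=y$ produces an expression whose dominant term is $(1 + P^A_Y(y)/P^B_Y(y))^{-1}$, and the finite-$N$ residuals collect into $\mu^{\mathrm{cond}}_1(N)$ and $\mu^{\mathrm{cond}}_2(N)$. Their $O(1/N)$ decay follows from inverse-moment estimates on $\hat{Z}^A/N$ and $\hat{Z}^B/N$ via Proposition~\ref{inv_ineq_1} and Corollary~\ref{inv_ineq_2}, together with the bounding condition of Section~\ref{bounding_cond}.

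The hard part will be the final step of packaging the residuals into the exact form $\bigl(1 + \mu^{\mathrm{cond}}_1 + (P^A_Y(y)/P^B_Y(y))(1 + \mu^{\mathrm{cond}}_2)\bigr)^{-1}$, because the Gumbel-max matching correction and the batch-acceptance correction are entangled through the shared batch content $Y_{i,1:N}$, and a naive union bound destroys the factorized denominator. I would handle this by fixing the conditioning sample at a distinguished slot, applying symmetry across the remaining $N-1$ slots as in Appendix~\ref{E1-proof}, and invoking Harris–FKG/Chebyshev (Proposition~\ref{prop:ERSstatic}) to decouple $\hat{Z}^B/\bar{Z}^A$-type ratios into products of inverse means. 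The offset $\omega$ inside $\bar{Z}$ versus $\hat{Z}$ contributes additively to $\mu^{\mathrm{cond}}_1$, while the mismatch between $T^A_i$ and $T^B_i$ contributes the $(P^A_Y/P^B_Y)$-weighted term $\mu^{\mathrm{cond}}_2$; both shrink like $1/N$ under the stated $d_2$-divergence assumption, which matches the asserted asymptotic behaviour.
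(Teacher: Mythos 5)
Your proposal is sound, but it reaches the result by a route that differs from the paper's in its first step. You reduce the communication setting to the no-communication matching lemma by observing that the Markov chain $Z-(X,Y_A)-W$ makes the conditional law of $W$ given $(X=x,\,Y_{K_A}=y,\,Z=z)$ identical to that given $(X=x,\,Y_{K_A}=y)$, so the conditional matching probability is exactly the unconditional one of Proposition~\ref{match_lemma1} applied pointwise with $P^A_Y=P_{Y|X}(\cdot|x)$, $P^B_Y=\Tilde{P}_{Y|Z}(\cdot|z)$ and bounds $\omega_x,\omega_z$; this is a clean shortcut, and indeed the coefficients in Appendix~\ref{match_lemma_cond_coeff} are just those of Proposition~\ref{match_lemma1} with these substitutions. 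The paper instead re-derives the bound directly in Appendix~\ref{proof_match_lemma_cond}, truncating to the first batch and using within-batch symmetry to get $N\Pr(K_A=K_B=1|\cdot)$, then factoring the event into three pieces ($E_1$: A's selection via the ERS target density, $E_2$: B's Gumbel-max agreement via the conditional importance matching lemma of \citet{phan2024importance}, $E_3$: B's batch acceptance through the shared $U_1$, giving the $\min$ coupling), invoking the Markov property piecemeal inside each factor; your decomposition (Gumbel agreement on $S_{K_{1,A},1:N}$ plus joint acceptance probability $\min(T^A,T^B)$ from the common $U$) is the same factorization, and your claimed PML-style bound is equivalent to the paper's $E_1\cdot E_2$ after multiplying by A's selection probability. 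Two cautions. First, the paper does not sum a rejection-prefix geometric series over batches as your sketch suggests; it simply discards all batches after the first, which is what yields residuals of the clean $O(1/N)$ form, so you should make sure your ``single-batch analysis'' is the truncation rather than the full series. Second, the final packaging into $\bigl(1+\mu^{\mathrm{cond}}_1+\tfrac{P^A_Y(y)}{P^B_Y(y)}(1+\mu^{\mathrm{cond}}_2)\bigr)^{-1}$ is obtained by lower-bounding $\min(a,b)\geq ab$, applying Jensen's inequality to the convex map $t\mapsto 1/t$, and then bounding the resulting linear expectation with Proposition~\ref{inv_ineq_1} and Corollary~\ref{inv_ineq_2}; the Harris--FKG inequality you invoke is used in the paper only for the coding-cost analysis, so your worry about ``entangled'' corrections is resolved by Jensen plus linearity rather than by FKG --- a small substitution, but worth making explicit to close the argument.
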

\begin{proof}
    See Appendix \ref{proof_match_lemma_cond}.
\end{proof}

\subsection{Coefficients in Proposition \ref{match_lemma1}}\label{coefficient_match_lemma1}
We first define the coefficient $\mu_1(N)$ and $\mu_2(N)$ in Proposition \ref{match_lemma1}.

\begin{align}
    &\mu_1(N) = \frac{1}{N}\left[\omega + \omega\mathbb{I}_N( \omega, 2) d_2(Q_Y||P^B_Y) + \frac{\omega^2}{N-1} d_2(Q_Y||P^B_Y)\right] \\
    &\mu_2(N) = \frac{1}{N}\left[\omega + \omega\mathbb{I}_N( \omega, 2) d_2(Q_Y||P^A_Y) + \frac{\omega^2}{N-1} d_2(Q_Y||P^A_Y)\right] 
\end{align}
where we define $\mathbb{I}_N(\omega, i) \triangleq (2 \mathbbm{1}_{N>i} + \omega \mathbbm{1}_{N=i})$ as in Proposition \ref{inv_ineq_2}.

\subsection{Proof of Proposition \ref{match_lemma1}}\label{proof_match_lemma1}
We prove the matching probability for the case of ERS. We note that in this proof, we will use the global index for the proposals $Y_1,...Y_N \sim Q(.)$ instead of $Y_{1,1},...Y_{1,N}$ unless otherwise stated. First, consider:
\begin{align}
&\Pr (Y_{K_A} = Y_{K_B}|Y_{K_A} = y_1) \\
    &\geq\Pr(K_{A} {=} K_{B}|Y_{K_{A}} {=} y_1) \\&{=} \sum^{\infty}_{k=1}  \Pr(K_{A} = K_{B} = k|Y_{K_{A}}=y_1) \\
    &\geq  \sum^{N}_{k=1}  \Pr(K_{A} = K_{B} = k|Y_{K_{A}} =y_1) \\
    &{=} N\Pr(K_{A} = K_{B} = 1|Y_{K_{A}} =y_1) \\
    &{=} \frac{NQ_Y(y_1)}{P^A_Y(y_1)} \Pr( K_{2,A}=K_{2,B}=1, K_{1,A}= K_{1,B} = 1 | Y_{1} = y_1) \label{event_explain}  \\
    &{=} \frac{NQ_Y(y_1)}{P^A_Y(y_1)} \int \Pr( K_{2,A}=K_{2,B}=1, K_{1,A}= K_{1,B} = 1, Y_{2:N}=y_{2:N} | Y_{1} = y_1) dy_{2:N} \\
    &{=} \frac{NQ_Y(y_1)}{P^A_Y(y_1)} \int \Pr( K_{2,A}=K_{2,B}=1, K_{1,A}= K_{1,B} = 1| Y_{1:N} {=} y_{1:N}) Q_Y(y_{2:N})dy_{2:N} \\
    &{=} \frac{NQ_Y(y_1)}{P^A_Y(y_1)} \int \Pr( K_{2,A}{=}K_{2,B}{=}1| Y_{1:N} {=} y_{1:N}) \notag \\
    & \quad \quad \quad \quad \quad \times \Pr( K_{1,A}{=}K_{1,B} {=} 1| K_{2,A}{=}K_{2,B}{=}1, Y_{1:N} {=} y_{1:N}) Q_Y(y_{2:N})dy_{2:N} \\
    &{=}\frac{NQ_Y(y_1)}{P^A_Y(y_1)} \mathbb{E}_{Y_{2:N}\sim Q_Y(.)} [\Pr( K_{2,A}{=}K_{2,B}{=}1| Y_{1:N} {=} y_{1:N}) \notag \\
    & \hspace{100pt} \times \Pr( K_{1,A}{=}K_{1,B} {=} 1| K_{2,A}{=}K_{2,B}{=}1, Y_{1:N} {=} y_{1:N})]
\label{continue_proving_ers_matching}
\end{align}

where (\ref{event_explain}) is due to the following fact that:
\begin{equation}
    \{K_{A} = K_{B} = 1,Y_{K_{A}} =y_1\} = \{K_{A} = K_{B} = 1, Y_1=y_1\},
\end{equation}
and thus:
\begin{align}
    \Pr(K_{A} = K_{B} = 1|Y_{K_{A}} =y_1) &= \frac{\Pr(K_{A} = K_{B} = 1| Y_1=y_1) Q_Y(y_1)}{P(Y_{K_A}=y_1) }\\
    &=\frac{\Pr(K_{A} = K_{B} = 1| Y_1=y_1) Q_Y(y_1)}{P^A_Y(y_1) }\\
\end{align}
Define:
\begin{align}
    \hat{Z}(P^A_Y,y_{1:N}) = \sum^N_{i=1}\frac{P^A_Y(y_i)}{Q_Y(y_i)}, \quad \hat{Z}(P^B_Y,y_{1:N}) = \sum^N_{i=1}\frac{P^B_Y(y_i)}{Q_Y(y_i)} 
\end{align}

Now, we note that:
\begin{align}
    &\Pr( K_{2,A}{=}K_{2,B}{=}1| Y_{1:N} {=} y_{1:N}) \\&= \Pr( K_{2,A}{=}1| Y_{1:N} {=} y_{1:N})  \Pr( K_{2,B}{=}1| Y_{1:N} {=} y_{1:N}, K_{2,A}=1) \\
    &= \frac{P^A_Y(y_1)/Q_Y(y_1)}{\sum^N_{i=1} P^A_Y(y_i)/Q_Y(y_i)} \Pr( K_{2,B}{=}1| Y_{1:N} {=} y_{1:N}, K_{2,A}=1) \\
    &\geq \frac{P^A_Y(y_1)/Q_Y(y_1)}{\hat{Z}(P^A_Y,y_{1:N}) } \left( 1 + \frac{P^A_Y(y_1)}{P^B_Y(y_1)} \cdot\frac{\hat{Z}(P^B_Y,y_{1:N}) }{\hat{Z}(P^A_Y,y_{1:N}) }\right)^{-1} ,
\end{align}
where we denote $\hat{Z}(P^A_Y,y_{1:N}) = \sum^N_{i=1} P^A_Y(y_i)/Q_Y(y_i)$ and the last inequality is due to Proposition 1 in \cite{phan2024importance}. Also:
\begin{align}
    &\Pr( K_{1,A}(1){=}K_{1,B}(1) {=} 1| K_{2,A}{=}K_{2,B}{=}1, Y_{1:N} {=} y_{1:N}) \\&\geq \min \left(\frac{\hat{Z}(P^A_Y,y_{1:N}) }{\hat{Z}(P^A_Y,y_{2:N})  + \omega}, \frac{\hat{Z}(P^B_Y,y_{1:N}) }{\hat{Z}(P^B_Y,y_{2:N})  + \omega}\right) (\text{ Since } \omega \geq \max_{y}\left(\frac{P^A_Y(y)}{Q_Y(y)}, \frac{P^B_Y(y)}{Q_Y(y)}\right)  )\\
    &\geq \left(\frac{\hat{Z}(P^A_Y,y_{1:N}) }{\hat{Z}(P^A_Y,y_{2:N})  + \omega}\right) \left( \frac{\hat{Z}(P^B_Y,y_{1:N}) }{\hat{Z}(P^B_Y,y_{2:N})  + \omega}\right),
\end{align}
where we use the inequality $\min(a,b) \geq ab$ for $0\leq a,b\leq 1$. Plug both in (\ref{continue_proving_ers_matching}), we have:
\begin{align}
    &\Pr(K_{A} {=} K_{B}|Y_{K_{A}} {=} y_1) \\&{\geq}  \mathbb{E}_{Y_{2:N}\sim Q_Y(.)}\left[\frac{1}{\left( 1 + \frac{P^A_Y(y_1)}{P^B_Y(y_1)} \cdot\frac{\hat{Z}(P^B_Y,y_{1:N}) }{\hat{Z}(P^A_Y,y_{1:N})}\right)} \left(\frac{N}{\hat{Z}(P^A_Y,y_{2:N}) {+} \omega}\right) \left( \frac{\hat{Z}(P^B_Y,y_{1:N})}{\sum^N_{i=2}\hat{Z}(P^B_Y,y_{2:N}) {+} \omega}\right) \right] \notag \\
    &{=}  \mathbb{E}_{Y_{2:N}\sim Q_Y(.)}\left[\frac{1}{\left( 1 {+} \frac{P^A_Y(y_1)}{P^B_Y(y_1)} {\cdot}\frac{\hat{Z}(P^B_Y,y_{1:N})}{\hat{Z}(P^A_Y,y_{1:N})}\right) \left(\frac{\hat{Z}(P^A_Y,y_{2:N}) {+} \omega}{N}\right) \left( \frac{\hat{Z}(P^B_Y,y_{2:N}) {+} \omega}{\hat{Z}(P^B_Y,y_{1:N})}\right)} \right] \\
    &\geq \left(\mathbb{E}_{Y_{2:N}\sim Q_Y(.)}\left[ \left( 1 {+} \frac{P^A_Y(y_1)}{P^B_Y(y_1)} {\cdot}\frac{\hat{Z}(P^B_Y,y_{1:N}) }{\hat{Z}(P^A_Y,y_{1:N})}\right) \left(\frac{\hat{Z}(P^A_Y,y_{2:N}) {+} \omega}{N}\right) \left( \frac{\hat{Z}(P^B_Y, y_{2:N}) {+} \omega}{\hat{Z}(P^B_Y, y_{1:N})}\right)\right]\right)^{-1} \label{jensen_step_1} \\
    &= \left(\mathbb{E}_{Y_{2:N}\sim Q_Y(.)}\left[\zeta_1 + \frac{P^A_Y(y_1)}{P^B_Y(y_1)}\zeta_2\right]\right)^{-1},
\end{align}
where we use Jensen's inequality for the convex function $1/x$ in line (\ref{jensen_step_1}) and set:
\begin{align}
    \zeta_1 &=  \left(\frac{\hat{Z}(P^A_Y, y_{2:N}) {+} \omega}{N}\right) \left( \frac{\hat{Z}(P^B_Y, y_{2:N}) {+} \omega}{\hat{Z}(P^B_Y, y_{1:N})}\right) \\
     &= \frac{\hat{Z}(P^A_Y, y_{2:N}) \cdot \hat{Z}(P^B_Y, y_{2:N})}{N\hat{Z}(P^B_Y, y_{1:N})} + \frac{\omega}{N} \cdot \frac{\hat{Z}(P^A_Y, y_{2:N})}{\hat{Z}(P^B_Y, y_{1:N})} + \frac{\omega}{N} \cdot \frac{\hat{Z}(P^B_Y, y_{2:N})}{\hat{Z}(P^B_Y, y_{1:N})} + \frac{\omega^2}{N \cdot \hat{Z}(P^B_Y, y_{1:N})} \notag \\
     &\leq \frac{1}{N}\hat{Z}(P^A_Y, y_{2:N}) + \frac{\omega}{N} \cdot \frac{\hat{Z}(P^A_Y, y_{2:N})}{\hat{Z}(P^B_Y, y_{2:N})} + \frac{\omega}{N} + \frac{\omega^2}{N\hat{Z}(P^B_Y, y_{2:N})},
\end{align}
with the last inequality  due to $\sum^N_{i=1}z_i \geq \sum^N_{i=2}z_i$ for any positive $z$. We then have:
\begin{align}
    &\mathbb{E}_{y_{2:N}\sim Q_Y(.)}[\zeta_1] \\&\leq \mathbb{E}_{y_{2:N}\sim Q_Y(.)}\left[\frac{\hat{Z}(P^A_Y, y_{2:N})}{N} +\frac{\omega}{N} \cdot \frac{\hat{Z}(P^A_Y, y_{2:N})}{\hat{Z}(P^B_Y, y_{2:N})} + \frac{\omega}{N} + \frac{\omega^2}{N\hat{Z}(P^B_Y, y_{2:N})}\right] \\
    &= \frac{N-1}{N} + \frac{\omega}{N} + \frac{\omega}{N}\mathbb{E}_{y_{2:N}\sim Q_Y(.)}\left[\frac{\hat{Z}(P^A_Y, y_{2:N})}{\hat{Z}(P^B_Y, y_{2:N})}\right] + \frac{\omega^2}{N}\mathbb{E}_{y_{2:N}\sim Q_Y(.)}\left[\frac{1}{\hat{Z}(P^B_Y, y_{2:N})}\right]\\
    &\leq 1 + \frac{1}{N}\left(\omega + \omega\mathbb{E}_{y_{2:N}\sim Q_Y(.)}\left[\frac{\hat{Z}(P^A_Y, y_{2:N})}{\hat{Z}(P^B_Y, y_{2:N})}\right] +  \omega^2\mathbb{E}_{y_{2:N}\sim Q_Y(.)}\left[\frac{1}{\hat{Z}(P^B_Y, y_{2:N})}\right]\right)\\
    &\leq 1 + \frac{1}{N}\left[\omega + \omega\mathbb{I}_N( \omega, 2) d_2(Q_Y||P^B_Y) + \frac{\omega^2}{N-1} d_2(Q_Y||P^B_Y)\right]\\
    &= 1 + \mu_1(N),
\end{align}
where the last inequality is due to Proposition \ref{inv_ineq_1} and Corollary \ref{inv_ineq_2}.. For the other term, we have:

\begin{align}
    \zeta_2 &=  \left(\frac{\hat{Z}(P^B_Y,y_{1:N})}{\hat{Z}(P^A_Y,y_{1:N})}\right)\left(\frac{\hat{Z}(P^A_Y,y_{2:N}) {+} \omega}{N}\right) \left( \frac{\hat{Z}(P^B_Y,y_{2:N}) {+} \omega}{\hat{Z}(P^B_Y,y_{1:N})}\right) \\
    &= \frac{1}{N}\left(\frac{\hat{Z}(P^A_Y, y_{2:N})}{\hat{Z}(P^A_Y, y_{1:N})} + \frac{\omega}{\hat{Z}(P^A_Y, y_{1:N})} \right) \left(\hat{Z}(P^B_Y, y_{2:N}) + \omega\right)\\
    &\leq \frac{1}{N}\left(1 + \frac{\omega}{\hat{Z}(P^A_Y, y_{2:N})} \right) \left(\hat{Z}(P^B_Y, y_{2:N}) + \omega\right) \\
    &= \frac{\hat{Z}(P^B_Y, y_{2:N})}{N} + \frac{1}{N}\left(\omega + \frac{\omega \hat{Z}(P^B_Y, y_{2:N})}{\hat{Z}(P^A_Y, y_{2:N})}  + \frac{\omega^2}{\hat{Z}(P^A_Y, y_{2:N})}\right).
\end{align}
where we again repeatly use the inequality $\sum^N_{i=1}z_i \geq \sum^N_{i=2}z_i$ for any positive $z$. This gives us:
\begin{align}
    &\mathbb{E}_{y_{2:N}\sim Q_Y(.)}[\zeta_2] \\&\leq   \frac{1}{N}\left(\omega + \omega\mathbb{E}_{y_{2:N}\sim Q_Y()}\left[\frac{\hat{Z}(P^B_Y, y_{2:N})}{\hat{Z}(P^A_Y, y_{2:N})}\right]  + \omega^2\mathbb{E}_{y_{2:N}\sim Q_Y()}\left[\frac{1}{\hat{Z}(P^A_Y, y_{2:N})}\right]\right) \\
    &\leq \frac{1}{N}\left[\omega + \omega\mathbb{I}_N( \omega, 2) d_2(Q_Y||P^A_Y) + \frac{\omega^2}{N-1} d_2(Q_Y||P^A_Y)\right] \\
    &= \mu_2(N),
\end{align}

where the last inequality is due to Proposition \ref{inv_ineq_1} and Corollary \ref{inv_ineq_2}. This completes the proof. 

\subsection{Coefficients in Proposition \ref{match_lemma_cond}}\label{match_lemma_cond_coeff}
We  define the coefficient $\mu^{\mathrm{cond}}_1(N)$ and $\mu^{\mathrm{cond}}_2(N)$ in Proposition \ref{match_lemma_cond}.

\begin{align}
    &\mu^{\mathrm{cond}}_1(N) = \frac{1}{N}\left[\omega + \omega\mathbb{I}_N( \omega, 2) d_2(Q_Y||\Tilde{P}_{Y|Z}(.|z)) + \frac{\omega^2}{N-1} d_2(Q_Y||\Tilde{P}_{Y|Z}(.|z))\right] \\
    &\mu^{\mathrm{cond}}_2(N) = \frac{1}{N}\left[\omega + \omega\mathbb{I}_N( \omega, 2) d_2(Q_Y||P_{Y|X}(.|x)) + \frac{\omega^2}{N-1} d_2(Q_Y||P_{Y|X}(.|x))\right] 
\end{align}
where we define $\mathbb{I}_N(\omega, i) \triangleq (2 \mathbbm{1}_{N>i} + \omega \mathbbm{1}_{N=i})$ as in Proposition \ref{inv_ineq_2}.
\subsection{Proof of Proposition \ref{match_lemma_cond}} \label{proof_match_lemma_cond}
We will use the global index for the proposals $Y_1,...Y_N \sim Q(.)$ instead of $Y_{1,1},...Y_{1,N}$ unless otherwise stated. For the communication version, we have:
\begin{align}
&\Pr (Y_{K_A} = Y_{K_B}|Y_{K_A} = y_1, X=x, Z=z) \\
    &\geq\Pr(K_{A} {=} K_{B}|Y_{K_{A}} {=} y_1, X=x, Z=z) \\&{=} \sum^{\infty}_{k=1}  \Pr(K_{A} = K_{B} = k|Y_{K_{A}}=y_1, X=x, Z=z) \\
    &\geq  \sum^{N}_{k=1}  \Pr(K_{A} = K_{B} = k|Y_{K_{A}} =y_1, X=x, Z=z) \\
    &{=} N\Pr(K_{A} = K_{B} = 1|Y_{K_{A}} =y_1, X=x, Z=z) \\
    &= N\Pr(K_{1,A} = K_{1,B} = 1, K_{2,A} = K_{2_B}=1|Y_{K_{A}} =y_1, X=x, Z=z) \label{cond_left}
\end{align}

Define:
\begin{align}
    \hat{Z}(P_{Y|X=x},y_{1:N}) = \sum^N_{i=1}\frac{P_{Y|X}(y_i|x)}{Q_Y(y_i)}, \quad \hat{Z}(\tilde{P}_{Y|Z=z},y_{1:N}) = \sum^N_{i=1}\frac{\tilde{P}_{Y|Z}(y_i|z)}{Q_Y(y_i)} 
\end{align}
Now consider the following terms:
\begin{align}
    E_1 &= \Pr(K_{1,A}=1, K_{2,A}=1| Y_{K_A}=y_1, Y_{2:N}=y_{2:N}, X=x, Z=z) \notag \\ &\times P(Y_{2:N}=y_{2:N}|Y_{K_A} = y_1, X=x, Z=z) \\
    &= \frac{1}{P_{X,Y,Z}(x,y_1,z)}Q_Y(y_{1:N})P_X(x)\Pr(K_{2,A}=1|Y_{1:N}=y_{1:N}, X=x) \notag \\ &\times\Pr(K_{1,A}=1|Y_{1:N}=y_{1:N}, X=x, K_{2,A}=1) 
 P_Z(z|Y_{1:N}=y_{1:N}, X=x, K_{A} = 1) \\
 &= \frac{1}{P_{X,Y,Z}(x,y_1,z)}Q_Y(y_{1:N})P_X(x)\Pr(K_{2,A}=1|Y_{1:N}=y_{1:N}, X=x) \notag \\ &\times\Pr(K_{1,A}=1|Y_{1:N}=y_{1:N}, X=x, K_{2,A}=1) 
 P_{Z|X,Y}(z|X=x, Y=y_1) \\
 &= \frac{Q_Y(y_{1:N})}{P_{Y|X}(y_1|x)}\Pr(K_{2,A}=1|Y_{1:N}=y_{1:N}, X=x)  \notag \\ & \quad \times \Pr(K_{1,A}=1|Y_{1:N}=y_{1:N}, X=x, K_{2,A}=1)  \\
 &= \frac{Q_Y(y_{1:N})}{P_{Y|X}(y_1|x)} \frac{P_{Y|X}(y_1|x)/Q_Y(y_1)}{\hat{Z}(P_{Y|X=x},y_{2:N}) + \omega_x} \\
 &=\frac{Q_Y(y_{2:N})}{\hat{Z}(P_{Y|X=x},y_{2:N}) + \omega_x}
\end{align}

and:
\begin{align}
    &E_2 \\ &= \Pr(K_{2,B}=1|K_A=1, Y_{1:N}=y_{1:N}, X=x, Z=z) \\
    &= 1 - \Pr(K_{2,B}\neq1|K_A=1, Y_{1:N}=y_{1:N}, X=x, Z=z) \\
    &= 1 - \Pr\left(\min_{j\neq 1} \frac{S_j  }{\frac{\Tilde{P}_{Y|Z}(y_j|z)}{\hat{Z}(\Tilde{P}_{Y|Z=z}, y_{1:N})}} \leq \frac{S_1}{\frac{\Tilde{P}_{Y|Z}(y_1|z)}{\hat{Z}(\Tilde{P}_{Y|Z=z}, y_{1:N})}}\Bigg|K_A=1, Y_{1:N}=y_{1:N}, X=x, Z=z\right) \\
    &= 1 - \Pr\left(\min_{j\neq 1} \frac{S_j }{\frac{\Tilde{P}_{Y|Z}(y_j|z)}{\hat{Z}(\Tilde{P}_{Y|Z=z}, y_{1:N})}} \leq \frac{S_1}{\frac{\Tilde{P}_{Y|Z}(y_1|z)}{\hat{Z}(\Tilde{P}_{Y|Z=z}, y_{1:N})}}\Bigg|K_A=1, Y_{1:N}=y_{1:N}, X=x\right) \label{explain1}\\
    &= 1 - \Pr\left(\min_{j\neq 1} \frac{S_j }{\frac{\Tilde{P}_{Y|Z}(y_j|z)}{\hat{Z}(\Tilde{P}_{Y|Z=z}, y_{1:N})}} \leq \frac{S_1}{\frac{\Tilde{P}_{Y|Z}(y_1|z)}{\hat{Z}(\Tilde{P}_{Y|Z=z}, y_{1:N})}}\Bigg|K_{2,A}=1, Y_{1:N}=y_{1:N}, X=x\right) \label{explain2}\\
    &\geq \left(1 + \frac{P_{Y|X}(y_1|x)}{\Tilde{P}_{Y|Z}(y_1|z)} \frac{\hat{Z}(\Tilde{P}_{Y|Z=z}, y_{1:N})}{\hat{Z}(P_{Y|X=x}, y_{1:N})}\right)^{-1} \label{explain3}, 
\end{align}
where (\ref{explain1}) is due to the Markov condtion $Z-(X,Y)-W$, (\ref{explain2}) is due to the fact that the uniform random variable $U$ is independent of $S^N_1$ and (\ref{explain3}) is due to the conditional importance matching lemma \cite{phan2024importance}. We note the following events are equivalent:
\begin{align}
    &\{{K_A}=1, Y_{1:N}= y_{1:N}, X=x, Z=z, K_{2,B}= 1\} \\&\triangleq \left\{ U \leq \frac{\hat{Z}(P_{Y|X=x}, y_{1:N})}{\hat{Z}(P_{Y|X=x}, y_{2:N}) + \omega_x}, Y_{1:N}=y_{1:N}, X=x, Y_{K_A}=y_1, Z=z \right\} \\
    &\triangleq \mathcal{E} \cap \left\{ Z=z \right\}
\end{align}
where $\mathcal{E} =\left\{ U \leq \frac{\hat{Z}(P_{Y|X=x}, y_{1:N})}{\hat{Z}(P_{Y|X=x}, y_{2:N}) + \omega_x}, Y_{1:N}=y_{1:N}, X=x, Y_{K_A}=y_1 \right\} $. Then, we have:
\begin{align}
    &E_3 = \Pr(K_{1,B} = 1 |{K_A}=1, Y_{1:N}= y_{1:N}, X=x, Z=z, K_{2,B}= 1)\\
    &=\Pr\left(U \leq \frac{\hat{Z}(\Tilde{P}_{Y|Z=z}, Y_{1:N})}{\hat{Z}(\Tilde{P}_{Y|Z=z}, Y_{2:N}) + \omega_z} \Bigg| \mathcal{E},  Z=z \right) \\
    &=\Pr\left(U \leq \frac{\hat{Z}(\Tilde{P}_{Y|Z=z}, Y_{1:N})}{\hat{Z}(\Tilde{P}_{Y|Z=z}, Y_{2:N}) + \omega_z} \Bigg| \mathcal{E}\right) \\
    &= \min\left(1, \frac{\hat{Z}(\Tilde{P}_{Y|Z=z}, Y_{1:N})}{\hat{Z}(\Tilde{P}_{Y|Z=z}, Y_{2:N}) + \omega_z} \cdot \frac{\hat{Z}(P_{Y|X=x}, y_{2:N})+\omega_x}{\hat{Z}(P_{Y|X=x}, y_{1:N})}\right),
\end{align}
where the second to last equality is due to the Markov condition $Z-(X,Y)-W$.

Combining all three terms $E_1, E_2, E_3$ and continue from step (\ref{cond_left}), we have:
\begin{align}
    &\Pr(Y_{K_A}=Y_{K_B}|Y_{K_A}=y_1, X=x, Z=z) \\
    &\geq N\int \frac{Q_Y(y_{2:N})}{\hat{Z}(P_{Y|X=x}, y_{2:N}) + \omega_x} \left(1 + \frac{P_{Y|X}(y_1|x)}{\Tilde{P}_{Y|Z}(y_1|z)} \frac{\hat{Z}(\Tilde{P}_{Y|Z=z}, y_{1:N})}{\hat{Z}(P_{Y|X=x}, y_{1:N})}\right)^{-1} \notag \\ & \times \min\left(1, \frac{\hat{Z}(\Tilde{P}_{Y|Z=z}, Y_{1:N})}{\hat{Z}(\Tilde{P}_{Y|Z=z}, Y_{2:N}) + \omega_z} \cdot \frac{\hat{Z}(P_{Y|X=x}, y_{2:N})+\omega_x}{\hat{Z}(P_{Y|X=x}, y_{1:N})}\right) dy_{2:N} \\
    &= N\int \frac{Q_Y(y_{2:N})}{\hat{Z}(P_{Y|X=x}, y_{1:N})} \left(1 + \frac{P_{Y|X}(y_1|x)}{\Tilde{P}_{Y|Z}(y_1|z)} \frac{\hat{Z}(\Tilde{P}_{Y|Z=z}, y_{1:N})}{\hat{Z}(P_{Y|X=x}, y_{1:N})}\right)^{-1} \notag \\ & \times \min\left( \frac{\hat{Z}(P_{Y|X=x}, y_{1:N})}{\hat{Z}(P_{Y|X=x}, y_{2:N}) + \omega_x}, \frac{\hat{Z}(\Tilde{P}_{Y|Z=z}, Y_{1:N})}{\hat{Z}(\Tilde{P}_{Y|Z=z}, Y_{2:N}) + \omega_z}\right) dy_{2:N} \\
    &\geq \int \frac{NQ_Y(y_{2:N})}{\hat{Z}(P_{Y|X=x}, y_{1:N})} \left(1 + \frac{P_{Y|X}(y_1|x)}{\Tilde{P}_{Y|Z}(y_1|z)} \frac{\hat{Z}(\Tilde{P}_{Y|Z=z}, y_{1:N})}{\hat{Z}(P_{Y|X=x}, y_{1:N})}\right)^{-1} \notag \\ & \times \left( \frac{\hat{Z}(P_{Y|X=x}, y_{1:N})}{\hat{Z}(P_{Y|X=x}, y_{2:N}) + \omega}\cdot \frac{\hat{Z}(\Tilde{P}_{Y|Z=z}, Y_{1:N})}{\hat{Z}(\Tilde{P}_{Y|Z=z}, Y_{2:N}) + \omega}\right) dy_{2:N} 
\end{align} 
with the last inequality follows the fact that $\omega > \max(\omega_x, \omega_z)$. The rest of the proof follows similar steps as in the proof of Proposition \ref{match_lemma1}. This completes the proof.

\newpage
\section{ERS Matching with Batch Communication} \label{ERS_analysis_batch}
\textbf{Setup.} We first describe the setup in the case where the selected batch index is communicated from the encoder to the decoder. The main difference between this and the setup in Section \ref{ERS_distributed_matchin_main} is that the decoder (party $B$) will use the Gumbel-Max selection method instead of the ERS one, since it knows which batch the encoder index belongs to. Furthermore, we note this scheme requires a noiseless channel between the encoder and decoder, which is available in the distributed compression scenario. Similarly to Section \ref{coupling_comm}, let $(X,{Y},{Z}) \in \mathcal{X} \times {\mathcal{Y}} \times {\mathcal{Z}}$ with a joint distribution $P_{X,{Y}, {Z}}$. We use the same common randomness $W$ as in Section \ref{ERS_distributed_matchin_main}, with the proposal distribution $Q_{{Y}}$ requiring that the bounding condition hold for the tuple $(P_{{Y}|X=x}, Q_{{Y}})$. The protocol is as follows:
\begin{enumerate}
    \item The encoder receives the input $X=x\sim P_X$ and selects its value using ERS procedure:
\begin{align}
    K_A = \mathrm{ERS}(W; P_{Y|X=x}, Q_Y),
\end{align}
and sends the batch index $K_{1,A}$ to the decoder. It then sets $Y_A=Y_{K_A}$
\item  Given $X=x, Y_A=y$, we generate $Z = z \sim P_{Z|X,Y}(.|x,y)$ and note that the Markov chain $Z-(X,Y_A)-W$ holds.
\item The decoder receives the batch index $K_{1,A}$ and $Z=z$ will use the Gumbel Max process to queries a sample from the common randomness $W$:
\begin{align}
    K_{1,B} = K_{1,A} \quad K_{2,B} = \mathrm{Gumbel}(B_{K_{1,A}}; \tilde{P}_{Y|Z=z}, Q_Y) \quad K_B = (K_{1,B} - 1)N + K_{2,B}, \notag
\end{align}
and output $Y_B=Y_{K_B}$. The procedure $\mathrm{Gumbel}(.)$ corresponds to Step 1,2 in Algoirhm \ref{alg:ERS}. 
\end{enumerate}

Given the above setup, we have the following bound on the matching event $\{Y_A=Y_B\}$:
\begin{proposition}\label{prop:match_ers_comm} Let $K_A, K_B, P_{Y|X}(.|X=x)$ and $\Tilde{P}_{Y|Z}(.|z)$ defined above and set $P^A_Y= P_{Y|X=x}, P^B_Y=\tilde{P}_{Y|Z=z}$. For $N\geq 2$, we have:
\begin{align}
    \Pr (Y_{A} = Y_{B}|Y_{A} = y, X=x, Z=z) \geq \left({1 {+}\mu'_1(N) {+} \frac{P^A_{Y}(y)}{{P}^B_{Y}(y)} \left(1 + \mu'_2(N)\right)}\right)^{-1},
\end{align}
    where $\mu'_1(N)$ and $\mu'_2(N)$ are defined as in Appendix \ref{coefficient_match_lemma2} and we note that $\mu'_1(N), \mu'_2(N) \xrightarrow[]{} 0$ as $N\xrightarrow[]{} \infty$ with rate $N^{-1} $under mild assumptions on the distributions $P_{Y|X}(y|x)$ and $Q_Y(.)$. 
\end{proposition}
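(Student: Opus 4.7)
The plan is to mirror the structure of the proof of Proposition \ref{match_lemma_cond} (see Appendix \ref{proof_match_lemma_cond}), but to exploit the simplification that arises because the batch index $K_{1,A}$ is now known at the decoder, so $K_{1,B}=K_{1,A}$ holds with probability one. Hence the matching event reduces to the within-batch event $\{K_{2,A}=K_{2,B}\}$, and the decoder's side need not contribute a separate batch-acceptance term (the quantity $E_{3}$ in the proof of Proposition \ref{match_lemma_cond} becomes trivial). I expect this to yield the same functional form as Proposition \ref{match_lemma_cond} but with improved (smaller) coefficients $\mu'_{1}(N), \mu'_{2}(N)$.

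First, I would write
\begin{align}
\Pr(Y_{A}{=}Y_{B}\mid Y_{A}{=}y, X{=}x, Z{=}z) \;\geq\; N\,\Pr(K_{1,A}{=}K_{1,B}{=}1,\,K_{2,A}{=}K_{2,B}{=}1\mid Y_{K_{A}}{=}y,X{=}x,Z{=}z)
\end{align}
by using symmetry across batches and across the local index. Using the ERS target distribution from Appendix \ref{prelim}, the conditional density of $Y_{K_{1,A},1:N}$ given $\{K_{2,A}=1,\,K_{1,A}=1\}$ and the encoder's input $x$ is proportional to $(P_{Y|X}(y_{1}|x)/Q_{Y}(y_{1}))\prod_{j}Q_{Y}(y_{j})/\bar{Z}_{x}(y_{1:N},1)$. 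After accounting for the Markov chain $Z-(X,Y_{A})-W$, the joint term factorizes as in the derivation of $E_{1}$ in Appendix \ref{proof_match_lemma_cond}.

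Next, conditioning on the batch contents $Y_{K_{1,A},1:N}=y_{1:N}$, $K_{2,A}=1$, and $K_{1,A}=1$, the decoder runs Gumbel-Max importance sampling with weights $\tilde P_{Y|Z}(y_{j}|z)/Q_{Y}(y_{j})$ on this exact batch. The importance matching lemma of \cite{phan2024importance} then gives
\begin{align}
\Pr(K_{2,B}=1 \mid Y_{K_{1,A},1:N}=y_{1:N},\,K_{1,A}{=}K_{1,B}{=}1,\,K_{2,A}=1,\,X=x,Z=z) \;\geq\; \left(1+\frac{P_{Y|X}(y_{1}|x)}{\tilde P_{Y|Z}(y_{1}|z)}\cdot\frac{\hat Z(\tilde P_{Y|Z=z},y_{1:N})}{\hat Z(P_{Y|X=x},y_{1:N})}\right)^{-1}.
\end{align}
Multiplying the three pieces together, the $(\hat Z(P_{Y|X=x},y_{2:N})+\omega_{x})$ denominator coming from the encoder's batch-acceptance cancels against the numerator of $\hat Z(P_{Y|X=x},y_{1:N})$ contained in the matching bound above, so the integrand collapses to
\begin{align}
\frac{N Q_{Y}(y_{2:N})}{\bar Z_{x}(y_{1:N},1)}\cdot\left(1+\frac{P_{Y|X}(y_{1}|x)}{\tilde P_{Y|Z}(y_{1}|z)}\cdot\frac{\hat Z(\tilde P_{Y|Z=z},y_{1:N})}{\hat Z(P_{Y|X=x},y_{1:N})}\right)^{-1}.
\end{align}

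Finally, I would apply Jensen's inequality to the convex map $u\mapsto 1/u$ to pull the expectation inside the reciprocal, then expand the cross product and bound each term using the inequality $\sum_{i=1}^{N}z_{i}\geq\sum_{i=2}^{N}z_{i}$ together with Proposition \ref{inv_ineq_1} and Corollary \ref{inv_ineq_2}, applied to the pairs $(Q_{Y},\tilde P_{Y|Z=z})$ and $(Q_{Y},P_{Y|X=x})$. This produces coefficients $\mu'_{1}(N),\mu'_{2}(N)$ of the form $N^{-1}[\omega + \omega\,\mathbb{I}_{N}(\omega,2)\,d_{2}(\cdot\|\cdot) + \tfrac{\omega^{2}}{N-1}d_{2}(\cdot\|\cdot)]$, matching the shape in Appendix \ref{coefficient_match_lemma2}; they are strictly smaller than $\mu_{1}^{\mathrm{cond}},\mu_{2}^{\mathrm{cond}}$ because the batch-communication removes the extra $(\hat Z+\omega_{z})$ factor on the decoder side. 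The decay rate $N^{-1}$ is immediate from the $1/N$ prefactors. The main technical obstacle is verifying that the cancellation between the ERS acceptance denominator and the IML numerator is exact under the Markov condition $Z-(X,Y_{A})-W$; once this is in place, the rest is a bookkeeping calculation analogous to Appendix \ref{proof_match_lemma_cond}.
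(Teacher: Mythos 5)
Your plan is essentially the paper's own argument (Appendix \ref{proof_match_ers_comm}): reduce the matching event to the within-batch event $\{K_{2,A}=K_{2,B}\}$, use symmetry over the local index to condition on $K_{2,A}=1$, compute the encoder-side conditional density of the batch contents, invoke the conditional importance matching lemma of \citet{phan2024importance} for the decoder's Gumbel-Max step (justified by the Markov chain $Z-(X,Y_A)-W$ and the independence of the acceptance variable $U$ from the exponentials), and finish with Jensen plus moment bounds; the integrand you arrive at, $\frac{NQ_Y(y_{2:N})}{\bar Z_x(y_{1:N},1)}\bigl(1+\tfrac{P^A_Y(y_1)}{P^B_Y(y_1)}\tfrac{\hat Z(\tilde P_{Y|Z=z},y_{1:N})}{\hat Z(P_{Y|X=x},y_{1:N})}\bigr)^{-1}$, is exactly the paper's. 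The one organizational difference is that you restrict to the first batch, whereas the paper uses the exact ERS posterior density of the accepted batch, $\frac{NQ_Y(y_{2:N})}{\Delta_x(\hat Z(P_{Y|X=x},y_{2:N})+\omega)}$, and then drops the factor $\Delta_x^{-1}\ge 1$; these two routes lose the same slack and are equivalent. (Your description of a ``cancellation'' between the acceptance denominator and the IML numerator is garbled — the cancellation is internal to the encoder side, where the Gumbel-Max weight's $\hat Z(P_{Y|X=x},y_{1:N})$ cancels against the acceptance probability's numerator — but the resulting expression is correct.) Where you should be more careful is the final bookkeeping: the coefficients in Appendix \ref{coefficient_match_lemma2} are $\mu'_1(N)=\tfrac{3\omega}{N}$ and $\mu'_2(N)=\tfrac{\omega}{N}\,\mathbb{I}_N(\omega,2)\,d_2(Q_Y\|P_{Y|X=x})$, not the fuller shape $N^{-1}\bigl[\omega+\omega\,\mathbb{I}_N(\omega,2)d_2+\tfrac{\omega^2}{N-1}d_2\bigr]$ of Proposition \ref{match_lemma_cond}; in particular the paper's expansion bounds the cross term via $\hat Z(P_{Y|X=x},y_{2:N})\le\hat Z(P_{Y|X=x},y_{1:N})$ and $\tfrac{P^B_Y(y_1)/Q_Y(y_1)}{\hat Z(P_{Y|X=x},y_{1:N})}\le\tfrac{P^B_Y(y_1)}{P^A_Y(y_1)}$, so that Corollary \ref{inv_ineq_2} is only ever applied with target $P_{Y|X=x}$, and no $d_2(Q_Y\|\tilde P_{Y|Z=z})$ or $\omega^2/(N-1)$ terms appear. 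Your stated intention to apply the inverse-moment bounds also to the pair $(Q_Y,\tilde P_{Y|Z=z})$ would introduce an unneeded finiteness assumption on the learned decoder distribution — precisely what the batch-communication variant is designed to avoid — and would yield coefficients that do not match the proposition as stated, even though the claimed $N^{-1}$ decay would still hold.
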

\begin{proof}
    See Appendix \ref{proof_match_ers_comm}.
\end{proof}

\subsection{Coefficients in Proposition \ref{prop:match_ers_comm}}\label{coefficient_match_lemma2}
We  define the coefficient $\mu'_1(N)$ and $\mu'_2(N)$ in Proposition \ref{prop:match_ers_comm}.
\begin{align}
    &\mu'_1(N) = \frac{3\omega}{N} \\
    &\mu'_2(N) = \frac{\omega}{N}\mathbb{I}_N( \omega, 2) d_2(Q_Y||P_{Y|X=x}) 
\end{align}
where we define $\mathbb{I}_N(\omega, i) \triangleq (2 \mathbbm{1}_{N>i} + \omega \mathbbm{1}_{N=i})$ as in Proposition \ref{inv_ineq_2} and $\omega = \max_{y}\frac{P_{Y|X(y|x)}}{Q_Y(y)}$.

\subsection{Proof of Proposition \ref{prop:match_ers_comm}}\label{proof_match_ers_comm}
We now formally prove the bound  Proposition \ref{prop:match_ers_comm}. First, we define:
\begin{align}
    \hat{Z}(P_{Y|X=x},y_{1:N}) = \sum^N_{i=1}\frac{P_{Y|X}(y_i|x)}{Q_Y(y_i)}, \quad \hat{Z}(\tilde{P}_{Y|Z=z},y_{1:N}) = \sum^N_{i=1}\frac{\tilde{P}_{Y|Z}(y_i|z)}{Q_Y(y_i)} 
\end{align}
Recall that $K_{2,A}$ is the local index within the selected batch by party $A$ and $Y_{K_{1,A},1:N}$ are the samples within the selected batch, we have: 
\begin{align}
&\Pr(Y_{A} = Y_{B}|Y_{A} = y_1, X= x, Z=z) \\
    &=\Pr(Y_{K_A} = Y_{{K}_B}|Y_{K_A} = y_1, X= x, Z=z) \\
    &\geq \Pr(K_{2,A} = {K_{2,B}}|Y_{K_A} = y_1, X= x, Z=z) \\
    &= \sum^N_{i=1}\Pr(K_{2,A} = {K_{2,B}}=i|Y_{K_A} = y_1, X= x, Z=z) \\
    &= N\Pr(K_{2,A} = {K_{2,B}}=1|Y_{K_A} = y_1, X= x, Z=z) \quad \text{ (Due to Symmetry)} \\
    &= N\Pr(K_{2,A} = {K_{2,B}}|Y_{K_A} = y_1, K_{2,A} =1, X= x, Z=z) \notag \\ &\textcolor{white}{nothing here}\times \Pr(K_{2,A} =1|Y_{K_A} = y_1, X= x, Z=z) \\
    &= \Pr(K_{2,A} = {K_{2,B}}|Y_{K_A} = y_1, K_{2,A} =1, X=x, Z=z) \label{symmetry_cond}  \\
    &=\int^{\infty}_{-\infty} P({Y}_{K_{1,A},2:N}=y_{2:N}| Y_{K_A}=y_1, K_{2,A} = 1, X=x, Z=z) \notag \\ 
    &\hspace{5pt} \times \Pr(K_{2,A} = K_{2,B}|Y_{K_A}=y_1, K_{2,A}=1, {Y}_{K_{1,A},2:N}=y_{2:N}, X=x, Z=z) dy_{2:N},
\end{align}
where (\ref{symmetry_cond}) is due to $\Pr(K_{2,A} =1|Y_{K_A} = y_1, X=x, Z=z)=N^{-1}$. Let $Y_{1:N}\sim Q$ are $N$ i.i.d. proposal samples, then $\{Y_{K_A,1:N}=y_{1:N}\} = \{Y_{1:N}=y_{1:N}, A \text{ accepts } Y_{1:N}\} $ and we have:
\begin{align}
    &\Pr(K_{2,A} = K_{2,B}|Y_{K_A}=y_1, K_{2,A}=1, {Y}_{K_{1,A},2:N}=y_{2:N},  X=x, Z=z)\\ &= 1 - \Pr( K_{2,B} \neq 1|Y_{K_{1,A},1:N}=y_{1:N}, K_{2,A}=1,Y_{K_A}=y_1,  X=x, Z=z) \notag \\
    &= 1 - \Pr( \min_{j\neq 1} \frac{S_j  }{\frac{\Tilde{P}_{Y|Z}(y_j|z)}{\hat{Z}(\Tilde{P}_{Y|Z=z}, y_{1:N})}} \leq \frac{S_1}{\frac{\Tilde{P}_{Y|Z}(y_1|z)}{\hat{Z}(\Tilde{P}_{Y|Z=z}, y_{1:N})}}|Y_{1:N}=y_{1:N}, A \text{ selects 1st index}, \notag \\ &\hspace{180pt}A \text{ accepts } Y_{1:N},Y_{K_A}=y_1,  X=x, Z=z) \\&= 1 - \Pr( \min_{j\neq 1} \frac{S_j  }{\frac{\Tilde{P}_{Y|Z}(y_j|z)}{\hat{Z}(\Tilde{P}_{Y|Z=z}, y_{1:N})}} \leq \frac{S_1}{\frac{\Tilde{P}_{Y|Z}(y_1|z)}{\hat{Z}(\Tilde{P}_{Y|Z=z}, y_{1:N})}}|Y_{1:N}=y_{1:N}, A \text{ selects 1st index}, \notag \\ &\hspace{180pt}A \text{ accepts } Y_{1:N},Y_{K_A}=y_1,  X=x) \label{independent_to_explain2}\\
    &= 1 - \Pr\left( \min_{j\neq 1} \frac{S_j  }{\frac{\Tilde{P}_{Y|Z}(y_j|z)}{\hat{Z}(\Tilde{P}_{Y|Z=z}, y_{1:N})}} \leq \frac{S_1}{\frac{\Tilde{P}_{Y|Z}(y_1|z)}{\hat{Z}(\Tilde{P}_{Y|Z=z}, y_{1:N})}}|Y_{1:N}=y_{1:N}, A \text{ selects 1st index},  X=x\right) \label{independent_to_explain3} \\
    &\geq \left({1 + \frac{P_{Y|X}(y_1|x)}{\Tilde{P}_{Y|Z}(y_1|z)}\frac{\hat{Z}(\Tilde{P}_{Y|Z=z}, y_{1:N})}{\hat{Z}(P_{Y|X=x}, y_{1:N})}}\right) ,
\end{align}
where (\ref{independent_to_explain2}) is due to Markov property $Z-(X,Y)-W$,i.e. $Z$ has no effects on the statistics of the exponential random variables. Line (\ref{independent_to_explain3}) is due to  the fact that conditioning on $A$ selected the 1st index, whether $A$ selects $Y_{1:N}$ or not depends only on $U$. The final inequality is due to conditional matching lemma from \cite{phan2024importance}.

Recall that $\omega = \max_{y}\frac{P_{Y|X(y|x)}}{Q_Y(y)}$, we have:
\begin{align}
     &P({Y}_{K_{1,A}, 2:N}=y_{2:N}| Y_{K_A}=y_1, K_{2,A} = 1, X=x, Z=z) 
     \\&=P({Y}_{K_{1,A}, 2:N}=y_{2:N}| Y_{K_A}=y_1, K_{2,A} = 1, X=x) \\&= \frac{\Bar{P}_{Y,{K_{2,A}|X}}(y_{1:N}, 1|x)}{P_{Y|X}(y_1|x)N^{-1}}\\
    &= \frac{N Q_Y(y_{2:N})}{\Delta_{P_{Y|X=x}} (\hat{Z}(P_{Y|X=x}, y_{2:N}) + \omega)}
\end{align}
where $\Bar{P}_{Y,{K_{2,A}}|X}(y_{1:N}, 1|x)$ is the ERS target distribution (\ref{ERS_target_distribution}) where we use $P_{Y|X}(.|x)$ as the target distribution and $\Delta_{P_{Y|X=x}} < 1$ is the normalized constant. We now shorthand $P^A_Y \triangleq P_{Y|X=x}$ , $P^B_Y \triangleq \Tilde{P}(Y|Z=z)$ and $\Delta_{P^A_Y} \triangleq \Delta_{P_{Y|X=x}}$, and combining the two expressions, we have:
\begin{align}
    &\Pr(Y_{A} = Y_{B}|Y_{A} = y_1, X= x, Z=z)\\ &\geq \mathbb{E}_{Y_{2:N} \sim Q_Y}\left[\frac{N}{(\hat{Z}(P^A_Y, y_{2:N}) + \omega) \Delta_{P^A_Y} \left( 1 + \frac{P^A_Y(y_1)}{P^B_Y(y1)}\frac{\hat{Z}(P^B_Y, y_{1:N})}{\hat{Z}(P^A_Y, y_{1:N})}\right)}\right] \\
    &\geq \mathbb{E}_{Y_{2:N} \sim Q_Y}\left[\frac{N}{(\hat{Z}(P^A_Y, y_{2:N}) + \omega)  \left( 1 + \frac{P^A_Y(y_1)}{P^B_Y(y1)}\frac{\hat{Z}(P^B_Y, y_{1:N})}{\hat{Z}(P^A_Y, y_{1:N})}\right)}\right] \quad \text{(Since } \Delta_{P^A_Y} \leq 1)\\
    &\geq \left(\mathbb{E}_{Y_{2:N} \sim Q_Y}\left[\frac{(\hat{Z}(P^A_Y, y_{2:N}) + \omega)}{N} \left( 1 + \frac{P^A_Y(y_1)}{P^B_Y(y1)}\frac{\hat{Z}(P^B_Y, y_{1:N})}{\hat{Z}(P^A_Y, y_{1:N})}\right)\right]\right)^{-1} (\text{ By Jensen's Inequality} ) \label{left_off}
\end{align}

Since:
\begin{align}
    \mathbb{E}_{Y_{2:N} \sim Q_Y}\left[\frac{\hat{Z}(P^A_Y, y_{2:N}) + \omega}{N}\right] &\leq \frac{N-1}{N} + \frac{\omega}{N}  \label{left_off1}
\end{align}
and:
\begin{align}
     &\mathbb{E}_{Y_{2:N} \sim Q_Y}\left[\left(\frac{\hat{Z}(P^A_Y, y_{2:N}) + \omega}{N}\right)\frac{\hat{Z}(P^B_Y, y_{1:N})}{\hat{Z}(P^A_Y, y_{1:N})}\right]\\
     &= \mathbb{E}_{Y_{2:N} \sim Q_Y}\left[\frac{\hat{Z}(P^A_Y, y_{2:N})}{N}\frac{\hat{Z}(P^B_Y, y_{1:N})}{\hat{Z}(P^A_Y, y_{1:N})} + \frac{\omega}{N} \frac{\hat{Z}(P^B_Y, y_{1:N})}{\hat{Z}(P^A_Y, y_{1:N})}\right] \\
     &\leq \frac{N-1}{N} + \frac{P^B_Y(y_1)/Q_Y(y_1)}{N} + \frac{\omega}{N}\mathbb{E}_{Y_{2:N} \sim Q_Y}\left[\frac{\hat{Z}(P^B_Y, y_{1:N})}{\hat{Z}(P^A_Y, y_{1:N})}\right]  \label{left_off2}
\end{align}

where we have:
\begin{align}
    \mathbb{E}_{Y_{2:N} \sim Q_Y}\left[\frac{\hat{Z}(P^B_Y, y_{1:N})}{\hat{Z}(P^A_Y, y_{1:N})}\right] &= \mathbb{E}_{Y_{2:N} \sim Q_Y}\left[\frac{P^B_Y(y_1)/Q_Y(y_1)}{\hat{Z}(P^A_Y, y_{1:N})} + \frac{\hat{Z}(P^B_Y, y_{2:N})}{\hat{Z}(P^A_Y, y_{1:N})}\right] \\
    &\leq \mathbb{E}_{Y_{2:N} \sim Q_Y}\left[\frac{P^B_Y(y_1)/Q_Y(y_1)}{P^A_Y(y_1)/Q_Y(y_1)}\right] + \mathbb{E}_{Y_{2:N} \sim Q_Y}\left[\frac{\hat{Z}(P^B_Y, y_{2:N})}{\hat{Z}(P^A_Y, y_{2:N})}\right]  \\
    &\leq \frac{P^B_Y(y_1)}{P^A_Y(y_1)} + \mathbbm{I}_N(\omega,2)d_2(Q_Y||P^A_Y)  \label{left_off3}
\end{align}

Then, combining (\ref{left_off3}) into (\ref{left_off2}), then combine with (\ref{left_off1}) into the term (\ref{left_off}), we have:
\begin{align}
    &\Pr(Y_{A} = Y_{B}|Y_{A} = y_1, X= x, Z=z)\\
    &\geq \left(1 {+} \frac{\omega}{N} {+} \frac{P^A_Y(y_1)}{P^B_Y(y_1)} \left(\frac{N{-}1}{N} {+} \frac{P^B_Y(y_1)/Q_Y(y_1)}{N} {+} \frac{\omega}{N} \left(\frac{P^B_Y(y_1)}{P^A_Y(y_1)} {+} \mathbbm{I}_N(\omega,2)d_2(Q_Y||P^A_Y)\right)\right)\right)^{-1}\\
    &= \left(1 {+} \frac{\omega}{N} {+} \frac{P^A_Y(y_1)}{P^B_Y(y_1)} \left(\frac{N{-}1}{N} {+} \frac{P^B_Y(y_1)/Q_Y(y_1)}{N} {+} \frac{\omega}{N} \left(\frac{P^B_Y(y_1)}{P^A_Y(y_1)} {+} \mathbbm{I}_N(\omega,2)d_2(Q_Y||P^A_Y)\right)\right)\right)^{-1} \\
    &\geq \left(1 + \frac{3\omega}{N} + \frac{P^A_Y(y_1)}{P^B_Y(y_1)} \left(1 +  \frac{\omega}{N} \mathbbm{I}_N(\omega,2)d_2(Q_Y||P^A_Y)\right)\right)^{-1} \\
    &= \left(1 + \mu'_1(N) + \frac{P^A_Y(y_1)}{P^B_Y(y_1)} \left(1 +  \mu'_2(N)\right)\right)^{-1},
\end{align}
where we repeatly use the fact that $P^A_Y(y)/Q_Y(y) \leq \omega$. This completes the proof.

\newpage
\section{Proof of Proposition \ref{prop_application}}\label{proof_prop_application}

\begin{algorithm}[H]
\centering
\caption{Wyner-Ziv Distributed Compression Protocol}
\label{comm_protocol}

\begin{minipage}{\linewidth}
\textbf{Encoder:} Receives $X = x$ and $W$, performs:
\vspace{-5pt}
\begin{flalign*}
&\text{1. Select } K_A = \mathrm{ERS}(W; P_{Y'|X=x}, Q_{Y'}); \quad 
\text{2. Sends } (K_{1,A}, V_{K_A}) \text{ to the decoder.} &&
\end{flalign*}

\vspace*{-5pt}
\textbf{Decoder:} Receives $Z = (V_{K_A}, K_{1,A}, X')$ and $W$, performs:
\vspace{-5pt}
\begin{flalign*}
&\text{1. Keep batch } K_{1,A}; \quad 
\text{2. Remove all $j$ where } V_{K_{1,A},j} \neq V_{K_A}; \quad 
\text{3. Select } K_B \text{ with } P_{Y'|X'=x'}. &&
\end{flalign*}
\end{minipage}
\end{algorithm}

\textbf{Main Proof. } We remind the protocol in Algorithm \ref{comm_protocol}. The encoder and decoder's target distribution for this case are:
\begin{align}
    P^A_Y(y,v) = P_{Y|X}(y|x) P_V(v) \quad  P^B_Y(y,v) = P_{Y|X'}(y|x)\mathbbm{I}_V(v)
\end{align}

For a sufficient large batch size $N$ and apply Proposition \ref{prop:match_ers_comm}, we have:
\begin{align}
    &\Pr(Y'_{K_A} \neq Y'_{K_B}| (Y'_{K_A},V_{K_A}) = (y',v),  X=x, Z = (x', v)) \\ 
    &=\Pr((Y'_{K_A}, V_{K_A}) \neq (Y'_{K_B}, V_{K_B})| (Y'_{K_A},V_{K_A}) = (y',v),  X=x, Z = (x', v))\\
    &\leq 1 - \left({1 + \epsilon + \frac{P_{Y'|X}(y'|x)P_V(v)}{P_{Y'|X'}(y'|x')\mathbbm{I}_v(v)}} \left(1 + \epsilon\right)\right)^{-1} \\
    &\leq 1 - \left({1 + \epsilon + \mathcal{V}^{-1} (1 + \epsilon)\frac{P_{Y'|X}(y'|x)}{P_{Y'|X'}(y'|x')}}\right)^{-1} \\
    &= 1 - \left({1 + \epsilon + \mathcal{V}^{-1} (1 + \epsilon)\frac{P_{Y'|X}(y'|x)}{P_Y'(y')} \frac{P_Y'(y')}{P_{Y'|X'}(y'|x')}}\right)^{-1} \\
    &= 1 - \left({1 + \epsilon + \mathcal{V}^{-1} (1 + \epsilon)2^{i_{Y';X}(y';x) - i_{Y';X'}(y';x')}}\right)^{-1}
\end{align}
Finally, taking the expectation of both sides yields the final result.

\textbf{Coding Cost. }In terms of the bound on $r$, recall the following bound on batch acceptance probability:
\begin{align}
    \Delta &= 
    \mathbb{E}_{Y_{1:N} \sim P_Y(.)}\left[\frac{N}{\bar{Z}(1, Y'_{1:N})}\right] \geq \frac{N}{\mathbb{E}_{Y'_{1:N} \sim P_Y(.)}[\bar{Z}(1)]}= \frac{N}{N-1+\omega}
\end{align}
Here for $N=\omega$, we have $\Delta > \frac{1}{2}$ and thus the chunk size $L=\lfloor\Delta^{-1}\rfloor$ in the ERS coding scheme is $1$ and thus  do not need to send $\hat{K}_1$. Using the fact that $\mathbb{E}[\log L] \leq 1$, we have $r\leq H[L] + 1=4$bits by entropy coding with Zipf distribution \cite{li2018strong}.

\textbf{Compressing Multiple Samples.} When compressing $n$ samples jointly, let the rate per sample (without the overhead for batch communication) be $r'$ where $\log(V)=nr'$ consider the following approximation: $$\sum^n_{i=1} i(y'_i;x_i)-i(y'_i;x_i') \approx nI(X;Y'|X'),$$

Then we have:
\begin{align}
    2^{\sum^n_{i=1} [i(y'_i;x_i)-i(y'_i;x_i')] - \log(V)} &\approx 2^{nI(X;Y'|X') - \log(V)}\\
    &= 2^{n(I(X;Y'|X') - r')},
\end{align}
and thus, if $r' > I(X;Y'|X')$, by increasing $n$ we reduces the mismatching probability while maintaining the compression rate per sample. We visualize this in the experimental results with $N=2^{19}$ in Figure \ref{fig:matching_prob}. 

\begin{figure}
    \centering
    \includegraphics[width=0.5\linewidth]{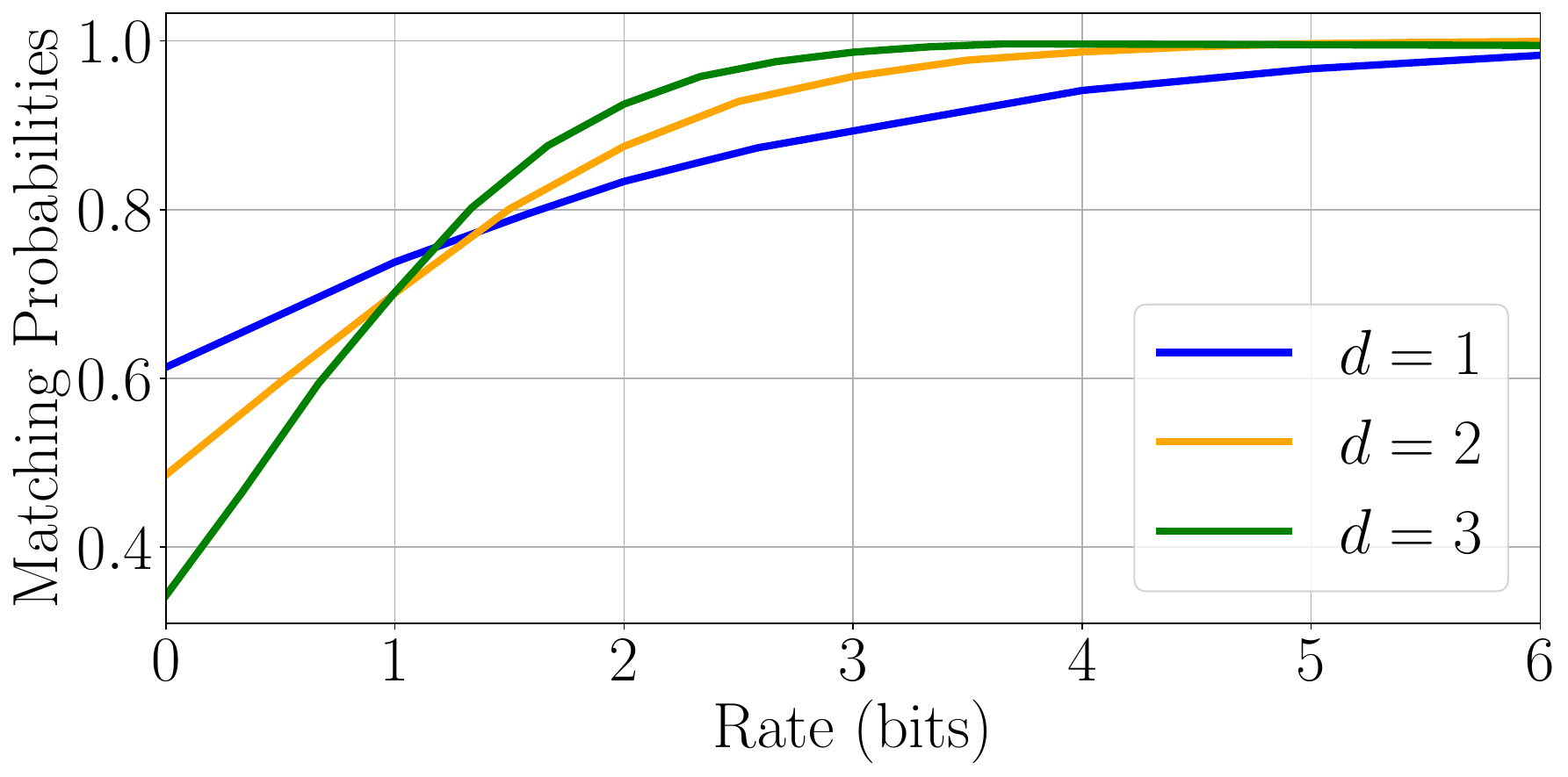}
    \caption{Matching Probabilities with $N=2^{19}$ and jointly compressing $1,2,3$ i.i.d. samples respectively. Target distortion $\sigma^2_{Y'|X} = 0.008$ for every samples.}
    \label{fig:matching_prob}
\end{figure}

\newpage

\section{Feedback Scheme} \label{feedback}

In distributed compression, decoding errors can lead to significant average reconstruction distortion. To address this, feedback communication from the decoder can be employed to correct errors and enhance rate-distortion performance, as proposed in \cite{phan2024importance}. The feedback mechanism is identical for both ERS and IML, except that ERS additionally transmits the batch index to the decoder. 

We recall that $K_{1,A}$ and $K_{2,A}$ denote the batch index and local index, respectively, of samples selected by party $A$ through the ERS sample selection. On the other hand, party $B$ uses Gumbel-Max selection process to output its selected local index $K_{2,B}$ within the $K_{1,A}$ batch, then the ERS process can be described as follows:

\begin{enumerate}
    \item \emph{Index Selection.} After transmitting the batch index $K_{1,A}$, the encoder sends the $\log_2(\mathcal{V})$ least significant bits (LSB) of the selected index $K_{2,A}$ to the decoder.
    \item \emph{Decoding and Feedback.} The decoder outputs $K_{2,B}$ and sends the $\log_2(N/\mathcal{V})$ most significant bits (MSB) of $K_B$ to the encoder.
    \item \emph{Re-transmission.} Based on the received MSB feedback, if the index is correct, the encoder responds with an acknowledgment bit, say $1$. Otherwise, it sends $0$ along with the MSB of its selection to the decoder.
\end{enumerate}

We note that, in this context, using LSB instead of random bits in step 1 does not yield a noticeable difference in performance. For the rate-distortion analysis, the rate is computed based on the total length of messages transmitted during index selection and re-transmission, including any acknowledgment messages. However, the rate of the feedback message is excluded from this calculation, which can be justified in scenarios with asymmetric communication costs in the forward and reverse directions, such as in wireless channels.

\section{Neural Contrastive Estimator}\label{appen_nce}
In our ERS scheme, the selection rule requires estimating the following ratio at the decoder side:
\begin{align}
    \Tilde{K}_B =  \operatorname*{argmin}_{1\leq k \leq N} \frac{S_{ik}}{\frac{P_{Y|X'}(y|x')\mathbbm{I}_V(v)}{Q_Y(Y_{ik})\mathrm{V}^{-1}} }, \text{ where }  i =K_{1,A},
\end{align}
where the normalization term can be ignored as it is the same for every sample in the batch $K_{1,A}$. Our goal is to learn the ratio ${P_{Y|X'}(Y_{ik}|x')/Q_Y(Y_{ik})}$ from data. In particular, we can access the data samples from the joint distribution $P_{X,Y,X'}$. 

To this end, we construct a binary neural classifier $h'(y,x') = \frac{1}{1 + \exp [-h(y,x')]}$ which classifies if the input $(y,x')$ is distributed according to the marginal distribution $Q_Y(.)\times P_{X'}(.)$  (positive samples) or the joint $P_{Y,X'}$ (negative samples). Once converged, we can use the logits value $h(y,x')$ to compute the log of the ratio of interest \cite{hermans2020likelihood}. In particular:
\begin{align}
    h(y,x') \approx -\log \frac{P_{Y|X'}(Y_{ik}|x')}{Q_Y(y)}
\end{align}
This allows us to estimate the ratio without needing to obtain the exact ratio's value. Finally, to generate the positive samples, we simply generate $Y\sim Q_Y(.)$ and get a random $X'$ from the training data. For negative samples, we generate the data according to the Markov sequence $X'-X-Y$.  The ratio between the two labels should be the same.

\section{Distributed Compression with MNIST}\label{mnist_appendix}

\subsection{Training Details}
\textbf{$\beta$-VAE Architecture.} We adopt a setup similar to \cite{phan2024importance}. Our neural encoder-decoder model comprises an encoder network $y = \mathrm{enc}(x)$, a projection network $\mathrm{proj}(x')$, and a decoder network $\hat{x} = \mathrm{dec}(y, \mathrm{proj}(x'))$, as detailed in Table~\ref{mnist compressor}. The encoder network converts an image into two vectors of size $3$ (total $6$D output), with the first vector representing the output mean $\mu(x)$ and the second representing the output variance $\sigma^2(x)$. Here, we define $p_{Y|X}(.|x) = \mathcal{N}(\mu(x), \sigma^2(x))$ and use the prior distribution $p_Y(.)=\mathcal{N}(0,1)$.  At the decoder side (party $B$), the projection network first maps the side information image $X'$ to a $128$-dimensional vector, which is then combined with a $3$-dimensional vector from the encoder. This concatenated vector is input to the decoder network, producing a reconstructed output of size $28\times28$.

\begin{table}[h]
    \vspace{-10pt}
    \centering
    \caption{Architecture of the encoder, projection network, and decoder for distributed MNIST image compression. Convolutional and transposed convolutional layers are denoted as ``conv'' and ``upconv,'' respectively, with specifications for the number of filters, kernel size, stride, and padding. For ``upconv,'' an additional output padding parameter is included.}
    \vspace{3pt}
    \begin{tabular}[t]{ |c| }
    \multicolumn{1}{c}{\textbf{(a)}Encoder} \\
	\hline 
        Input $28\times28\times1$  \\ 
        \hline 
        conv (128:3:1:1), ReLU \\
        \hline 
        conv (128:3:2:1), ReLU \\
        \hline 
        conv (128:3:2:1), ReLU \\
        \hline 
        Flatten \\
        \hline 
        Linear (6272, 512), ReLU\\
        \hline 
        Linear (512, 6)\\
    \hline
    \end{tabular}
    \quad
    \begin{tabular}[t]{ |c| }
    \multicolumn{1}{c}{\textbf{(b)}Projection Network} \\
	\hline 
        Input $14\times14\times1$  \\ 
        \hline 
        conv (32:3:1:1), ReLU \\
        \hline 
        conv (64:3:2:1), ReLU \\
        \hline 
        conv (128:3:2:1), ReLU \\
        \hline 
        Flatten \\
        \hline 
        Linear (2048, 512), ReLU\\
        \hline 
        Linear (512, 128)\\
    \hline
    \end{tabular}
    \quad
    \begin{tabular}[t]{ |c| }
    \multicolumn{1}{c}{\textbf{(c)}Decoder Network} \\
	\hline 
        Input-($3 {+} 128$)  \\ 
        \hline 
        Linear-($132, 512$), ReLU  \\ 
        \hline 
        upconv (64:3:2:1:1), ReLU \\
        \hline 
        upconv (32:3:2:1:1), ReLU \\
        \hline
        upconv (1:3:1:1), Tanh \\
    \hline
    \end{tabular}
    
    \label{mnist compressor}
\end{table}

\textbf{Loss Function} We train our $\beta$-VAE network by optimizing the following rate-distortion loss function:
\begin{equation}
    \mathcal{L} = \beta (X - \hat{X})^2  +  E_X[D_{\text{KL}}(p_{Y|X}(.|v)|| p_Y(.))]
\end{equation}
where we vary $\beta$ for different rate-distortion tradeoff.Each model is trained for $30$ epochs on an NVIDIA RTX A4500, requiring approximately $30$ minutes per model. We use random horizontal flipping and random rotation within the range $\pm 15^\mathrm{o}$. We use the following values of $\beta\in \{0.225, 0.28, 0.31, 0.4\}$ that corresponds to the target distortions $\{6.6, 6.3, 6.1,5.8\}\times 10^{-2}$ in Figure \ref{fig:MNIST_exp}.

\textbf{Neural Contrastive Estimator Network.} The neural estimator network comprises two subnetworks. The first subnetwork projects the side information into a $128$-dimensional embedding. The second subnetwork combines this $128$D embedding with a $4$D embedding, derived from either $p_{Y|X}$ or the prior $p_Y$, and outputs the probability that $X',Y$ originate from the joint or marginal distributions.  The model is trained for $100$ epochs. 

\begin{table}[h]
    \vspace{-10pt}
    \centering
    \caption{Neural Estimator Networks for Distributed Image Compression.} 
    \begin{tabular}[t]{ |c| }
    \multicolumn{1}{c}{\textbf{(a)}Projection Network} \\
	\hline 
        Input $14\times14\times1$  \\ 
        \hline 
        conv (32:3:1:1), ReLU \\
        \hline 
        conv (64:3:2:1), ReLU \\
        \hline 
        conv (128:3:2:1), ReLU \\
        \hline 
        Flatten \\
        \hline 
        Linear (2048, 512), ReLU\\
        \hline 
        Linear (512, 128)\\
    \hline
    \end{tabular}
    \begin{tabular}[t]{ |c| }
    \multicolumn{1}{c}{\textbf{(b)} Combine and Classify } \\
	\hline 
        Input $128 + 3$  \\ 
        \hline 
        Linear (132, 128), l-ReLU \\
        \hline 
        Linear (128,128), l-ReLU \\
        \hline 
        Linear (128,128), l-ReLU \\
        \hline 
        Linear (128, 1)\\
    \hline
    \end{tabular}
    \label{mnist nce}
\end{table}

\subsection{Decoder Estimation with Neural Contrastive Estimator: Gaussian Assumption for PML}

In our experiment, the decoder uses a NCE to directly estimate the likelihood ratio without assuming a specific form---such as Gaussian---for the posterior. Consequently, computing the local constant needed for PML becomes intractable. If we instead assume a Gaussian form, PML becomes feasible; however, our experimental results show that this assumption introduces a mismatch that worsens the rate--distortion tradeoff. 

Table~\ref{tab:nce_vs_gaussian} reports the distortion results across different rates. Here, \emph{Embedding MSE} refers to the error measured with respect to the encoder's neural network output, while \emph{Pixel MSE} captures the distortion at the image level. Lower values indicate better performance.

\begin{table}[t]
\centering

\begin{tabular}{c|c|c|c}
\hline
\textbf{Rate (bits/image)} & \textbf{Model} & \textbf{Embedding MSE} & \textbf{Pixel MSE} \\
\hline
8.75  & Gaussian Regressor & 0.7300 & 0.0696 \\
      & ERS (NCE)          & \textbf{0.6024} & \textbf{0.0683} \\
\hline
9.60  & Gaussian Regressor & 0.5260 & 0.0660 \\
      & ERS (NCE)          & \textbf{0.4807} & \textbf{0.0647} \\
\hline
10.10 & Gaussian Regressor & 0.4310 & 0.0638 \\
      & ERS (NCE)          & \textbf{0.3616} & \textbf{0.0623} \\
\hline
10.60 & Gaussian Regressor & 0.3600 & 0.0626 \\
      & ERS (NCE)          & \textbf{0.2930} & \textbf{0.0606} \\
\hline
\end{tabular} \vspace{5pt}
\caption{Comparison of Gaussian regressor vs.\ ERS (with NCE) under different rates. Distortion is reported as MSE (lower is better).}\label{tab:nce_vs_gaussian}
\end{table}

\section{Wyner-Ziv Gaussian Experiment}\label{Gauss_mnist}
In Figure \ref{fig:Gauss_exp} (left), the batch size of ERS are $N\in \{2^{19}, 2^{20}\}$ respectively for the average number of proposals $N^* \in \{1.1, 1.6\}\times 10^6$. For Figure \ref{fig:Gauss_exp} (right), details for ERS are shown in Table \ref{tab_gaussianl}.

\begin{table}[]
    \centering
    \begin{tabular}{|c|c|c|c|}
      \hline
      $\log\mathcal{V}$& $N$& $N^* $ & Target $\mathrm{dB}$ \\
      \hline
      $9.6$& $0.6\mathrm{e}6$& $1.0\mathrm{e}6$ &  $-21.5\mathrm{dB}$\\
      $10.6$& $0.7\mathrm{e}6$& $1.1\mathrm{e}6$ &  $-22\mathrm{dB}$\\
      $11.6$& $0.8\mathrm{e}6$& $1.5\mathrm{e}6$ &  $-22.5\mathrm{dB}$\\
      $12.6$& $1.04$& $1.6\mathrm{e}6$ & $-23 \mathrm{dB}$\\
      \hline
\end{tabular}
\vspace{10pt}
    \caption{Details for ERS Gaussian Experiment in Figure \ref{fig:Gauss_exp} (right)}
    \label{tab_gaussianl}
\end{table}

\section{Additional Experiment on CIFAR-10 Dataset}

We conduct experiments on the CIFAR-10 dataset and compare our method with implicit neural representations~\cite{he2023recombiner}, the quantization approach~\cite{balle2018variational}, and the IML~\cite{phan2024importance}. We use Mean Squared Error (MSE) as the distortion metric across all schemes, where lower values indicate better performance. Our approach consistently achieves lower distortion by leveraging side information within the encoding scheme.

\begin{table}[h!]
\centering
\caption{Comparison of distortion (MSE) on CIFAR-10 at different rates (bits/image). Lower is better.}
\begin{tabular}{|c|c|c|c|c|}
\hline
{Rate (bits/image)} & ~\citet{balle2018variational}& {RECOMBINER ~\citep{he2023recombiner}} & {IML \citep{phan2024importance}} & {ERS Ours} \\
\hline
$\sim 9$  & 0.0972 & 0.0968 & 0.0711 & \textbf{0.0703} \\
$\sim 10$ & 0.0915 & 0.0912 & 0.0668 & \textbf{0.0659} \\
$\sim 11$ & 0.0802 & 0.0810 & 0.0621 & \textbf{0.0606} \\
\hline
\end{tabular}
\end{table}



\newpage
\end{document}